\documentclass[10pt]{amsart}
\usepackage{amssymb,latexsym}
\theoremstyle{plain}
\numberwithin{equation}{section}
\newtheorem{theorem}{Theorem}
\newtheorem{lemma}{Lemma}
\newtheorem{proposition}{Proposition}

\newtheorem{remark}{Remark}
\newtheorem{definition}{Definition}
\setlength{\textwidth}{6.7in}
\setlength{\oddsidemargin}{0.1in}
\setlength{\evensidemargin}{-0.1in}
\begin{document}

\title[ Quantum Feedback  Control]
{\textbf {Application of Quantum Stochastic Calculus to Feedback   Control}}

\author{Andreas Boukas}
\address{Department of Mathematics and Natural Sciences, American College\\
 Aghia Paraskevi, Athens 15342, Greece}
\email{andreasboukas@acgmail.gr}

\date{}

\begin{abstract} The basic  aspects of the Hudson-Parthasarathy quantum stochastic calculus and of the Accardi-Fagnola-Quaegebeur representation free stochastic calculus are presented. The basic features of the stochastic calculus for the square of white noise recently developed by Accardi-Boukas are  described. The  feedback control problem for stochastic processes driven by quantum noise is  solved
\end{abstract}

\maketitle

gjpam.tex

\section{Introduction} 

 The time-evolution of a dynamical system is described by some kind of differential equation. To be useful, for example for engineering purposes, the solution of such a differential equation must be under the control of the engineer. Typically, the size of the solution of a deterministic, classical stochastic, or quantum stochastic differential equation, is controlled by choosing an input so as to minimize an appropriate performance (or cost)  functional. In classical engineering, control theory and in particular its quadratic aspect has found many applications, for example in space navigation and flight technology. The problem of controlling quantum stochastic evolutions arises naturally in several different fields such as quantum chemistry, quantum information theory and  quantum engineering. 

 The case of dynamical systems described by linear differential equations, with performance controlled by minimizing a quadratic  functional,  is of particular importance  since in this case the control process can be explicitly calculated and is given in terms of a feedback law.  Feedback controls are important because, by taking into account the state of the system at each moment, they allow for updated and stabilizing control action. 

In these notes we review some basic aspects of the classical theory of quadratic control, we briefly describe quantum stochastic calculus and we present our results in the quantum case, obtained over the past few years.  In particular we apply the recently discovered closed form of the unitarity conditions for stochastic evolutions driven by the square of white noise (see \cite{ 2}) to describe the solution of the quadratic cost control problem in that case.

In the classical case, the feedback law that describes the optimal control is defined in terms of the solution of a scalar-, matrix-, or operator-valued Riccati differential equation, which in the case of time-independent coefficients reduces to an algebraic Riccati equation.  In the quantum case the equation is replaced by a quantum stochastic Riccati equation which  in the case of time-independent coefficients reduces to an operator algebraic Riccati equation as in \cite{9}.  The quadratic form of the control criterion allows the quantum control problem to be solved in analogy with the classical stochastic control problem with the use of quantum stochastic calculus. Under general conditions on the system Hamiltonian part of the stochastic evolution and on the system observable to be controlled, the operator algebraic Riccati equation admits solutions with the required properties which can also be explicitly described. 

Related to the control problem of quantum stochastic evolutions, in the quantum case,  is the problem of optimal control of the solution of a quantum Langevin equation with constant coefficients which naturally arises in several different fields such as quantum chemistry, quantum information, quantum engineering e.t.c.  The mathematical formulation of this problem was recently considered in \cite{3} for quantum systems affected by first order and square of white noise. It was preceded by several studies on the quadratic control of the solution of a quantum evolution  driven by first order white noise (see references in \cite{3}) and the dual Kalman--Bucy filtering problem (see \cite{1}). The statement of the problem is the following: one starts from the Langevin equation for a system observable $X$ and then looks for the coefficients of the martingale terms of this equation, which minimize a given quadratic cost functional. Once this problem is solved, using the stochastic limit technique of \cite{6}, one then looks for a bona fide Hamiltonian interaction which, in the stochastic limit, gives rise to the optimal Langevin equation. Thus, combining techniques of quantum stochastic control with the stochastic limit of quantum theory, one can find a real physical interaction to concretely realize the optimal quantum evolution. 

These lecture notes are organized as follows: In section 2 we discuss the basic concepts and results of classical control theory, both deterministic and stochastic, in particular those related to quadratic performance criteria and linear systems, as in \cite{10} and \cite{11}. In section 3 we describe quantum stochastic calculus, associated with the first order Hida white noise functionals as in \cite{7} and \cite{13}. In section 4 we describe square of white noise quantum stochastic calculus. In section 5 we describe the representation free quantum stochastic calculus of Accardi-Fagnola-Quaegebeur  of \cite{5}. In section 6 we discuss the emergence of quantum stochastic differential equations from white noise equations through the stochastic limit of \cite{6}. In section 7 we describe the solution of the quantum quadratic control problem in the framework of sections 3, 4, and 5. Finally in section 8 we study quantum stochastic Riccati equations within the framework of the  representation free quantum stochastic calculus of section 5,

\section{Classical Linear Control} 

\subsection{Deterministic Control}

In the classical deterministic case we consider a system whose evolution over a finite time interval is modelled by the solution $x=\{x_t:\,t\in[0,T]\}\in C([0,T],\mathbb{R}^n)\}$ of an ordinary differential equation of the form 
 
\begin{eqnarray}
&dx_t=(A\,x_t+u_t)\,dt&\\
&x_0=x,\,\,t\in[0,T]&
\end{eqnarray}

where $A\in\mathcal{B}(\mathbb{R}^n)$, the space of bounded linear operators on $\mathbb{R}^n$, and  $u\in L_{\infty}([0,T],\mathbb{R}^n)$ or $L_2([0,T],\mathbb{R}^n)$. Although we consider here the finite-dimensional case, the concepts and the results can be extended from $\mathbb{R}^n$ to any Hilbert space $\mathcal{H}$.

We assume that we can interfere with the performance of the model by choosing the "control process" $u=\{u_t:\,t\in[0,T]\}$ so as to minimize a certain "performance (or cost) functional" $J(u)$. There is a wide variety of such functionals  designed for specific  models. However, the most computationally accessible one is the "quadratic" performance functional of the form  

\begin{eqnarray}
&J(u)=\int_0^T\,(<x_t,Q\,x_t>+<u_t,u_t>)\,dt+<x_T,\Pi\,x_T>&
\end{eqnarray}

where $<\cdot,\cdot>$ denotes the usual  inner product in the Euclidean space $\mathbb{R}^n$, $\Pi\in\mathcal{B}(\mathbb{R}^n)$,  $Q\in\mathcal{B}(\mathbb{R}^n)$, $\Pi\geq0$, $Q\geq0$. If the size of $x=\{x_t:\,t\in[0,T]\}$ is small, the performance functional (2.3) can serve as an approximation to many other functionals which are more adapted to the specific problem considered but also more computationaly complex. 
Before one looks for the optimal  control process, the system to be controlled must be "observable", "controllable", and "stabilizable". The definition of these concepts is as follows:

 \textbf{Observability}: Since the state $x_t$ of the system may only be accessible through an observation process $y_t=P\,x_t$, where $P\in\mathcal{B}(\mathbb{R}^n)$, we must be able to recreate $x_t$ (or equivalently $x_0$) from $y_t$.  If $P=1$ we speak of a "completely observable" system. Otherwise the system is only "partially observable".

\textbf{Controllability}: Given an initial state  $x_0$, we should be able to choose the control process $u=\{u_t:\,t\in[0,T]\}$ so that the system  will be steered in a finite time $t_1\in[0,T]$ to a desired state $x_1$. 

\textbf{Stabilizability}: In order to consider large terminal times $T$, we need the system to exhibit good long-run behavior i.e to eventually settle down to some steady-state behavior. From the mathematical point of view, this amounts to the asymptotic stability of the initial state of $x=\{x_t:\,t\in[0,T]\}$ or, equivalently, to the existence of a "feedback" control $u_t=K\,x_t$, where $K\in\mathcal{B}(\mathbb{R}^n)$, for which the system (2.1)-(2.2) is asymptotically stable. 

The performance functional (2.3) is particularly useful in the case when a system must operate at or near a particular state, chosen here to be the origin. We can think of the first term of (2.3) as a penalty for being too far away from the origin on $(0,T)$, the second as a penalty for using too much control and the third as a penalty for being too far away from the target at the final time $T$. The main result in the completely observable, classical  case is the following:

\begin{theorem} The performance functional (2.3) associated with the system (2.1)-(2.2) is minimized by the feedback control process

\begin{eqnarray}
u_t=-\Pi_t\,x_t
\end{eqnarray}

where $\{\Pi_t:t\in [0,T]\}$ is the solution of the Riccati differential equation

\begin{eqnarray}
&\frac{d}{dt}\Pi_t+A^*\,\Pi_t+\Pi_t\,A+Q-\Pi_t^2=0&\\
&\Pi_T=\Pi.&
\end{eqnarray}

\end{theorem}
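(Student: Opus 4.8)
The plan is to use the classical completion-of-squares (dynamic programming) argument. First I would posit that $\{\Pi_t\}$ is a symmetric, nonnegative-definite solution of the Riccati system (2.5)--(2.6), which exists on all of $[0,T]$ because the equation is a matrix Riccati ODE with the terminal condition $\Pi_T = \Pi \geq 0$ and the nonlinear term $-\Pi_t^2$ has the correct sign to prevent finite-time blow-up on $[0,T]$ (one can obtain an a priori bound by comparing with the linear system one gets after dropping $-\Pi_t^2$). I would then introduce the candidate value function $V(t,x) = \langle x, \Pi_t\, x\rangle$ and compute, along any trajectory $x_t$ of (2.1)--(2.2) driven by an arbitrary admissible control $u \in L_2([0,T],\mathbb{R}^n)$,
\begin{eqnarray}
\frac{d}{dt}\langle x_t, \Pi_t\, x_t\rangle
= \langle x_t, (\dot\Pi_t + A^*\Pi_t + \Pi_t A)\, x_t\rangle
+ 2\langle u_t, \Pi_t\, x_t\rangle .
\end{eqnarray}

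Next I would substitute the Riccati identity $\dot\Pi_t + A^*\Pi_t + \Pi_t A = \Pi_t^2 - Q$ into this expression and add and subtract the running cost. The key algebraic step is the completion of squares:
\begin{eqnarray}
\langle x_t, Q\, x_t\rangle + \langle u_t, u_t\rangle + \frac{d}{dt}\langle x_t, \Pi_t\, x_t\rangle
= \langle u_t + \Pi_t\, x_t,\; u_t + \Pi_t\, x_t\rangle \geq 0 ,
\end{eqnarray}
where all the cross terms $\langle x_t, \Pi_t^2 x_t\rangle$ and $2\langle u_t, \Pi_t x_t\rangle$ reorganize into the perfect square $|u_t + \Pi_t x_t|^2$, using symmetry of $\Pi_t$. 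Integrating this identity over $[0,T]$, the total-derivative term telescopes to $\langle x_T, \Pi\, x_T\rangle - \langle x, \Pi_0\, x\rangle$, so that
\begin{eqnarray}
J(u) = \langle x, \Pi_0\, x\rangle + \int_0^T \langle u_t + \Pi_t\, x_t,\; u_t + \Pi_t\, x_t\rangle\, dt .
\end{eqnarray}

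From this representation the conclusion is immediate: the integrand is nonnegative for every admissible $u$, so $J(u) \geq \langle x, \Pi_0\, x\rangle$, with equality precisely when $u_t + \Pi_t x_t = 0$ a.e., i.e.\ when $u_t = -\Pi_t x_t$, which is exactly the feedback law (2.4); one checks that this choice is admissible since the closed-loop equation $dx_t = (A - \Pi_t)x_t\, dt$ has a bounded solution on $[0,T]$, making $u \in L_\infty \subset L_2$. I expect the only real obstacle to be the existence (and global-on-$[0,T]$ regularity and symmetry) of the Riccati solution $\Pi_t$; the rest is the formal completion-of-squares computation. For the purposes of these notes I would either invoke this existence as a standard fact from \cite{10},\cite{11}, or sketch it via the substitution $\Pi_t = Y_t X_t^{-1}$ linearizing (2.5)--(2.6) into a linear Hamiltonian system for $(X_t, Y_t)$, whose solution is global and yields an invertible $X_t$ on $[0,T]$.
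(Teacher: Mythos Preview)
Your proposal is correct and is the standard completion-of-squares argument. Note, however, that the paper does not actually give a proof of Theorem~1: it is stated in Section~2 as classical background, with implicit reference to \cite{10}, \cite{11}. That said, your method is precisely the one the paper uses later for the quantum analogue (Theorem~6): there one sets $\theta_t=\langle\xi,U_t^*\Pi U_t\,\xi\rangle$, computes $d\theta_t$ via the quantum It\^{o} rule, substitutes the Riccati-type relations (7.10)--(7.12), integrates, and arrives at
\[
Q_{\xi,T}(u)=\int_0^T\|(u_t+\Pi\,U_t)\xi\|^2\,dt+\langle\xi,\Pi\,\xi\rangle,
\]
which is exactly your identity $J(u)=\langle x,\Pi_0 x\rangle+\int_0^T|u_t+\Pi_t x_t|^2\,dt$ in the deterministic setting. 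So your argument is fully aligned with the paper's approach to the analogous results it does prove.
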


 If we restrict to $u_t=-K\,x_t$, i.e to feedback controls with a time-independent coefficient,  then equations (2.5)-(2.6) are replaced by the "algebraic" Riccati equation

\begin{eqnarray}
A^*\,\Pi+\Pi\,A+Q-\Pi^2=0.
\end{eqnarray}

 \subsection{Stochastic Control}

In this case we consider systems whose time evolution is affected by noise. We assume that the noise can be accurately described by Brownian motion. Specifically, we consider systems whose time-evolution is described by the solution $x=\{x_t:\,t\in[0,T]\}$ of a stochastic differential equation of the form

\begin{eqnarray}
&dx_t=(A\,x_t+u_t)\,dt+C\,dB_t&\\
&x_0=x,\,\,t\in[0,T]&
\end{eqnarray}

where $A$ and $u$ are as in (2.1)-(2.2) with the added assumtion that $u$ is a stochastic process, $C\in\mathcal{B}(\mathbb{R}^n)$, and $B=\{B_t:t\geq0\}$ is a vector (in this case $n$-dimensional) Brownian motion. The performance functional (2.3) takes the form

\begin{eqnarray}
&J(u)=E(\int_0^T\,(<x_t,Q\,x_t>+<u_t,u_t>)\,dt+<x_T,\Pi\,x_T>)&
\end{eqnarray}

where $E$ denotes mathematical expectation. 

For completely observable systems, Theorem 1 remains true in the stochastic case. For partially observable systems, i.e when $x=\{x_t:\,t\in[0,T]\}$ is available only through an observation process $y=\{y_t:\,t\in[0,T]\}$ satisfying 

\begin{eqnarray}
&dy_t=H\,x_t\,dt+dW_t&
\end{eqnarray}

where $H\in\mathcal{B}(\mathbb{R}^n)$, and $W=\{W_t:t\in[0,T]\}$  is a vector (in this case $n$-dimensional) Brownian motion independent of $B=\{B_t:t\in[0,T]\}$, the main result  is as follows:

\begin{theorem} The performance functional (2.10) associated with the system (2.8)-(2.9) and (2.11) is minimized by the feedback control process

\begin{eqnarray}
u_t=-\Pi_t\,\hat{x_t}
\end{eqnarray}

where $\{\Pi_t:t\in [0,T]\}$ is the solution of the Riccati equation

\begin{eqnarray}
&\frac{d}{dt}\Pi_t+A^*\,\Pi_t+\Pi_t\,A+Q-\Pi_t^2=0&\\
&\Pi_T=\Pi&
\end{eqnarray}

and $\hat{x}=\{\hat{x_t}:t\in[0,T]\}$ is the minimum mean-square estimate of $x=\{x_t:\,t\in[0,T]\}$ given $\{y_s:s \leq t\}$, obtained through the Bucy-Kalman filter.
\end{theorem}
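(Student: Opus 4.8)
The plan is to establish the \emph{separation principle}: decouple the partially observed linear-quadratic problem into a filtering problem, solved by the Kalman--Bucy filter, and a completely observed linear-quadratic problem for the filter state, to which the stochastic version of Theorem 1 applies. First I would fix the observation filtration $\mathcal{Y}_t=\sigma\{y_s:s\le t\}$ and insist that admissible controls $u$ be $\mathcal{Y}$-adapted. For such controls the conditional mean $\hat{x}_t=E[x_t\mid\mathcal{Y}_t]$ satisfies the Kalman--Bucy equation
\[
d\hat{x}_t=(A\,\hat{x}_t+u_t)\,dt+P_tH^*\,d\nu_t,\qquad \hat{x}_0=E[x],
\]
where $d\nu_t=dy_t-H\hat{x}_t\,dt$ is the innovations process (an $n$-dimensional $\mathcal{Y}$-Brownian motion) and $P_t=E[(x_t-\hat{x}_t)(x_t-\hat{x}_t)^*]$ solves a deterministic matrix Riccati equation $\dot P_t=AP_t+P_tA^*+CC^*-P_tH^*HP_t$. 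The point to be checked carefully, using linearity of (2.8)--(2.9) and (2.11), is that the estimation error $\tilde{x}_t=x_t-\hat{x}_t$ obeys an SDE in which the control term cancels, so that the law of $\{\tilde{x}_t\}$, and in particular $P_t$, does not depend on the choice of $u$.

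Next I would split the cost. Writing $x_t=\hat{x}_t+\tilde{x}_t$, using $E[\tilde{x}_t\mid\mathcal{Y}_t]=0$ and $\mathcal{Y}_t$-measurability of $\hat{x}_t$ to kill the cross terms in expectation, one obtains
\[
J(u)=E\!\left[\int_0^T\big(\langle\hat{x}_t,Q\hat{x}_t\rangle+\langle u_t,u_t\rangle\big)\,dt+\langle\hat{x}_T,\Pi\hat{x}_T\rangle\right]+\int_0^T\mathrm{tr}(QP_t)\,dt+\mathrm{tr}(\Pi P_T).
\]
By the previous step the last two terms are independent of the control, so minimizing $J$ over $\mathcal{Y}$-adapted controls is equivalent to minimizing the leading expectation subject to the \emph{completely observed} controlled linear dynamics of $\hat{x}$ displayed above.

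Finally I would invoke the completely observed result. The filter dynamics have exactly the form (2.8)--(2.9) with $x$ replaced by $\hat{x}$ and the diffusion coefficient $C\,dB_t$ replaced by $P_tH^*\,d\nu_t$; since in the completely observed stochastic case Theorem 1 holds verbatim — the additive martingale noise contributes only a control-independent constant to the optimal cost and leaves the Riccati equation unchanged — the minimizer of the reduced problem is $u_t=-\Pi_t\hat{x}_t$ with $\{\Pi_t\}$ the solution of (2.14)--(2.15). A completion-of-squares argument along the trajectory of $\hat{x}_t$, applying It\^{o}'s formula to $t\mapsto\langle\hat{x}_t,\Pi_t\hat{x}_t\rangle$ and using the Riccati equation to annihilate the indefinite terms, makes this rigorous and identifies the optimal feedback, proving the theorem.

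I expect the main obstacle to be the first step: proving rigorously that $\hat{x}_t$ is a sufficient statistic, that the innovations $\nu$ is a Brownian motion with respect to $\mathcal{Y}$, and that the filter has the stated form \emph{in the presence of a feedback control} — this is exactly where the control-independence of the error statistics, hence the validity of the separation and of the cost splitting, is secured. The subsequent LQ step is then essentially a repetition of the argument behind Theorem 1.
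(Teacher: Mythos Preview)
The paper does not actually prove this theorem: Theorem 2 is stated in Section 2.2 as a classical result (with implicit reference to \cite{10}, \cite{11}) and no proof is given. There is therefore nothing in the paper to compare your argument against.

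That said, your proposal is the standard and correct route. The separation principle --- Kalman--Bucy filtering to produce $\hat{x}$, control-independence of the error covariance $P_t$ via cancellation of $u$ in the $\tilde{x}$-dynamics, orthogonal splitting of the quadratic cost, and reduction to the completely observed LQ problem for $\hat{x}$ --- is exactly how this result is established in the references the paper cites. Your identification of the delicate point (that the innovations are a $\mathcal{Y}$-Brownian motion and the filter retains its form under $\mathcal{Y}$-adapted feedback controls) is accurate; once that is secured, the completion-of-squares argument for $\langle \hat{x}_t,\Pi_t\hat{x}_t\rangle$ goes through verbatim.
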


Since a dynamical system can be controlled by more than one competing controllers, where one wants to maximize a performance functional and another wants to minimize it, control problems can also be considered in the context of differential games as in \cite{12}.

\section{Quantum stochastic calculus} 

Let $B_t=\{B_t(\omega)/\,\omega\in \Omega\}$, $t\geq 0$,  be one-dimensional  Brownian motion.  Integration with respect to $B_t$ was defined by It\^{o}.  A basic result of the theory is that stochastic integral equations of the form

\begin{eqnarray}
&X_t=X_0+\int_0^tb(s,X_s)\,ds+\int_0^t\sigma (s,X_s)\,dB_s&
\end{eqnarray}

can be interpreted as stochastic differential equations of the form

\begin{eqnarray}
&dX_t=b(t,X_t)\,dt+\sigma (t,X_t)\,dB_t&
\end{eqnarray}

where  differentials are handled with the use of It\^{o}'s formula

\begin{eqnarray}
&(dB_t)^2=dt,\,\,\,\,\,\,dB_t\,dt=dt\,dB_t=(dt)^2=0&
\end{eqnarray}

In \cite{13a} Hudson and Parthasarathy obtained a Fock space representation of Brownian motion and Poisson process as follows.

\begin{definition}  The Boson Fock space  $\Gamma=\Gamma(L^2(\mathbb{R_+},\mathcal{C}))$  over    $L^2(\mathbb{R_+},\mathcal{C}) $  is the Hilbert space completion of the linear span of the exponential vectors  $\psi(f)$ under the inner product
\end{definition}

\begin{eqnarray}
<\psi(f),\psi(g)>=e^{<f,g>}
\end{eqnarray}

where $f,g \in L^2(\mathbb{R_+},\mathcal{C})$ and $<f,g>=\int_0^{+\infty}\,\bar{f}(s)\,g(s)\,ds$ where, here and in what follows,  $\bar{z}$ denotes the complex conjugate of $z\in \mathbb{C}$.
.

The annihilation, creation and conservation operators $A(f)$, $A^{\dagger}(f)$ and $\Lambda(F)$ respectively, are defined on the exponential vectors $\psi(g)$ of $\Gamma$ as follows.

\begin{definition}   

\begin{eqnarray}
&A_t\psi(g)=\int_0^t\,g(s)\,ds\,\,\psi(g)&\\
&A^{\dagger}_t\psi(g)=\frac{\partial}{\partial \epsilon}|_{\epsilon=0}\,\psi(g+\epsilon \chi_{[0,t]})&\\
&\Lambda_t\psi(g)=\frac{\partial}{\partial \epsilon}|_{\epsilon=0}\,\psi(e^{\epsilon \chi_{[0,t]})}g)&
\end{eqnarray}

\end{definition}

The basic   quantum stochastic differentials $dA_t$, $dA^{\dagger}_t$, and $d\Lambda_t$ are defined as follows.

\begin{definition} 

\begin{eqnarray}
&dA_t=A_{t+dt}-A_t& \\
&dA^{\dagger}_t = A^{\dagger}_{t+dt}-A^{\dagger}_{t}  &\\
&d \Lambda_t= \Lambda_{t+dt} -\Lambda_t &
\end{eqnarray}

\end{definition} 

Hudson and Parthasarathy defined stochastic integration with respect to the noise differentials of Definition 3 and obtained the  It\^{o} multiplication table

\begin{center}
\begin{tabular}{c|cccc}
$\cdot$&$dA_t^{\dagger}$&$d\Lambda_t$&$dA_t$&$dt$\\
\hline
$dA_t^{\dagger}$&$0$&$0$&$0$&$0$\\
$d\Lambda_t$&$dA_t^{\dagger}$&$d\Lambda_t$&$0$&$0$\\
$dA_t$&$dt$&$dA_t$&$0$&$0$\\
$dt$&$0$&$0$&$0$&$0$
\end{tabular}
\end{center}

The fundamental theorems of the Hudson-Parthasarathy quantum stochastic calculus  give formulas for expressing the matrix elements of quantum stochastic integrals  in terms of ordinary Riemann-Lebesgue integrals. 

\begin{theorem}
Let 

\begin{eqnarray}
&M(t)=\int_0^t\,E(s)\,d\Lambda (s)+F(s)\,dA(s)+G(s)\,dA^{\dagger}(s)+H(s)\,ds&
\end{eqnarray}

where $E$, $F$, $G$, $H$ are (in general) time dependent adapted processes. Let also $u\otimes \psi (f)$ and $v \otimes \psi (g)$ be in the exponential domain of $\mathcal{H} \otimes \Gamma$. Then

\begin{eqnarray}
&<u\otimes \psi (f),M(t)\,  v \otimes \psi (g)>=&\\
&\int_0^t <u \otimes \psi (f),\left(\bar{f}(s)\,g(s) \,E(s)+g(s)\,F(s)+\bar{f}(s)\,G(s)+H(s)\right) v \otimes \psi (g)>\,ds&\nonumber
\end{eqnarray}

\end{theorem}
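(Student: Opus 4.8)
The plan is to reduce the statement to the matrix elements of the basic operators $A_t$, $A_t^{\dagger}$, $\Lambda_t$ between exponential vectors, to prove the formula first for simple (step) coefficient processes by a direct computation exploiting the continuous tensor product structure of Fock space, and then to pass to general adapted processes by approximation.

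First I would record, directly from Definition 2, the elementary identities valid for $f,g\in L^2(\mathbb{R}_+,\mathbb{C})$ and $0\le a\le b$:
\[\langle\psi(f),(A_b-A_a)\psi(g)\rangle=\Big(\int_a^b g(s)\,ds\Big)e^{\langle f,g\rangle},\]
\[\langle\psi(f),(A_b^{\dagger}-A_a^{\dagger})\psi(g)\rangle=\Big(\int_a^b \bar f(s)\,ds\Big)e^{\langle f,g\rangle},\]
\[\langle\psi(f),(\Lambda_b-\Lambda_a)\psi(g)\rangle=\Big(\int_a^b \bar f(s)g(s)\,ds\Big)e^{\langle f,g\rangle},\]
together with $\langle\psi(f),(b-a)\psi(g)\rangle=(b-a)e^{\langle f,g\rangle}$. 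The first follows at once from $A_t\psi(g)=(\int_0^t g)\,\psi(g)$; the second by differentiating $e^{\langle f,g+\epsilon\chi_{[0,t]}\rangle}$ in $\epsilon$ at $0$ (equivalently, by taking adjoints); the third by differentiating $e^{\langle f,e^{\epsilon\chi_{[0,t]}}g\rangle}$ in $\epsilon$ at $0$.

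Next I would establish the theorem for a simple adapted process, i.e.\ when $E,F,G,H$ are constant, say equal to $E_i,F_i,G_i,H_i$ (adapted to $\Gamma_{t_i]}$), on each interval $[t_i,t_{i+1})$ of a partition $0=t_0<\dots<t_N=t$, so that $M(t)$ is the corresponding finite sum of increments. Using the factorization $\Gamma=\Gamma_{t_i]}\otimes\Gamma_{[t_i,t_{i+1})}\otimes\Gamma_{[t_{i+1}}$, the exponential vectors split as tensor products, each increment operator acts only on the middle factor while $E_i$ acts only on $\mathcal H\otimes\Gamma_{t_i]}$; multiplying the resulting three scalar factors and using the identities above yields, for the conservation term,
\[\langle u\otimes\psi(f),E_i(\Lambda_{t_{i+1}}-\Lambda_{t_i})v\otimes\psi(g)\rangle=\Big(\int_{t_i}^{t_{i+1}}\bar f(s)g(s)\,ds\Big)\langle u\otimes\psi(f),E_i v\otimes\psi(g)\rangle,\]
and analogously for the $A$-, $A^{\dagger}$- and $dt$-terms with $g(s)$, $\bar f(s)$ and $1$ in place of $\bar f(s)g(s)$. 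Summing over $i$ and recalling that on $[t_i,t_{i+1})$ the integrand on the right-hand side of the asserted formula equals $(\bar f(s)g(s)E_i+g(s)F_i+\bar f(s)G_i+H_i)$ reproduces the formula exactly.

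Finally I would approximate general adapted coefficients $E,F,G,H$ by simple processes $E_n,F_n,G_n,H_n$ in the $L^2$-sense in which the stochastic integral is constructed (so that $(E_n(s)-E(s))v\otimes\psi(g)$, and likewise for $F,G,H$, tends to $0$ in $L^2([0,t],\mathcal H\otimes\Gamma)$), invoke the fundamental estimate bounding $\|(M_n(t)-M(t))v\otimes\psi(g)\|$ by these norms, and pass to the limit on both sides of the identity already proved for the $M_n$. I expect this last passage to be the main obstacle: it requires fixing the precise function spaces the coefficients are assumed to lie in and having the first fundamental estimate available, which is also what makes the definition of $M(t)$ meaningful in the first place and guarantees continuity of both sides under the approximation. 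The computation for simple processes, by contrast, is routine once the three elementary matrix-element identities and the tensor factorization of Fock space are in hand; a minor additional point is justifying the interchange of the $\epsilon$-differentiation with the inner product in the derivation of the $\Lambda$ identity.
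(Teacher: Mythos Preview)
Your outline is correct and is precisely the standard proof: compute the matrix elements of the basic increments on exponential vectors from Definition~2, verify the formula for simple adapted processes via the continuous tensor factorization $\Gamma=\Gamma_{t_i]}\otimes\Gamma_{[t_i,t_{i+1})}\otimes\Gamma_{[t_{i+1}}$, and extend by $L^2$-approximation using the fundamental estimate. The paper itself does not give a proof but simply cites Theorem~4.1 of Hudson--Parthasarathy \cite{13a}, where exactly this argument is carried out; your identification of the approximation step (and the underlying estimate that makes the integral well defined) as the only nontrivial point is accurate.
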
 

\begin{proof}

See theorem 4.1 of \cite{13a}

\end{proof}

\begin{theorem}
Let 

\begin{eqnarray}
&M(t)=\int_0^t\,E(s)\,d\Lambda (s)+F(s)\,dA(s)+G(s)\,dA^{\dagger}(s)+H(s)\,ds&
\end{eqnarray}

and

\begin{eqnarray}
&M^{\prime}(t)=\int_0^t\,E^{\prime}(s)\,d\Lambda (s)+F^{\prime}(s)\,dA(s)+G^{\prime}(s)\,dA^{\dagger}(s)+H^{\prime}(s)\,ds&
\end{eqnarray}

where $E$, $F$, $G$, $H$, $E^{\prime}$, $F^{\prime}$, $G^{\prime}$, $H^{\prime}$ are (in general) time dependent adapted processes. Let also $u\otimes \psi (f)$ and $v \otimes \psi (g)$ be in the exponential domain of $\mathcal{H} \otimes \Gamma$. Then

\begin{eqnarray}
&<M(t) \,u \otimes \psi (f),M^{\prime} (t)\, v \otimes \psi (g)>=&\\
&\int_0^t \{ <M(s)\,u \otimes \psi (f),\left( \bar{f}(s)\,g(s) E^{\prime} (s)+g(s)\,F^{\prime} (s)+\bar{f}(s)\,G^{\prime} (s)+H^{\prime} (s) \right) v \otimes \psi (g)>&\nonumber\\
&  + < \left( \bar{g}(s)\,f(s)\, E(s) + f(s)\,F(s) +\bar{g}(s)\, G(s)+H(s) \right) u \otimes \psi (f),  M^{\prime} (s)\,v \otimes \psi (g)>&\nonumber\\
&+   < \left( f(s) E(s)+G(s)   \right) u \otimes \psi (f),  \left( g(s) E^{\prime} (s)+G^{\prime} (s)   \right)  v \otimes \psi (g)>\}\, ds &\nonumber
\end{eqnarray}

\end{theorem}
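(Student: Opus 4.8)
The plan is to prove Theorem 6 (the ``quantum It\^o product formula'' for iterated stochastic integrals) by reducing everything to Theorem 5, which already gives the matrix elements of a single quantum stochastic integral as an ordinary Riemann--Lebesgue integral. The strategy is: first establish the statement for simple (step-function) adapted integrands, where the stochastic integrals are finite sums and the It\^o table can be applied term by term, and then pass to the limit by the usual approximation/continuity argument to reach general adapted processes.

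First I would fix the exponential-domain vectors $u\otimes\psi(f)$ and $v\otimes\psi(g)$ and write $\phi(t)=<M(t)\,u\otimes\psi(f),M'(t)\,v\otimes\psi(g)>$. The idea is to compute $d\phi(t)$, i.e. to expand
\begin{eqnarray}
&\phi(t+dt)-\phi(t)=<dM(t)\,u\otimes\psi(f),M'(t)\,v\otimes\psi(g)>&\nonumber\\
&+<M(t)\,u\otimes\psi(f),dM'(t)\,v\otimes\psi(g)>+<dM(t)\,u\otimes\psi(f),dM'(t)\,v\otimes\psi(g)>,&\nonumber
\end{eqnarray}
and to treat each of the three pieces. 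For the first two (the ``martingale'' terms) I would use Theorem 5 applied to $M'$ with the fixed left vector $M(s)\,u\otimes\psi(f)$, respectively to $M$ with the fixed right vector $M'(s)\,v\otimes\psi(g)$ (using the adjoint relations $dA_t^\dagger{}^*=dA_t$, $d\Lambda_t^*=d\Lambda_t$, which account for the conjugated coefficients $\bar g(s)f(s)E(s)+f(s)F(s)+\bar g(s)G(s)+H(s)$ in the second line of the claim). For the ``bracket'' term $<dM(t)\cdots,dM'(t)\cdots>$ one multiplies out the nine products $dA^\dagger\,dA^\dagger$, $dA^\dagger\,d\Lambda$, etc., and reads off from the It\^o multiplication table that only $dA\,dA^\dagger=dt$, $d\Lambda\,dA^\dagger=dA^\dagger$, $dA\,d\Lambda=dA$ and $d\Lambda\,d\Lambda=d\Lambda$ survive. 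Taking the inner product and applying Theorem 5 once more to collapse any residual noise differentials against the exponential vectors, the surviving contributions assemble into $<(f(s)E(s)+G(s))u\otimes\psi(f),(g(s)E'(s)+G'(s))v\otimes\psi(g)>$, which is exactly the third line of the statement. Integrating the resulting differential identity from $0$ to $t$ and using $M(0)=M'(0)=0$ gives the claimed formula.

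The genuine technical step is making the ``$d\phi(t)$'' computation rigorous, since differentials are only shorthand. I would do this by first proving the formula when $E,F,G,H$ and the primed processes are simple adapted processes, constant on a common partition $0=t_0<t_1<\cdots<t_N=t$; then $M(t)$ and $M'(t)$ are honest finite sums, the inner product $\phi(t)$ telescopes over the partition, and on each subinterval $[t_j,t_{j+1}]$ the increments of the basic processes against the exponential vectors are given exactly by Theorem 5 (e.g. $<\psi(f),(A_{t_{j+1}}-A_{t_j})\psi(g)>=\int_{t_j}^{t_{j+1}}g(s)\,ds\,e^{<f,g>}$ and similarly for $A^\dagger,\Lambda$), while the cross terms reproduce the It\^o table. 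Summing over $j$ and recognizing Riemann sums yields the integral identity for simple integrands. The main obstacle — and the place I would be most careful — is the passage to the limit: one approximates general adapted $E,F,G,H$ by simple processes, and must check that both sides of the identity are continuous under this approximation. This requires the standard quantum It\^o isometry-type estimates (bounding $\|M(t)\,v\otimes\psi(g)\|$ by an $L^2$-norm of the coefficients times a constant depending on $f,g,t$), so that $\phi(t)$ and each of the three integral terms depend continuously on the integrands; granting these estimates (which are part of the Hudson--Parthasarathy theory underlying Theorem 5), the general case follows. I would cite Theorem 4.5 of \cite{13a} for the approximation estimates and otherwise keep the proof self-contained modulo Theorem 5.
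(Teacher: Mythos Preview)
Your sketch is correct and follows the standard Hudson--Parthasarathy argument: prove the identity first for simple adapted integrands by telescoping over a partition, use the first fundamental theorem (Theorem~3 here) to evaluate the single-integral pieces and the It\^o table for the cross term, then pass to general integrands via the $L^2$-type estimates. The paper itself does not give an independent proof at all --- it simply refers the reader to Theorem~4.3 of \cite{13a} --- so there is no ``paper's approach'' to compare against beyond noting that your outline is precisely how the cited result is established in \cite{13a} (and in Parthasarathy's monograph \cite{13}). One small remark: your informal $d\phi(t)$ computation presupposes the product formula you are proving, so the actual content lies entirely in the simple-process step plus the approximation, as you correctly identify; just be sure the write-up does not circularly invoke the It\^o product rule before it is established.
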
 

\begin{proof}

See theorem 4.3 of \cite{13a}

\end{proof}

The following lemma will be useful in deriving the fundamental result which connects classical with quantum stochastics.

\begin{lemma}[The first order Poisson-Weyl operator] 

Let 

\begin{eqnarray}
U(t)=e^{iE(t)}
\end{eqnarray}

 where 

\begin{eqnarray}
E(t)=\lambda \, t+z\,A_t+\overline{z}\,A^{\dagger}_t+k\,{\Lambda}_t
\end{eqnarray}

 with    $\lambda , k \in \mathbb{R}$,  $z \in \mathbb{C}$. 

(a) If $k \neq 0$ then

\begin{eqnarray}
&&\\
&dU(t)=U(t)\left[\left(i\lambda +\frac{|z|^2}{k^2} \,M\right)\,dt+\left(iz+\frac{z}{k}\,M\right)\,dA_t+\left(i\overline{z}+
\frac{\overline{z}}{ k}\,M\right)\,dA^{\dagger}_t+\left(i\,k+M\right)\,d{\Lambda}_t\right]&\nonumber
\end{eqnarray}

where

\begin{eqnarray}
M=e^{ik}-1-ik
\end{eqnarray}

(b) If $k=0$ then

\begin{eqnarray}
dU(t)=U(t)\left[\left(i\,\lambda-\frac{|z|^2}{2}\right)\,dt+i\,z\,dA_t+i\,\overline{z}\,dA^{\dagger}_t\right]
\end{eqnarray}

\end{lemma}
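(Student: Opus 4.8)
The plan is to exploit the multiplicative (cocycle) structure of $U$ together with the quantum It\^{o} table. First I would observe that, by the factorization property of Fock space $\Gamma(L^2([0,t+dt]))=\Gamma(L^2([0,t]))\otimes\Gamma(L^2([t,t+dt]))$, the (essentially self-adjoint) operator $E(t)$ and its increment $dE(t):=E(t+dt)-E(t)=\lambda\,dt+z\,dA_t+\overline{z}\,dA^{\dagger}_t+k\,d\Lambda_t$ act on orthogonal time intervals and hence commute. Therefore $U(t+dt)=e^{i(E(t)+dE(t))}=e^{iE(t)}\,e^{i\,dE(t)}=U(t)\,e^{i\,dE(t)}$, so that $dU(t)=U(t)\bigl(e^{i\,dE(t)}-1\bigr)$, and the whole problem reduces to expanding $e^{i\,dE(t)}-1$ to first order in the sense of the Hudson--Parthasarathy calculus.

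Next I would compute the powers $(dE(t))^n$ using the It\^{o} multiplication table. A direct computation gives $(dE(t))^2=|z|^2\,dt+zk\,dA_t+\overline{z}k\,dA^{\dagger}_t+k^2\,d\Lambda_t$, and then by induction on $n\ge 2$ one shows $(dE(t))^n=|z|^2k^{n-2}\,dt+zk^{n-1}\,dA_t+\overline{z}k^{n-1}\,dA^{\dagger}_t+k^n\,d\Lambda_t$; the inductive step uses only the four non-vanishing products $dA_t\,dA^{\dagger}_t=dt$, $dA_t\,d\Lambda_t=dA_t$, $d\Lambda_t\,dA^{\dagger}_t=dA^{\dagger}_t$, $d\Lambda_t\,d\Lambda_t=d\Lambda_t$ (every product with a $dt$ on either side vanishes, which is why $\lambda$ survives only in the linear term). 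Substituting into $e^{i\,dE(t)}-1=i\,dE(t)+\sum_{n\ge 2}\frac{i^n}{n!}(dE(t))^n$ and collecting the coefficients of $dt$, $dA_t$, $dA^{\dagger}_t$, $d\Lambda_t$, each resulting scalar series is $\sum_{n\ge 2}\frac{(ik)^n}{n!}=e^{ik}-1-ik=M$ up to the powers $k^{-1}$ or $k^{-2}$ bookkept above; left-multiplying by $U(t)$ then gives case (a) verbatim. For case (b), setting $k=0$ kills every product except $dA_t\,dA^{\dagger}_t=dt$, so $(dE(t))^2=|z|^2\,dt$ and $(dE(t))^n=0$ for $n\ge 3$, the series truncates, and one obtains $e^{i\,dE(t)}-1=(i\lambda-\tfrac{1}{2}|z|^2)\,dt+iz\,dA_t+i\overline{z}\,dA^{\dagger}_t$, which is also the $k\to 0$ limit of (a) since $M/k^2\to-\tfrac12$.

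The step I expect to be the main obstacle is making $dU(t)=U(t)(e^{i\,dE(t)}-1)$ rigorous rather than merely formal: one must justify term by term that the exponential series reproduces the correct quantum stochastic integral, with control of domains since $A_t$, $A^{\dagger}_t$, $\Lambda_t$ are unbounded. I would handle this by fixing exponential vectors $u\otimes\psi(f)$ and $v\otimes\psi(g)$, using the explicit action of the Poisson--Weyl operator $e^{iE(t)}$ (which sends exponential vectors to scalar multiples of exponential vectors, with the argument altered only on $[0,t]$), writing $<u\otimes\psi(f),U(t)\,v\otimes\psi(g)>$ in closed form, differentiating in $t$, and checking that the result coincides with the matrix element of the claimed right-hand side as delivered by Theorem 4; the convergence of the series on the exponential domain is precisely the point that requires an estimate.
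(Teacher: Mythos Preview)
Your proposal is correct and follows essentially the same approach as the paper: the paper also uses the commutativity $[E(dt),E(t)]=0$ to factor $dU(t)=U(t)\bigl(e^{i\,dE(t)}-I\bigr)=U(t)\sum_{n\ge 1}\frac{(i\,dE(t))^n}{n!}$, then computes $(dE(t))^n$ for $n\ge 2$ via the It\^{o} table exactly as you do, obtaining the same closed formulas. Your final paragraph on rigorizing the exponential expansion via matrix elements on the exponential domain goes beyond what the paper supplies, but the computational core is identical.
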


\begin{proof}

Computing the differential of $U(t)$ we find

\begin{eqnarray}
dU(t)&=&d(e^{i \, E(t)})\\
     &=& e^{i \, E(t+dt)} - e^{i \, E(t)}\nonumber\\
     &=&e^{i \, \left(E(dt)+E(t)\right)} - e^{i \, E(t)}\nonumber\\
     &=&e^{i \, E(dt)}\,e^{i \, E(t)} - e^{i \, E(t)}\mbox{ ( since of $\left[E(dt),E(t)\right]=0$) }\nonumber\\
     &=&e^{i \, E(t)}\,\left[e^{i \, dE(t)}-I\right]\nonumber\\
     &=&U(t)\,\sum_{n=1}^{\infty} \frac{\left(i\,dE(t)\right)^n}{n!}\nonumber
\end{eqnarray}

 Using the It\^{o} multiplication table we find that for $k \neq 0$ and $n \geq 2$

\begin{eqnarray}
dE(t)^n=|z|^2\, k^{n-2}\,dt+z\,k^{n-1}\,dA_t+\overline{z}\,k^{n-1}
dA^{\dagger}_t+k^n\,d{\Lambda}_t
\end{eqnarray}

while for $k=0$ and $n \geq 2$

\begin{eqnarray}
dE(t)^2={\delta}_{n,2}\,|z|^2\,dt
\end{eqnarray}

where ${\delta}_{n,2} $ is Kronecker's Delta, and (3.18), (3.20) now follow by substituting (3.22) and (3.23) in (3.21).

\end{proof}

\begin{theorem}
 The processes $B=\{B_t\,/\,t \geq 0\}$ and $P=\{P_t\,/\,t \geq 0\}$ defined by

\begin{eqnarray}
&B_t=A_t+A_t^{\dagger}&
\end{eqnarray}

and

\begin{eqnarray}
&P_t=\Lambda_t+\sqrt{\lambda}(A_t+A_t^{\dagger})+\lambda t&
\end{eqnarray}

are identified with Brownian motion and  Poisson process of intensity $\lambda$ respectively, in the sense that  their vacuum characteristic functionals are given by

\begin{eqnarray}
&<\psi (0),e^{i\,s\,B_t}\,\psi (0)>= e^{ - \frac{s^2}{ 2}\,t } &
\end{eqnarray}

and

\begin{eqnarray}
&<\psi (0),e^{i\,s\,P_t}\,\psi (0)>=e^{ \lambda\,\left(e^{i\,s}-1\right)\,t }&
\end{eqnarray} 

\end{theorem}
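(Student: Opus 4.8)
The plan is to recognize both $B_t$ and $P_t$ as special cases of the exponent $E(t)$ of Lemma 1, to apply the quantum It\^o formula recorded there, and then to reduce the computation of the vacuum characteristic functionals to a scalar linear ordinary differential equation by taking vacuum matrix elements with Theorem 3.

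First I would note that for real $s$ one has $s\,B_t = 0\cdot t + s\,A_t + s\,A_t^{\dagger} + 0\cdot\Lambda_t$, which is exactly $E(t)$ with $\lambda = 0$, $z = s$, $k = 0$; and that $s\,P_t = (\lambda s)\,t + (s\sqrt{\lambda})\,A_t + (s\sqrt{\lambda})\,A_t^{\dagger} + s\,\Lambda_t$, which is $E(t)$ with Lemma-parameters $\lambda s$, $z = s\sqrt{\lambda}$, $k = s$. Since $B_t$ and $P_t$ are symmetric on the exponential domain, the exponentials $U_B(t) := e^{i s B_t}$ and $U_P(t) := e^{i s P_t}$ are well defined and equal $e^{iE(t)}$ for the respective choices of parameters, so Lemma 1 applies. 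Part (b) gives
\[
dU_B(t) = U_B(t)\left[-\frac{s^2}{2}\,dt + i s\,dA_t + i s\,dA_t^{\dagger}\right],
\]
while part (a), with $M = e^{is}-1-is$ and $|z|^2/k^2 = \lambda$, gives a $dt$-coefficient equal to $i\lambda s + \lambda M = \lambda(e^{is}-1)$, together with certain $dA_t$, $dA_t^{\dagger}$ and $d\Lambda_t$ terms, for $U_P(t)$.

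Next I would apply Theorem 3 to $M(t) := U(t) - I$ (for $U = U_B$ or $U = U_P$), taking $u = v = 1$ and $f = g = 0$. In the formula of Theorem 3 the coefficients of $d\Lambda_t$, $dA_t$ and $dA_t^{\dagger}$ are multiplied by $\bar f(s)g(s)$, $g(s)$ and $\bar f(s)$ respectively, so against the vacuum only the $ds$-term survives; writing $\phi(t) := \langle\psi(0),U(t)\,\psi(0)\rangle$ one gets $\phi(t)-1 = -\frac{s^2}{2}\int_0^t\phi(r)\,dr$ in the Brownian case and $\phi(t)-1 = \lambda(e^{is}-1)\int_0^t\phi(r)\,dr$ in the Poisson case, with $\phi(0) = \langle\psi(0),\psi(0)\rangle = e^{\langle 0,0\rangle} = 1$. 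Solving these scalar linear equations yields $\phi(t) = e^{-s^2 t/2}$ and $\phi(t) = e^{\lambda(e^{is}-1)t}$, which are precisely (3.26) and (3.27). Since the former is the characteristic function of the Gaussian law of mean $0$ and variance $t$, and the latter that of the Poisson law of mean $\lambda t$, and a characteristic function determines a distribution, the vacuum laws of $B_t$ and $P_t$ are those of Brownian motion and of the Poisson process of intensity $\lambda$.

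The main obstacle is not in the computation, which is a one-line ODE once Lemma 1 and Theorem 3 are in hand, but in the standing analytic points: one must know that $B_t$ and $P_t$ are essentially self-adjoint on the exponential domain so that the unitaries $e^{isB_t}$, $e^{isP_t}$ exist, that the coefficient processes appearing above (each a bounded scalar times the unitary $U(s)$) are admissible for Theorem 3, and --- if one wants the full identification as \emph{processes} rather than one-dimensional marginals --- that the increments are stationary and independent, which follows from the factorization $\Gamma(L^2(\mathbb{R}_+)) = \Gamma(L^2([0,t]))\otimes\Gamma(L^2([t,\infty)))$ together with adaptedness. As the statement only asserts identification through the vacuum characteristic functional, the ODE argument above is what is needed.
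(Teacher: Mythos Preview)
Your proposal is correct and follows essentially the same route as the paper: identify $sB_t$ and $sP_t$ as instances of $E(t)$ in Lemma~1, read off the quantum SDE for $U(t)=e^{iE(t)}$, take vacuum matrix elements via Theorem~3 to kill all but the $dt$-term, and solve the resulting scalar linear ODE with initial value $1$. Your explicit simplification of the Poisson $dt$-coefficient to $\lambda(e^{is}-1)$ and your remarks on essential self-adjointness and process-level identification go slightly beyond what the paper writes, but the underlying argument is the same.
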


\begin{proof} Let

\begin{eqnarray}
U(t)=e^{i\,E(t)}
\end{eqnarray} 

where

\begin{eqnarray}
E(t)=s\,\left(A_t+A_t^{\dagger}\right)
\end{eqnarray} 

with $s\in \mathbb{R}$. Then, by Lemma 1

\begin{eqnarray}
dU(t)=U(t)\left[-\frac{s^2}{2}\,dt+i\,s\,dA_t+i\,s \,dA^{\dagger}_t \right]
\end{eqnarray}

or in integral form

\begin{eqnarray}
U(t)=I+\int_0^t\,U(r)\left[-\frac{s^2}{2}\,dr+i\,s\,dA_r+i\,
s \,dA^{\dagger}_r \right]
\end{eqnarray}

which, using Theorem 1 to compute vacuum expectations, yields

\begin{eqnarray}
<\psi (0),U(t)\,\psi (0)>=<\psi (0)\,\psi (0)>-\frac{s^2}{2}\int_0^t\,<\psi (0),U(r)\,\psi (0)>\,dr
\end{eqnarray}

Letting 

\begin{eqnarray}
f(t)=<\psi (0),U(t)\,\psi (0)>
\end{eqnarray}

we find that $f$ satisfies the differential equation

\begin{eqnarray}
f^{\prime}(t)=-\frac{s^2}{2}\,f(t)
\end{eqnarray}

with initial condition

\begin{eqnarray}
f(0)=1
\end{eqnarray}

and so

\begin{eqnarray}
f(t)=e^{ - \frac{s^2}{ 2}\,t }
\end{eqnarray}

thus proving (3.26). To prove (3.27) we let 

\begin{eqnarray}
U(t)=e^{i\,E(t)}
\end{eqnarray} 

where

\begin{eqnarray}
E(t)=s\,\left(\lambda \,t+\sqrt{\lambda}\,\left(A_t+A_t^{\dagger}\right)+{\Lambda}_t\right)
\end{eqnarray} 

with $s\in \mathbb{R}$. Then, as in the proof of (3.26)

\begin{eqnarray}
U(t)=I+\int_0^t\,U(r)\,\left(e^{i\,s}-1\right)  \left[ \lambda\,dr+ \sqrt{\lambda} \,\left(dA_r+dA^{\dagger}_r\right) +\,d{\Lambda}_r\right]
\end{eqnarray}

and

\begin{eqnarray}
<\psi (0),U(t)\,\psi (0)>=<\psi (0)\,\psi (0)>+\int_0^t\,\lambda\,\left(e^{i\,s}-1\right)<\psi (0),U(r)\,\psi (0)>\,dr
\end{eqnarray}

Letting 

\begin{eqnarray}
g(t)=<\psi (0),U(t)\,\psi (0)>
\end{eqnarray}

we find that $g$ satisfies the differential equation

\begin{eqnarray}
g^{\prime}(t)=\lambda\,\left(e^{i\,s}-1\right)\,g(t)
\end{eqnarray}

with initial condition

\begin{eqnarray}
g(0)=1
\end{eqnarray}

and so

\begin{eqnarray}
g(t)=e^{ \lambda\,\left(e^{i\,s}-1\right)\,t }
\end{eqnarray}

thus proving (3.27).

\end{proof}

Within the framework of Hudson-Parthasarathy Quantum Stochastic Calculus, classical  quantum mechanical evolution equations  take the form

\begin{eqnarray}
dU_t&=&-\left(\left(iH+\frac{1}{2}\,L^*L\right)\,dt+ L^* \,W\,dA_t -L\, dA_t^{\dagger}+\left(1-W\right)\,d\Lambda_t\right)U_t\\
U_0&=&1
\end{eqnarray}

where, for each $t\geq0$, $U_t$ is a unitary operator defined on the tensor product  $\mathcal{H} \otimes \Gamma(L^2(\mathbb{R}_+,\mathcal{ \mathcal{C}  }))$ of a system Hilbert space $\mathcal{H} $ and the noise (or reservoir) Fock space $\Gamma$.  Here $H$, $L$, $W$ are in $\mathcal{B}(\mathcal{H})$, the space of bounded linear operators on $\mathcal{H} $, with $W$ unitary and $H$ self-adjoint.  Notice that for $L=W=-1$ equation (3.45) reduces to a classical SDE of the form (3.2).  Here and in what follows we identify time-independent, bounded, system space operators $X$ with their ampliation $X \otimes 1$ to $\mathcal{H} \otimes \Gamma(L^2(\mathbb{R}_+,
\mathcal{\mathcal{C}}))$.

The quantum stochastic differential equation satisfied by the quantum flow

\begin{eqnarray}
&j_t(X)=U^*_t\,X\,U_t&
\end{eqnarray}

where $X$ is a bounded system space operator, is

\begin{eqnarray}
dj_t(X)&=&j_t\left( i\,\left[H,X\right]-\frac{1}{2}\left(L^*LX+XL^*L-2L^*XL\right)\right)\,dt\\
&+&j_t\left(\left[L^*,X\right]\,W\right)\,dA_t +j_t\left(W^*\,\left[X,L\right]\right)\,dA_t^{\dagger}+j_t\left(W^*\,X\,W-X\right)\,d\Lambda_t \nonumber\\
j_0(X)&=&X,\,\,\,\,t\in [0,T]
\end{eqnarray}

The commutation relations associated with the  operator processes $A_t$, $A^{\dagger}_t$  are the Canonical (or Heisenberg)  Commutation Relations (CCR), namely

\begin{eqnarray}
&\left[A_t,A^{\dagger}_t\right]=t\,I&
\end{eqnarray}

Classical and quantum stochastic calculi were unified by Accardi, Lu, and Volovich in \cite{7} within the framework of the white noise theory of T. Hida. Denoting the basic white noise functionals by $a_t$ and $a^{\dagger}_t$, they showed that the stochastic differentials of the Hudson-Partasarathy processes can be written as

\begin{eqnarray}
&dA_t=a_t\,dt& \\
&dA^{\dagger}_t = a^{\dagger}_t\,dt  &\\
&d \Lambda_t=  a_t a^{\dagger}_t\,dt &
\end{eqnarray}

and Hudson-Partasarathy stochastic differential equations are reduced to white noise equations. This unification started a whole new theory corresponding to quantum stochastic processes given by powers of the white noise functionals.

\section{Square of white noise quantum stochastic calculus} 

The square of white noise (SWN) commutation relations are a functional extension of the $sl(2;\mathbb{R})$ commutation relations

\begin{eqnarray}
&[B^-,B^+]=M,\,[M,B^+]=2B^+,\,[M,B^-]=-2B^-&
\end{eqnarray}

where

\begin{eqnarray}
&(B^-)^*=B^+,\,M^*=M.&
\end{eqnarray}

Following "renormalization", the SWN noise differentials are initially defined by

\begin{eqnarray}
&dB_t^-=a_t^2\,dt,\, dB_t^+={a_t^{\dagger}}^2\,dt,\, dM_t=a_t^{\dagger}\,a_t\,dt.&
\end{eqnarray}

A representation of the $sl(2;\mathbb{R})$ Lie algebra on $l_2(\mathbb{N})$ is defined by

\begin{eqnarray}
&\rho^+({B^+}^nM^k{B^-}^l)\,e_m=\theta_{n,k,l,m}\,e_{n+m-l}.&
\end{eqnarray}

where  {$e_m,\,m=0,1,2,\cdots$} is any orthonormal basis of $l_2(\mathbb{N})$, 

\begin{equation}
\theta_{n,k,l,m}:=
H(n+m-l)\,\sqrt{\frac{m-l+n+1}{m+1}}\,2^k(m-l+1)_{n}(m+1)^{(l)}(m-l+1)^{k},
\end{equation}

$H(x)$ is the Heaviside function $(H(x)=0$ for $x<0$; $H(x)=1$ for $x\geq0)$,

\begin{eqnarray*}
&0^0=1,\,\,\, (B^+)^n=(B^-)^n=N^n=0,\,\,\mbox{ for }n<0,&
\end{eqnarray*}

and   "factorial powers" are defined by 

\begin{eqnarray*}
&x^{(n)}=x(x-1)\cdots(x-n+1)&\\
&(x)_n=x(x+1)\cdots(x+n-1)&\\
&(x)_0=x^{(0)}=1.&
\end{eqnarray*}

Using this representation we obtain

\begin{eqnarray}
dM_t&=&d\Lambda_t(\rho^+(M))+dt\\
dB_t^+&=&d\Lambda_t(\rho^+(B^+))+dA^{\dagger}_t(e_0)\\
dB_t^-&=&d\Lambda_t(\rho^+(B^-))+dA_t(e_0).
\end{eqnarray}

To obtain a closed It\^{o} multiplication table we use as basic SWN  differentials

\begin{eqnarray}
d\Lambda_{n,k,l}(t) &=&d\Lambda_t(\rho^+({B^+}^nM^k{B^-}^l))\\
dA_m(t)&=&dA_t(e_m)\\
dA^{\dagger}_m(t)&=&dA^{\dagger} _t(e_m).
\end{eqnarray}

where $n,k,l,m\in\{0,1,...\}$, with It\^{o} multiplication  table 

\begin{eqnarray}
d\Lambda_{\alpha,\beta,\gamma}(t)\,d\Lambda_{a,b,c}(t)&=& \sum \,c_{\beta,\gamma,a,b}^{\lambda,\rho,\sigma,\omega,\epsilon} \, d\Lambda_{a+\alpha-\gamma+\lambda,\omega+\sigma+\epsilon,\lambda+c}(t)\\
d\Lambda_{\alpha,\beta,\gamma}(t)\,dA^{\dagger}_n(t)&=&\theta_{\alpha,\beta,\gamma,n}\,dA_{\alpha+n-\gamma}^{\dagger}(t)\\
dA_m(t)\,d\Lambda_{a,b,c}(t)&=&\theta_{c,b,a,m}\,dA_{c+m-a}(t)\\
dA_m(t)\,dA^{\dagger}_n(t)&=&\delta_{m,n}\,dt 
\end{eqnarray}
where

\begin{eqnarray}
&c_{\beta,\gamma,a,b}^{\lambda,\rho,\sigma,\omega,\epsilon}=\binom{\gamma}{\lambda}\binom{\gamma-\lambda}{\rho}\binom{\beta}{\omega}\binom{b}{\epsilon}2^{\beta+b-\omega-\epsilon}S_{\gamma-\lambda-\rho,\sigma}
 a^{(\gamma-\lambda)}(a+\lambda-1)^{(\rho)}(a-\gamma+\lambda)^{\beta-\omega}\lambda^{b-\epsilon},&\nonumber
\end{eqnarray}

$S_{\gamma-\lambda-\rho,\sigma}$ are the Stirling numbers of the first kind and  $\sum$  in (4.28) denotes the finite sum

\begin{eqnarray*}
\sum_{\lambda=0}^{\gamma} \sum_{\rho=0}^{\gamma-\lambda}\sum_{\sigma=0}^{\gamma-\lambda-\rho}\sum_{\omega=0}^{\beta}\sum_{\epsilon=0}^b.
\end{eqnarray*}

All other products of differentials are equal to zero. 

Square of white noise quantum evolutions are of the form

\begin{eqnarray}
&dU_t=((-\frac{1}{2}\,(D_-^*|D_-^*)+iH)\,dt+d\mathcal{A}_t(D_-)+d\mathcal{A}^{\dagger}_t(-r(W)D_-^*)+d\mathcal{L}_t(W-I))\,U_t&\\
&U_0=1&
\end{eqnarray}

while Langevin equations are of the form

\begin{eqnarray}
&dj_t(X)=j_t(i\,[X,H]-\frac{1}{2}\{(D_-^*|D_-^*)\,X\}+(r(W)D_-^*|X\,r(W)D_-^*))\,dt&\\
&+j_t(d\mathcal{A}^{\dagger}_t(D_-^*\,X-r(W^*\,X \circ W)D_-^*))+j_t(d\mathcal{A}_t(X\,D_-l(W^*\circ X\,W)D_-))+j_t(d\mathcal{L}_t(W^* \,X \circ W-X))&\nonumber\\
&j_0(X)=X,\,t\in [0,T].&
\end{eqnarray}

where $H$ is a  bounded self-adjoint system operator, $W$ is a $\circ$-product (see (4.28) for the definition of the $\circ$-product) unitary operator 
and $D_-=\sum_mD_{-,m}\otimes e_m $, where the $D_m$'s are bounded  system operators.

In equations (4.17)-(4.18) and (4.19)-(4.20) we have used 

\bigskip

(i) evolution coefficients:

\begin{eqnarray}
D_+&=&\sum_nD_{+,n}\otimes e_n\\
D_-&=&\sum_mD_{-,m}\otimes e_m\\
D_1&=& \sum_{\alpha,\beta,\gamma}D_{1,\alpha,\beta,\gamma} \otimes \rho^+({B^+}^{\alpha}M^{\beta}{B^-}^{\gamma})\\
E_1&=& \sum_{a,b,c}E_{1,a,b,c} \otimes \rho^+({B^+}^{a}M^{b}{B^-}^{c})
\end{eqnarray}
 
where the left hand sides of the tensor products corespond to bounded system operators

\bigskip

(ii) module operators $\mathcal{A}$, $\mathcal{A}^{\dagger}$ and $\mathcal{L}$ genericaly defined by:

\begin{eqnarray}
\mathcal{A}(a\otimes\xi)&=&a\otimes A(\xi)\\
\mathcal{A}^{\dagger}(a\otimes\xi)&=&a\otimes A^{\dagger}(\xi)\\
\mathcal{L}(a \otimes T)&=&a \otimes \Lambda(T)
\end{eqnarray}

\medskip
and
\medskip 

(iii) basic operations:

\begin{eqnarray}
D_1\circ E_1 &=&\sum_{\alpha,\beta,\gamma,a,b,c}\sum  c_{\beta,\gamma,a,b}^{\lambda,\rho,\sigma,\omega,\epsilon}D_{1,\alpha,\beta,\gamma}E_{1,a,b,c}\otimes \rho^+({B^+}^{a+\alpha-\gamma+\lambda}M^{\omega+\sigma+\epsilon}{B^-}^{\lambda+c})\\
(D_-^*|D_+)&=&
\sum_n \,D_{-,n}\,D_{+,n}\otimes 1\\
r(D_1)D_+&=&\sum_{n,\alpha,\beta,\gamma}\, D_{1,\alpha,\beta,\gamma}\theta_{\alpha,\beta,\gamma,n-\alpha+\gamma}D_{+,n-\alpha+\gamma}\otimes e_n\\
l(E_1)D_-&=&\sum_{n,\alpha,\beta,\gamma}\, D_{_,n+\alpha-\gamma}\theta_{\gamma,\beta,\alpha,n+\alpha-\gamma}E_{1,\alpha,\beta,\gamma}\otimes e_n
\end{eqnarray}

where $\sum$ is as in (4.16). The SWN Ito table can be concisely written as

\begin{eqnarray}
d\mathcal{A}_t(D_{-})\,d\mathcal{A}_t^{\dagger}(D_{+})&=&(D_{-}^*|D_{+})\,dt\\
d\mathcal{L}_t(D_1)\,d\mathcal{L}_t(E_1)&=&d\mathcal{L}_t(D_1\circ E_1)\\
d\mathcal{L}_t(D_1)\,d\mathcal{A}_t^{\dagger}(D_{+})&=&d\mathcal{A}_t^{\dagger}(r(D_1)D_{+})\\
d\mathcal{A}_t(D_{-})\, d\mathcal{L}_t(E_1)&=& d\mathcal{A}_t(l(E_1)D_{-}).
\end{eqnarray}

\section{Representation free quantum stochastic calculus} 

In quantum probability, starting with an operator representation of a Lie algebra, operator analogues and generalizations of classical stochastic noise processes such as Brownian motion, the exponential process, the Poisson process and others, as well as of purely quantum noises such as the square of white noise can be constructed as in section 3.

The quantum stochastic calculi constructed in order to study evolutions driven by these operator noises were dependent on the particular representation and led to analytic difficulties such as the unboundedness of  solutions of stochastic differential equations, the non-invariance of their domain e.t.c. These problems were removed by the introduction of the  representation free calculus of \cite{5} which provided the analytic and topological framework for a unified treatment of  quantum noises and which is now in standard use. In this section  we provide  a brief review of the  representation free calculus.

Let $H$ be a complex separable Hilbert space, $B(H)$ the algebra of all bounded linear operators on $H$, $D$ a total subset of $H$, $(A_{t]})_{t \geq 0}$ an increasing family of $W^*$-algebras of operators on $H$, $A$ a $W^*$-algebra of operators on $H$ such that $A_{t]} \subseteq A$ for all $t \geq 0$, $A_{t]}^{\prime}$ the commutant of  $A_{t]}$ in $B(H)$, $H_{t]}(\xi)$ the closure for each $\xi \in D$ of the subspace $[A_{t]}]=\{\alpha \xi \,/\, \alpha \in A_{t]}\}$, $e_{t]}^{\xi}$ the orthogonal projection onto $H_{t]}(\xi)$, $L(D;H)$ the vector space of all linear operators $F$ with domain containing $D$ and such that the domain of the adjoint operator $F^*$ also includes $D$.

A random variable is an element of  $L(D;H)$. A stochastic process in $H$ is a family $F=\{F_t\,/\,t \geq 0\}$ of random variables such that for each $\eta \in D$ the map $t \rightarrow F_t \eta$ is Borel measurable. If  $F_t \in B(H)$ for each $t \geq 0$, and  $\sup_{0 \leq t \leq T} \|F_t\| < + \infty$ for each $T < + \infty$, then the process $F$ is called locally bounded. If  $F_t \geq 0$ for each $t \geq 0$ then $F$ is positive.

Let $D_t^{\prime}$ denote the linear span of $A_{t]}^{\prime}D$. An operator $F$ is $t$-adapted to $A_{t]}$ if $dom(F)=D_t^{\prime} \subseteq dom(F^*)$ and, for all $\alpha_t^{\prime} \in  A_{t]}^{\prime}$ and $\xi \in D$,  $F\alpha_t^{\prime}\xi=\alpha_t^{\prime}F\xi$ and $F^*\alpha_t^{\prime}\xi=\alpha_t^{\prime}F^*\xi$. Strong limits of sequences of $t$-adapted operators are  $t$-adapted. A stochastic process $F$ is adapted to the filtration $\{A_{t]}\,/\,t \geq 0\}$ if $F_t$ is adapted for all $t \geq 0$, and it is simple if $F_t=\sum_{k=1}^n \chi_{[t_k,t_{k+1})}(t) F_{t_k}$ for some finite integer $n$ and $0 \leq t_0 < t_1 <...<t_{n+1} <+ \infty$.

An additive process is a family $M=\{M(s,t)\,/\,0 \leq s \leq t\}$ of random variables such that for all $s \leq t$ the operator $M(s,t)$ is $t$-adapted and, for all $r, s, t$ with $r \leq s \leq t$, $M(r,t)=M(r,s)+M(s,t)$ and $M(t,t)=0$ on $D$. To every additive process $M$ we associate the adapted process $M(t)=M(0,t)$ and conversely to every adapted process $\{M(t)\,/\, t \geq 0\}$ we associate the additive process $M(s,t)=M(t)-M(s)$. An additive process $M$ is regular if, for all $\xi \in D$ and  $r \leq s \leq t$, $H_{r]}(\xi) \subseteq dom(\overline {M^\#} (s,t))$ and $M^*(s,t)D \subseteq  D_s^{\prime}$, where $\overline{ M}$ denotes the closure of $M$ and $M^\#$ denotes either $M$ or $M^*$.  

If $M$ is a regular additive process and $F$ is a simple adapted process then the left (resp. right) stochastic integral  of $F$ with respect to $M$ over the interval $[0,t]$ is defined as an operator on $D_t^{\prime}$ by

\begin{eqnarray}
\int_0^t\,dM_s\,F_s=\sum_{k=1}^n\overline{ M} (t_k \wedge t,t_{k+1} \wedge t_0)\,F_{t_k}|_{D_t^{\prime}}
\end{eqnarray}

\begin{eqnarray}
\mbox{ (resp. }\int_0^t\,F_s \, dM_s=\sum_{k=1}^n F_{t_k}\,
M (t_k \wedge t,t_{k+1} \wedge t_0))
\end{eqnarray}

An additive regular process is an integrator of scalar type if for each $\xi \in D$ there exists a finite set $J(\xi) \subseteq D$ such that for each simple process $F$ and $t \geq 0$

\begin{eqnarray}
\|\int_0^t\,dM_s\,F_s \xi\|^2 \leq c_{t,\xi} \int_0^t \, d \mu_{\xi}(s)\,\sum_{\eta \in J(\xi)}\|F_s\eta\|^2
\end{eqnarray}

and

\begin{eqnarray}
\|\int_0^t\,F_s^* \,dM_s^*\xi\|^2 \leq c_{t,\xi} \int_0^t \, d \mu_{\xi}(s)\,\sum_{\eta \in J(\xi)}\|F_s^*\eta\|^2
\end{eqnarray}

and also, for all $\eta \in J(\xi)$, 

\begin{eqnarray}
J(\eta) \subseteq J(\xi)
\end{eqnarray}

where $c_{t,\xi} \geq 0$  and $\mu_{\xi}$ is a positive, locally finite, non atomic measure. 

If $M$ is an integrator of scalar type and, for all $\xi \in D$, $\mu_{\xi}$  is absolutely continuous with respect to Lebesgue measure then the stochastic integral with respect to $M$ can  be extended by continuity to processes $F \in L^2_{loc}([0,+\infty),dM)$, the space of all adapted processes $F$ with the topology induced by the seminorms

\begin{eqnarray}
\|F\|^2_{\eta,t,\mu_{\xi}}=\int_0^t\,\|F_s\eta\|^2\,d\mu_{\xi}(s)
\end{eqnarray}

such that for all $\xi \in D$, $\eta \in J(\xi)$ and $0 \leq t <+\infty$

\begin{eqnarray}
\int_0^t\,(\|F_s\eta\|^2+\|F^*_s\eta\|^2)\,d\mu_{\xi}(s) < +\infty
\end{eqnarray}

The thus extended stochastic integral has the usual linearity properties, and the maps $(s,t) \rightarrow \int_s^t\,dM_z\,F_z$ and $(s,t) \rightarrow \int_s^t\,F_z\,dM_z$  are additive, adapted processes, strongly continuous on $D$.

Suppose that $I$ is a set of finite cardinality  and let $\{M_{\alpha}\,/\,\alpha \in I\}$ be a set of integrators of scalar type.  Consider the quantum stochastic differential equation

\begin{eqnarray}
X(t)=X_0+\int_0^t\,\sum_{\alpha \in I}\,dM_{\alpha}(s)\,F_{\alpha}(s)X(s)G_{\alpha}(s)
\end{eqnarray}

or in differential form

\begin{eqnarray}
dX(t)=\sum_{\alpha \in I}\,dM_{\alpha}(t)\,F_{\alpha}(t)X(t)G_{\alpha}(t),\,\,\,X(0)=X_0,\,\,t \geq 0
\end{eqnarray}

where $X_0 \in A_{0]}$ and the coefficients $F_{\alpha},G_{\alpha}$ are locally bounded adapted processes leaving the domain $D$ invariant. If for all $\alpha \in I$, for all adapted processes $P$ integrable with respect to $M_{\alpha }$, and for all continuous functions $u,v$ on $[0,+\infty)$ satisfying $u(s) \leq s, v(s) \leq s$ for all $s \in [0,+\infty)$, the family of operators on $H$ $\{s \rightarrow F_{\alpha}(u(s))\,P(s)\,G_{\alpha}(v(s))\}$ is an adapted process integrable with respect to $M_{\alpha }$, then the above quantum stochastic differential equation has a unique locally bounded solution $X$ which is strongly continuous on $D$.

The above result can easily be extended to equations of the form

\begin{eqnarray}
dX(t)=\sum_{\alpha \in I}\,dM_{\alpha}(t)\,F_{\alpha}(t)(w(t)X(t)+z(t))G_{\alpha}(t),\,\,\,X(0)=X_0,\,\,t \geq 0
\end{eqnarray}

where $w,z$ are locally bounded adapted processes leaving the domain $D$ invariant.

 In what follows we restrict the term  process to processes leaving the domain $D$ invariant, and we denote the *-algebra of all processes by $W$.

If $M=\{M_{\alpha}\,/\,\alpha \in I \}$ is a self-adjoint family of regular integrator processes (i.e $M_{\alpha} \in M$ implies $(M_{\alpha})^* \in M$) then for all $s,t \in [0,+\infty)$ with $s < t$, for all $\alpha, \beta \in I$, for all $\xi \in D$, and for all adapted processes $F$, the Meyer Bracket or mutual quadratic variation  of $M_{\alpha}$ and $M_{\beta}$, defined by

\begin{eqnarray}
[[M_{\beta},M_{\alpha}]](s,t)=\lim_{|\Pi| \rightarrow 0}\, \sum\,M_{\beta}(t_{k-1},t_k)\,M_{\alpha}(t_{k-1},t_k)\,F_s\xi
\end{eqnarray}

where $\Pi$ is a partition of $[s,t]$,  exists in norm and defines an additive adapted process satisfying

\begin{eqnarray}
&M_{\beta}(s,t)\,M_{\alpha}(s,t)\,F_s\xi= \{\int_s^t\,dM_{\beta}(r)\,M_{\alpha}(s,r)+\int_s^t\,dM_{\alpha}(r)\,M_{\beta}(s,r)+[[M_{\beta},M_{\alpha}]](s,t)\}\,F_s\xi&
\end{eqnarray}

Assuming, for each pair $(\alpha,\beta) \in I \times I$, the existence of a family $\{c^{\gamma}_{\alpha \beta}\,/\,\gamma \in I \}$ of structure processes such that for each $s,t \in [0,+\infty)$ with $s < t$

\begin{eqnarray}
[[M_{\beta},M_{\alpha}]](s,t)=\sum_{\gamma \in I}\,\int_s^t\,c^{\gamma}_{\alpha \beta} (r)\,dM_{\gamma}(r)
\end{eqnarray}

and defining the differential of an additive process $M$ by

\begin{eqnarray}
dM(t)=M(t,t+dt)
\end{eqnarray}

we obtain

\begin{eqnarray}
d(M_{\beta}\,M_{\alpha})(t)=dM_{\beta}(t)\,M_{\alpha}(t)+M_{\beta}(t)\,dM_{\alpha}(t)+dM_{\beta}(t)\,dM_{\alpha}(t)
\end{eqnarray}

where the last product on the right is computed with the use of the  It\^{o} table

\begin{eqnarray}
dM_{\beta}(t)\,dM_{\alpha}(t)=\sum_{\gamma \in I}\,c^{\gamma}_{\alpha \beta} (t)\,dM_{\gamma}(t)
\end{eqnarray}

Assuming further that the $M_{\alpha}\,'s$ satisfy a  $\rho$-commutation relation i.e that for each $\alpha \in I$ there exists an automorphism $\rho_{\alpha}$ of $W$ mapping adapted processes into adapted processes, and such that

\begin{eqnarray}
 \rho_{\alpha}^2=id
\end{eqnarray}

\medskip

where $id$ denotes the identity map,  and for every $\xi \in D$, $s < t$, and adapted processes $F$, $F_s\xi \in dom(M_{\alpha}(s,t))$, $M_{\alpha}(s,t)\xi \in dom(\rho_{\alpha}(F_s))$,

\begin{eqnarray}
M_{\alpha}(s,t)F_s\xi=\rho_{\alpha}(F_s)M_{\alpha}(s,t)\xi  
\end{eqnarray}

and

\begin{eqnarray}
F_sM_{\alpha}(s,t)\xi=M_{\alpha}(s,t)\rho_{\alpha}(F_s)\xi
\end{eqnarray}

i.e stochastic processes commute with the stochastic differentials of the integrators,  we can extend (5.16) to processes $X=\{X(t)\,/\,t \geq 0\}$ and  $Y=\{Y(t)\,/\,t \geq 0\}$ of the form

\begin{eqnarray}
X(t)=\sum_{\alpha \in I}\,\int_s^t\,dM_{\alpha}(z)\,H_{\alpha}(z)\,,\,\,Y(t)=\sum_{\alpha \in I}\,\int_s^t\,dM_{\alpha}(z)\,K_{\alpha}(z)
\end{eqnarray}

where $H_{\alpha}, K_{\alpha}$ are for each  $\alpha \in I$ strongly continuous adapted processes. 

We thus have

\begin{eqnarray}
d(X \, Y)(t)=dX(t)\,Y(t)+X(t)\,dY(t)+dX(t)\,dY(t)
\end{eqnarray}

where

\begin{eqnarray}
dX(t)=\sum_{\alpha \in I}\,dM_{\alpha}(t)\,H_{\alpha}(t),\,\,dY(t)=\sum_{\alpha\in I}\,dM_{\alpha}(t)\,K_{\alpha}(t)
\end{eqnarray}
 
$dX(t)\,dY(t)$ is computed with the use of the It\^{o} table (5.16), and (5.21) is understood weakly on $D$ i.e for all $\xi,\eta \in D$ 

\begin{eqnarray}
<d(X \, Y)(t)\xi,\eta>=<[dX(t)\,Y(t)+X(t)\,dY(t)+dX(t)\,dY(t)]\xi,\eta>
\end{eqnarray}

\section{Quantum stochastic differential equations from white noise equations and quantum mechanics} 

In the Schroedinger picture of Quantum Mechanics the initial state of a quantum system is described by a ket-vector $|{\psi}(0)>$ evolving under the influence of a Hamiltonian operator $H=H_0+H_I$, where $H_0$ and $H_I$ are the "free" and "interaction" parts, respectively, to a state given at time $t$ by a ket-vector  $|{\psi}(t)>$ satisfying the Schroedinger equation

\begin{eqnarray}
i\,\hbar\,d|\psi (t)>=H\,|\psi (t) >\,dt 
\end{eqnarray}

or in operator form

\begin{eqnarray}
|{\psi}(t)>=V_t \,|{\psi}(0)>
\end{eqnarray}

where the unitary evolution operator $V_t$ is given by

\begin{eqnarray}
V_t=e^{-\frac{i}{ \hbar}\,t\,H}.
\end{eqnarray}

In the Heisenberg picture of Quantum Mechanics, it is the observables $X$, i.e the  self-adjoint operators on the system space that vary with time generating a flow

\begin{eqnarray}
X_t=V^*_t\,X\,V_t
\end{eqnarray}

where

\begin{eqnarray}
V_t=e^{-\frac{i}{ \hbar}\,t\,H}
\end{eqnarray}

and

\begin{eqnarray}
dX_t=\frac{i}{ \hbar}\,[H,X_t]\,dt,\,\,\,\,\,\,\, X_0=X
\end{eqnarray}

In  Interaction Representation one has

\begin{eqnarray}
|{\psi}_I(t)>=e^{\frac{i}{ \hbar}\,t\,H_0}\,|\psi (t)>=e^{\frac{i}{ \hbar}\,t\,H_0}\,e^{-\frac{i}{ \hbar}\,t\,H}\,|\psi (0)>=U_t\,|\psi (0)>
\end{eqnarray}

where the propagator

\begin{eqnarray}
U_t=e^{\frac{i}{ \hbar}\,t\,H_0}\,e^{-\frac{i}{ \hbar}\,t\,H}=e^{\frac{i}{ \hbar}\,t\,H_0}\,V_t
\end{eqnarray}

satisfies

\begin{eqnarray}
dU_t=-i\,H_I(t)\,U_t\,dt,\,\,\,\,\,\,  U_0=1
\end{eqnarray}

where

\begin{eqnarray}
H_I(t)=\frac{1}{ \hbar}\,e^{\frac{i}{ \hbar}\,t\,H_0}\,H_I\,e^{-\frac{i}{ \hbar}\,t\,H_0}
\end{eqnarray}

In the Stochastic Limit approach, developed by L.  Accardi, Y. G. Lu, and I. Volovich in \cite{6}, starting with a Hamiltonian  
$H=H(\lambda)$, where $\lambda \in \mathbb{R}$ is small (e.g $H=H_0+\lambda\,H_I$, $\lambda$: coupling constant), equation (6.9) takes the form

\begin{eqnarray}
dU_t^{(\lambda)}=-i\,H_I^{(\lambda)}(t)\,U_t^{(\lambda)}\,dt,\,\,\,\,U_0^{(\lambda)}=1
\end{eqnarray}

where $t \rightarrow +\infty$ sescribes the long-term behavior, (scattering theory), and $\lambda \rightarrow 0$ describes the weak effects e.g weak coupling, low density (perturbation theory). The time rescaling  $t \rightarrow \frac{t}{{\lambda}^2}$ puts things together by considering the long term cummulative effects of weak actions. The solution of the rescaled Hamiltonian equation (6.11)

\begin{eqnarray}
dU_{ \frac{ t }{ {\lambda }^2 } }^{ (\lambda ) }=-\frac{i}{ {\lambda }^2} \,H_I^{ (\lambda ) }(\frac{t}{  {\lambda }^2} )\,U_{\frac{t}{{\lambda }^2} }^{ (\lambda ) }\,dt,\,\,\,\,\,\,U_0^{ (\lambda ) }=1
\end{eqnarray}

converges, in the sense of matrix elements of collective vectors of \cite{6},   as $\lambda \rightarrow 0$ to a process $U_t$ satisfying

\begin{eqnarray}
dU_t=-i\,H_t\,U_t\,dt, \,\,\,\,\,\,U_0=1
\end{eqnarray}

whre $H_t$ is a singular Hamiltonian which can be written in terms of the Hida white noise functionals $b_t$ and $b_t^{\dagger}$ as

\begin{eqnarray}
H_t&=&\int_0^t\,\left(C_0(s)+\sum_{n,k}\,C_{n,k}(s){b_t^{\dagger}}^n\,b_t^k\right)\,ds=\int_0^t\,C_0(s)\,ds+\sum_{n,k}\,C_{n,k}(s)\,dB^n_k(s)
\end{eqnarray}

where we have used the noise differentials notation 

\begin{eqnarray}
dB^n_k(t)={b_t^{\dagger}}^n\,b_t^k\,dt
\end{eqnarray}

For $n,k\in\{0,1\}$ equation (6.13) reduces to a  quantum stochastic differential equation of the type discussed in section 3.

\section{Quadratic cost control }

\subsection{Control of first order white noise Langevin flows}

 In the notation of sections 3 and 4, we consider a quantum flow $\{j_t(X)/\,t\in [0,T]\}$ of bounded linear operators on  $\mathcal{H} \otimes \Gamma$ defined by $j_t(X)=U_t^*\,X \,U_t$ where $\mathcal{H}$ is a  separable Hilbert space, $\Gamma$ is the Boson Fock space over $L^2(\mathbb{R}_+,\mathbb{C})$, $X$ is a self-adjoint operator on $\mathcal{H}$ identified with its ampliation $X \otimes 1$ to $\mathcal{H} \otimes \Gamma$, and $U=\{U_t \, / \, t \geq 0 \}$ is a unitary evolution satisfying on  $\mathcal{H} \otimes \Gamma$ a quantum stochastic differential equation of the form

\begin{equation}
dU_t=-((iH+\frac{1}{2}\,L^*L)\,dt+ L^* \,W\,dA_t -L\, dA_t^{\dagger}+(1-W)\,d\Lambda_t)\,U_t,\,t\in [0,T]
\end{equation}
 with adjoint

\begin{equation}
dU_t^*=-U_t^*\,((-iH+\frac{1}{2}\,L^*L)\,dt- L^* dA_t +W^*\,L\, dA_t^{\dagger}+(1-W^*)\,d\Lambda_t),\,t \in [0,T]
\end{equation}

and initial conditions

\begin{equation}
U_0=U_0^*=1
\end{equation}

where $H,\,L,\,W$ are bounded operators on $\mathcal{H}$ with $H$ self-adjoint and $W$ unitary. These conditions guarantee the existence uniqueness and unitarity of the solution of (7.1), (7.2).

Using the It\^{o} table for first order white noise we can show  that the flow $\{j_t(X)/\,t\in [0,T]\}$ satisfies the  quantum stochastic differential equation 

\begin{eqnarray}
dj_t(X)&=&j_t(i[H,X]-\frac{1}{2}(L^*LX+XL^*L-2L^*XL))\,dt+j_t([L^*,X]\,W)\,dA_t \\
&+&j_t(W^*\,[X,L])\,dA_t^{\dagger}+ +j_t(W^*\,X\,W-X)\,d\Lambda_t\nonumber
\end{eqnarray}

with initial condition

\begin{eqnarray}
j_0(X)=X,\,t\in [0,T]
\end{eqnarray}

\begin{definition}
On a finite time interval $[0,T]$, the cost functional for the solution of the quantum Langevin equation (7.3) is given by: 
\begin{equation}
J_{\xi,T}(L,W)=\int_0^T\,[\,\|j_t(X)\,\xi\|^2+\frac{1}{4}\|j_t(L^*L)\,\xi\|^2\, ]\,dt+\frac{1}{2}\|j_T(L)\,\xi\|^2
\end{equation}
where  $\xi$ is an arbitrary vector in the exponential domain of $\mathcal{H}\otimes \Gamma$. 
\end{definition}

  The  functional  $J_{\xi,T}(\cdot)$ of Definition 4, which we propose as suitable for the evaluation of the performance of a quantum flow,  is derived from a quantum extension of the classical quadratic performance criterion for operator processes $X=\{X(t)\,/\,t \geq 0\}$ that are solutions of quantum stochastic differential equations such as (5.9), in the case when $X(t)=U(t)$ is a unitary operator for each $t \geq 0$. The operators $X(t)$ are not necessarily self-adjoint so they do not in general correspond to quantum mechanical observables in the classical sense unless special assumptions are made on the coefficients of the defining quantum stochastic differential equations. However, this is not a problem since the conrol problem for the $X(t)\,'s$ is to be used as a passage to the solution of the control problem for the corresponding quantum flow  $j_t(\cdot )$ which does consist of quantum mechanical observables. 

 Thinking of $L$ and $W$ as  controls we interpret the first term of the right hand side of (7.6) as a measure of the size of the flow over $[0,T]$, the second as a measure of the control effort over $[0,T]$ and the third as a "penalty" for allowing the evolution to go on for a long time. We consider the problem of controlling the size of such a flow by minimizing the cost functional $ J_{\xi,T}(L,W)$ of (7.6). 

\begin{theorem}
Let $U=\{U_t\,/\,t\geq 0\}$ be a  process satisfying the  quantum stochastic differential equation 

\begin{equation}
dU_t=(F\,U_t+u_t)\,dt+ \Psi \,U_t\, dA_t+ \Phi \,U_t\, dA_t^{\dagger}+Z\,U_t\,d\Lambda_t,\,U_0=1,\,t\in [0,T]
\end{equation}
with adjoint

\begin{equation}
dU_t^*=(U_t^*\,F^*+u_t^*)\,dt+ U_t^*\,\Psi^* \,dA_t^{\dagger}+ U_t^*\,\Phi^* dA_t+U_t^*\,Z^*\,d\Lambda_t,\,U_0^*=1,\,t\in [0,T]
\end{equation}

where $T > 0$ is a fixed finite horizon,  the coefficients $F,\,\Psi,\,\Phi, \,Z$  are bounded operators on the system space $\mathcal{H}$ and $u_t$ is of the form $-\Pi \,U_t$ for some positive bounded system operator $\Pi$.

Then the functional  

\begin{eqnarray}
 Q_{\xi,T}(u)=\int_0^T\,[<U_t \,\xi,X^2\,U_t\, \xi>+<u_t\, \xi,u_t\, \xi>]\,dt-<u_T\, \xi,U_T \,\xi>
\end{eqnarray}

where $X$ is a system space observable, identified with its ampliation 
$X \otimes I$ to $\mathcal{H}  \otimes \Gamma$, is minimized over the set of feedback control processes of the form $u_t=-\Pi \,U_t$, by choosing $\Pi$ to be a bounded,  positive, self-adjoint   system operator satisfying

\begin{eqnarray}
&\Pi \, F+F^* \Pi+{\Phi}^* \Pi \Phi-{\Pi}^2+X^2=0 &\\
&\Pi \,\Psi + {\Phi}^* \, \Pi + {\Phi}^* \, \Pi \, Z=0&\\
&\Pi \,Z +Z^* \,\Pi + Z^* \, \Pi \,Z=0.&
\end{eqnarray}

The minimum value is $<\xi,\Pi \xi>$.  We recognize (7.10) as the algebraic Riccati equation.

\end{theorem}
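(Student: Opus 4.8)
The plan is to mimic the classical verification argument for the linear-quadratic regulator: guess that the optimal cost has the quadratic form $t \mapsto \langle U_t\xi, \Pi U_t\xi\rangle$ for a suitable positive operator $\Pi$, differentiate this quantity along the flow using the quantum It\^o calculus of Section 3, and show that the Riccati equations (7.10)--(7.12) are exactly the conditions under which the cost functional is minimized by the feedback law $u_t = -\Pi U_t$. First I would compute $d\langle U_t\xi, \Pi U_t\xi\rangle$ by applying the product rule $d(U_t^*\Pi U_t) = (dU_t^*)\Pi U_t + U_t^*\Pi (dU_t) + (dU_t^*)\Pi(dU_t)$, substituting (7.7) and (7.8) and reading off the $dt$, $dA_t$, $dA_t^\dagger$, $d\Lambda_t$ coefficients via the It\^o table. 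Since $\xi$ is in the exponential domain, Theorem 1 lets me pass to vacuum-type expectations and kill the martingale ($dA$, $dA^\dagger$, $d\Lambda$) contributions only after pairing with $\psi(0)$; more carefully, I would work with the scalar function $\phi(t) = \langle U_t\xi, \Pi U_t\xi\rangle$ and note that its derivative, when $\xi=\eta\otimes\psi(f)$, picks up the martingale terms multiplied by $\bar f(t), f(t), |f(t)|^2$ as in Theorem 1. With the feedback substitution $u_t = -\Pi U_t$, so that $F$ in (7.7) is understood to already incorporate the control or, alternatively, with $u_t$ kept explicit, the $dt$-coefficient of $d(U_t^*\Pi U_t)$ becomes
\begin{eqnarray*}
U_t^*\bigl(F^*\Pi + \Pi F + \Phi^*\Pi\Phi\bigr)U_t + u_t^*\Pi U_t + U_t^*\Pi u_t,
\end{eqnarray*}
and the $dA_t^\dagger$, $dA_t$, $d\Lambda_t$ coefficients are, respectively, $U_t^*(\Psi^*\Pi + \Phi^*\Pi + \Phi^*\Pi Z)U_t$ and its adjoint-type partner, and $U_t^*(Z^*\Pi + \Pi Z + Z^*\Pi Z)U_t$.

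The key algebraic step is the standard completion-of-the-square: I would write the integrand of $Q_{\xi,T}(u)$ for a general feedback $u_t = -KU_t$ and add and subtract $d\langle U_t\xi,\Pi U_t\xi\rangle$ integrated from $0$ to $T$, which telescopes to $\langle U_T\xi,\Pi U_T\xi\rangle - \langle \xi,\Pi\xi\rangle$. Using $\|j_t(X)\xi\|^2$-type terms rewritten as $\langle U_t\xi, X^2 U_t\xi\rangle$ and $\langle u_t\xi,u_t\xi\rangle = \langle U_t\xi, K^*K U_t\xi\rangle$, the $dt$-integrand becomes, up to the boundary term and the martingale terms which vanish in expectation against $\psi(0)$,
\begin{eqnarray*}
\bigl\langle U_t\xi,\; (X^2 + K^*K - K^*\Pi - \Pi K + F^*\Pi + \Pi F + \Phi^*\Pi\Phi)\, U_t\xi\bigr\rangle,
\end{eqnarray*}
plus the forced boundary term $-\langle u_T\xi,U_T\xi\rangle = \langle U_T\xi, K\,U_T\xi\rangle$ which I want to cancel against $\langle U_T\xi,\Pi U_T\xi\rangle$, forcing $K=\Pi$ at the terminal time and hence, by the ansatz, $K=\Pi$ throughout. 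Completing the square, $K^*K - K^*\Pi - \Pi K = (K-\Pi)^*(K-\Pi) - \Pi^2$, so the $dt$-integrand is minimized pointwise by $K=\Pi$ precisely when the residual operator $X^2 - \Pi^2 + F^*\Pi + \Pi F + \Phi^*\Pi\Phi$ vanishes, which is (7.10); the vanishing of the $dA_t^\dagger$ and $d\Lambda_t$ coefficients, needed so that the martingale terms genuinely drop out (and so that $\phi$ is differentiable with the claimed derivative independent of $f$), gives (7.11) and (7.12). The minimum value is then the leftover boundary contribution $\langle\xi,\Pi\xi\rangle = \langle\xi,\Pi\xi\rangle$ evaluated at $t=0$, since $U_0=1$.

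The main obstacle I anticipate is the rigorous handling of the martingale terms: the verification argument only works cleanly if the stochastic-integral terms in $d(U_t^*\Pi U_t)$ vanish after taking the relevant (vacuum or exponential-vector) expectation, and for a general exponential vector $\xi=\eta\otimes\psi(f)$ with $f\neq 0$ these terms carry factors $\bar f(t)$, $f(t)$, $|f(t)|^2$ that do not automatically vanish. This is exactly why conditions (7.11) and (7.12) must be imposed as part of the optimality requirement rather than derived: they force the $dA_t^\dagger$, $dA_t$, and $d\Lambda_t$ coefficients of $d(U_t^*\Pi U_t)$ to be identically zero as operators, so that $U_t^*\Pi U_t$ has a genuine $dt$-differential and the telescoping identity holds verbatim for every $\xi$ in the exponential domain. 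I would therefore present (7.10)--(7.12) as the coupled system characterizing the optimal $\Pi$, verify that positivity and self-adjointness of $\Pi$ are consistent with these equations (self-adjointness of the left sides being manifest once $\Pi=\Pi^*$), and invoke the existence result for operator algebraic Riccati equations alluded to in the introduction (cf. \cite{9}) to guarantee that a bounded positive solution exists under the stated hypotheses on $F,\Psi,\Phi,Z,X$.
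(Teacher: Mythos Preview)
Your proposal is correct and follows essentially the same verification argument as the paper: define $\theta_t=\langle\xi,U_t^*\Pi U_t\,\xi\rangle$, compute $d\theta_t$ via the quantum It\^o product rule, use (7.10)--(7.12) to kill the $dA_t$, $dA_t^\dagger$, $d\Lambda_t$ coefficients and reduce the $dt$ part, then telescope and complete the square to obtain $Q_{\xi,T}(u)=\int_0^T\|(u_t+\Pi U_t)\xi\|^2\,dt+\langle\xi,\Pi\xi\rangle$. Your identification of the role of (7.11)--(7.12) as forcing the martingale coefficients to vanish identically (rather than only in vacuum expectation) is exactly what the paper does; the only slip is a harmless swap of the $dA_t$ and $dA_t^\dagger$ coefficients, which are adjoints of each other and so vanish together.
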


\begin{proof} 

Let 

\begin{eqnarray}
&\theta_t=<\xi,U_t^*\,\Pi \,U_t\,\xi>.&
\end{eqnarray}

Using the identity $d(xy)=x\,dy+dx\,y+dx\,dy$ we obtain

\begin{eqnarray}
&d\theta_t=<\xi,d(U_t^* \,\Pi \,U_t)\,\xi>=<\xi,(dU^*_t \, \Pi \,U_t+U_t^* \,\Pi \,dU_t+dU_t^* \,\Pi \,dU_t)\,\xi>& 
\end{eqnarray}

which, after replacing $dU_t$ and $dU_t^*$ by (7.7) and (7.8) respectively  and using the It\^{o} table of section 3,  becomes

\begin{eqnarray}
&d\theta_t=<\xi, U_t^*\,((F^*\,\Pi+\Pi\,F+\Phi^*\,\Pi\,\Phi)\,dt+(\Phi^*\,\Pi+\Pi\,\Psi+\Phi^*\,\Pi\,Z)\,dA_t&\\
&+(\Psi\,\Pi^*+\Pi\,\Phi+Z^*\,\Pi\,\Phi)\,dA_t^{\dagger}+(Z^*\,\Pi+\Pi\,Z+Z^*\,\Pi\,Z)\,d\Lambda_t)\,U_t\,\xi>&\nonumber\\
&+<\xi,(u_t^*\,\Pi\,U_t+U_t^*\,\Pi\,u_t)\,dt\,\xi>.&\nonumber
\end{eqnarray}

and by (7.10)-(7.12)

\begin{eqnarray}
&d\theta_t=<\xi, U_t^*\,( \Pi^2-X^2)\,U_t\,dt\,\xi>   +<\xi,(u_t^*\,\Pi\,U_t+U_t^*\,\Pi\,u_t)\,dt\,\xi>.&
\end{eqnarray}

By (7.11)

\begin{eqnarray}
&\theta_T-\theta_0=<\xi, U_T^*\,\Pi\,U_T\,\xi>-<\xi,\Pi\,\xi>.&
\end{eqnarray}

while by (7.14)

\begin{eqnarray}
&\theta_T-\theta_0=\int_0^T\,(<\xi, U_t^*\,( \Pi^2-X^2)\,U_t\,\xi>   +<\xi,(u_t^*\,\Pi\,U_t+U_t^*\,\Pi\,u_t)\,\xi>)\,dt.&
\end{eqnarray}

By (7.15) and (7.16)

\begin{eqnarray}
&<\xi, U_T^*\,\Pi\,U_T\,\xi>-<\xi,\Pi\,\xi>=\int_0^T\,(<\xi, U_t^*\,( \Pi^2-X^2)\,U_t\,\xi>   +<\xi,(u_t^*\,\Pi\,U_t+U_t^*\,\Pi\,u_t)\,\xi>)\,dt.&
\end{eqnarray}

Thus

\begin{eqnarray}
&Q_{\xi,T}(u)=(<\xi, U_T^*\,\Pi\,U_T\,\xi>-<\xi,\Pi\,\xi>)+Q_{\xi,T}(u) (<\xi, U_T^*\,\Pi\,U_T\,\xi>-<\xi,\Pi\,\xi>).&
\end{eqnarray}

Replacing the first parenthesis on the right hand side of (7.18) by (7.17), and $Q_{\xi,T}(u)$ by (7.9) we obtain after cancellations

\begin{eqnarray}
Q_{\xi,T}(u)&=&\int_0^T \,(<\xi,(U_t^* \,{\Pi}^2\,U_t+u_t^* \,\Pi \,U_t+U_t^* \,\Pi \,u_t+u_t^* \,u_t)\,\xi>\,dt+<\xi,\Pi\,\xi>\\
&=&\int_0^T \,||(u_t+\Pi\,U_t)\,\xi||^2\,dt+<\xi,\Pi\,\xi>\nonumber
\end{eqnarray}

which is clearly minimized by $u_t=-\Pi\,U_t$ and the minimum is $<\xi,\Pi\,\xi>$.

\end{proof}

\begin{definition}The pair ($i\,H$, $X$) is called stabilizable if there exists a bounded
system operator $K$ such that $i\,H+KX$ is the generator of an
asymptotically stable semigroup $\mathcal{F}_t$ i.e there exist constants $M>0$ and $\omega <0$ such that $||\mathcal{F}_t||\leq M\,e^{\omega\,t}$.
\end{definition}

\begin{theorem}
Let $X$ be a bounded self-adjoint system operator such that the pair ($i\,H$, $X$) is stabilizable. The quadratic performance functional (7.6) associated with the quantum stochastic flow $\{j_t(X)=U_t^*\,X \,U_t\,/\,t \geq 0\}$,  where $U=\{U_t\,/\,t\geq 0\}$ is the solution of (7.1), is minimized by 

\begin{eqnarray}
&L=\sqrt{2}\,\Pi^{1/2}\,W_1\,\,\,(\mbox{polar decomposition of}\,\,L)&  
\end{eqnarray}
and 

\begin{eqnarray}
&W=W_2&
\end{eqnarray}
where  $\Pi$ is a positive self-adjoint solution of the ``algebraic Riccati equation''
\begin{eqnarray}
&i\,[H,\Pi]+\Pi^2+X^2=0&  
\end{eqnarray}
and $W_1$, $W_2$ are bounded unitary system operators  commuting with  $\Pi$. Moreover 

\begin{eqnarray}
&\min_{L,W}\, J_{\xi,T}(L,W)= <\xi,\Pi\,\xi>&
\end{eqnarray}
independently of $T$.
\end{theorem}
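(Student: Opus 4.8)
The plan is to reduce Theorem~5 to the already-proven Theorem~4 by recognizing that the Langevin flow $\{j_t(X)\}$ gives rise to a unitary process $U_t$ whose defining quantum stochastic differential equation (7.1) is precisely of the form (7.7), and whose cost functional (7.6) matches (7.9) once the penalty coefficients in Theorem~4 are specialized appropriately. First I would rewrite (7.1) in the notation of (7.7), reading off
\begin{eqnarray*}
F=-\left(iH+\tfrac12 L^*L\right),\quad \Psi=-L^*W,\quad \Phi=L,\quad Z=-(1-W)=W-I,
\end{eqnarray*}
and set the feedback coefficient to be $\Pi$ from (7.23). With these identifications, the three algebraic Riccati conditions (7.10)--(7.12) of Theorem~4 become, after substitution, three operator equations in $H,L,W,\Pi$; the claim is that (7.22) together with the ansatz $L=\sqrt2\,\Pi^{1/2}W_1$ and $W=W_2$ with $W_1,W_2$ unitary and commuting with $\Pi$ solves all three simultaneously.

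The key computational step is to verify each of (7.10)--(7.12) under this ansatz. For (7.12): with $Z=W-I$ and $W$ unitary commuting with $\Pi$, one computes $Z^*\Pi Z=(W^*-I)\Pi(W-I)=\Pi-W^*\Pi-\Pi W+\Pi$ (using $W^*\Pi W=\Pi$), while $\Pi Z+Z^*\Pi=\Pi W-\Pi+W^*\Pi-\Pi$, and the sum telescopes to $0$ — so (7.12) holds for \emph{any} unitary $W$ commuting with $\Pi$, which is why $W=W_2$ is essentially free. For (7.11): substituting $\Psi=-L^*W$, $\Phi=L$, $Z=W-I$, the left side is $-\Pi L^*W+L^*\Pi+L^*\Pi(W-I)=-\Pi L^*W+L^*\Pi W$; using $L^*=\sqrt2\,W_1^*\Pi^{1/2}$ and $\Pi^{1/2}$, $W_1$ commuting with $\Pi$ one gets $L^*\Pi=\Pi L^*$, so this vanishes identically. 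For (7.10): $F+F^*=-2\cdot\tfrac12 L^*L - i[H,\cdot]$ contributions combine so that $\Pi F+F^*\Pi = i[H,\Pi]-\tfrac12(\Pi L^*L+L^*L\Pi)$, and since $L^*L=2W_1^*\Pi^{1/2}\Pi^{1/2}W_1=2\Pi$ commutes with $\Pi$ this is $i[H,\Pi]-2\Pi^2$; adding $\Phi^*\Pi\Phi=L^*\Pi L=2\Pi^{1/2}W_1^*\Pi\,W_1\Pi^{1/2}$, but wait — here one must be careful whether $W_1$ commutes with $\Pi$: since it does, $W_1^*\Pi W_1=\Pi$, giving $\Phi^*\Pi\Phi=2\Pi^2$; then (7.10) reads $i[H,\Pi]-2\Pi^2+2\Pi^2-\Pi^2+X^2=i[H,\Pi]-\Pi^2+X^2$, which by (7.22) equals $0$. (The sign of $\Pi^2$ in (7.22) versus (7.10) is the one bookkeeping point to nail down carefully.)

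Having verified that the ansatz solves the Riccati system, Theorem~4 applies and gives that the functional $Q_{\xi,T}(u)$ with $u_t=-\Pi U_t$ attains its minimum value $\langle\xi,\Pi\xi\rangle$. The remaining step is to connect $Q_{\xi,T}$ with $J_{\xi,T}$: one checks that $\|j_t(X)\xi\|^2=\|U_t^*XU_t\xi\|^2=\|XU_t\xi\|^2=\langle U_t\xi,X^2U_t\xi\rangle$ by unitarity of $U_t$, that $\tfrac14\|j_t(L^*L)\xi\|^2=\tfrac14\|L^*LU_t\xi\|^2=\|\Pi U_t\xi\|^2=\langle u_t\xi,u_t\xi\rangle$ using $L^*L=2\Pi$, and that $\tfrac12\|j_T(L)\xi\|^2=\tfrac12\|LU_T\xi\|^2$; one then matches $-\langle u_T\xi,U_T\xi\rangle=\langle\Pi U_T\xi,U_T\xi\rangle=\tfrac12\|L^*LU_T\xi\|^{?}$ — this terminal-term identification, reconciling $-\langle u_T\xi,U_T\xi\rangle$ in (7.9) with $\tfrac12\|j_T(L)\xi\|^2$ in (7.6), is the main obstacle and must use $L^*L=2\Pi$ and positivity/self-adjointness of $\Pi$ to write $\langle\Pi U_T\xi,U_T\xi\rangle=\tfrac12\langle L^*LU_T\xi,U_T\xi\rangle=\tfrac12\|LU_T\xi\|^2$. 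Once the three terms are matched we conclude $J_{\xi,T}(L,W)=Q_{\xi,T}(u)\ge\langle\xi,\Pi\xi\rangle$ with equality at the stated $L,W$, and since $\langle\xi,\Pi\xi\rangle$ does not involve $T$, the minimum is $T$-independent, proving (7.24). The stabilizability hypothesis on $(iH,X)$ is what guarantees (7.22) admits a positive self-adjoint solution $\Pi$, invoked exactly as in the classical algebraic Riccati theory of Section~2.
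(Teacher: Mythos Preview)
Your overall strategy --- reducing to the previous theorem by identifying (7.1) with (7.7) and verifying that the proposed $L,W$ solve the Riccati system --- is exactly the paper's approach. However, your identification of $F$ is wrong, and this is the source of the sign discrepancy you flagged. You set $F=-(iH+\tfrac12 L^*L)$, absorbing the entire $dt$-coefficient of (7.1) into the drift; but then there is nothing left for the control term $u_t$ in (7.7). Yet later, when matching $J_{\xi,T}$ with $Q_{\xi,T}$, you take $u_t=-\Pi U_t=-\tfrac12 L^*L\,U_t$. These are inconsistent: with both in force the $dt$-coefficient of (7.7) would be $-(iH+L^*L)U_t$, which no longer matches (7.1).

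The correct split, which the paper uses, is $F=-iH$ and $u_t=-\tfrac12 L^*L\,U_t$, so that $\Pi=\tfrac12 L^*L$ is forced by the identification itself. With $F=-iH$ one gets simply $\Pi F+F^*\Pi=i[H,\Pi]$ (no $L^*L$ contribution), and then (7.10) reads $i[H,\Pi]+L^*\Pi L-\Pi^2+X^2=0$; using $L^*\Pi L=2\Pi^2$ this becomes $i[H,\Pi]+\Pi^2+X^2=0$, the stated Riccati equation with the correct sign. Your computation picked up an extra $-2\Pi^2$ from double-counting the $\tfrac12 L^*L$ term in both $F$ and $u$, landing on $i[H,\Pi]-\Pi^2+X^2=0$. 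This is not a bookkeeping nuance but a genuine error. Once $F$ is corrected, your verifications of (7.11) and (7.12), the matching of the running and terminal terms in $J_{\xi,T}$ and $Q_{\xi,T}$, and the invocation of stabilizability for existence of $\Pi$ all go through as written. (A minor stylistic difference: the paper \emph{derives} the commutation relations $[L,\Pi]=[W,\Pi]=0$ from (7.11)--(7.12), whereas you posit the commuting-unitary ansatz and verify; both directions are fine here.)
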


\begin{remark}
Equation (7.24) is a special case of the algebraic Riccati equation (ARE).
If the pair ($i\,H$, $X$) is stabilizable, then (7.24) has a positive self-adjoint solution $\Pi$.
\end{remark}

\begin{proof}

Looking at (7.1) as (7.7) with $u_t=-\frac{1}{2}\,L^*\,L\,U_t$, $F=-i\,H$, $\Psi=-L^*\,W$, $\Phi=L$, and $Z=W-1$,   (7.6) is identical to (7.9). Moreover, equations  (7.10)-(7.12) become

\begin{eqnarray}
&i\,[H,\Pi]+L^*\,\Pi\,L-\Pi^2+X^2=0&\\
&L^*\,\Pi-\Pi\,L^*\,W+L^*\,\Pi\,(W-1)=0&\\
&(W^*-1)\,\Pi+\Pi\,(W-1)+(W^*-1)\,\Pi\,(W-1)=0.&
\end{eqnarray}

By the self-adjointness of $\Pi$, (7.27) implies that
\begin{eqnarray}
&[L,\Pi]=[L^*,\Pi]=0&
\end{eqnarray}
while (7.28) implies that
\begin{eqnarray}
&[W,\Pi]=[W^*,\Pi]=0&
\end{eqnarray}

i.e (7.23).  By (7.29) and the fact that in this case

\begin{eqnarray}
&\Pi=\frac{1}{2}\,L^*\,L \mbox{  i.e  }L^*\,L=2\,\Pi&
\end{eqnarray}

equation (7.26) implies (7.24). Equations (7.29) and (7.31) also imply that 

\begin{equation}
[L,L^*]=0\,\,(\mbox{i.e $L$ is normal})  
\end{equation}
which implies (7.22). 
\end{proof}

\subsection{Control of square of white noise Langevin flows  }

In the square of white noise case, equations (7.1) and (7.2) are replaced, respectively,  by 

\begin{eqnarray}
&dU_t=((-\frac{1}{2}\,(D_-^*|D_-^*)+iH)\,dt+d\mathcal{A}_t(D_-)+d\mathcal{A}^{\dagger}_t(-r(W)D_-^*)+d\mathcal{L}_t(W-I))\,U_t&
\end{eqnarray}

and

\begin{eqnarray}
&dU_t^*=U_t^*\,((-\frac{1}{2}\,(D_-^*|D_-^*)-iH)\,dt+d\mathcal{A}^{\dagger}_t(D_-^*)+d\mathcal{A}_t(-l(W^*)D_-)+d\mathcal{L}_t(W^*-I))&
\end{eqnarray}

with  initial conditions

\begin{eqnarray}
&U_0=U_0^*=1&
\end{eqnarray}

 where $H$ is any bounded self-adjoint system operator, $W$ is a $\circ$-product unitary operator such that $r(W)r(W^*)=r(W^*)r(W)=1$, $I$ is the $\circ$-product identity, $D_-$ is an arbitrary  operator , and $1$ is the identity operator on $\mathcal{H} \otimes \Gamma$. These conditions guarantee the existence uniqueness and unitarity of the solutions.

\begin{proposition}

In the case of the square of white noise, the quantum Langevin equation (7.4) is replaced by

\begin{eqnarray}
&dj_t(X)=j_t(i\,[X,H]-\frac{1}{2}((D_-^*|D_-^*)\,X+X\,(D_-^*|D_-^*))+(r(W)D_-^*|X\,r(W)D_-^*))\,dt&\\
&+j_t(d\mathcal{A}^{\dagger}_t(D_-^*\,X-r(W^*\,X)r(W)D_-^*))+j_t(d\mathcal{A}_t(X\,D_--l(X\,W)l(W^*)D_-))+j_t(d\mathcal{L}_t(W^* \,X \circ W-X))&\nonumber
\end{eqnarray}

\bigskip

with $j_0(X)=X,\,t\in [0,T]$.
\end{proposition}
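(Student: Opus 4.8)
The plan is to derive the quantum Langevin equation (7.39) for $j_t(X) = U_t^* X U_t$ by direct computation, in exact parallel with the first-order case (7.4), but now using the square of white noise It\^o table (4.33)--(4.36) in place of the Hudson--Parthasarathy table. First I would write $d(j_t(X)) = d(U_t^* X U_t)$ and expand it via the Leibniz-type rule $d(abc) = (da)bc + a(db)c + ab(dc) + (da)(db)c + (da)b(dc) + a(db)(dc) + (da)(db)(dc)$, valid here because the relevant differentials commute past the bounded adapted operators (this is the $\rho$-commutation structure discussed around (5.17)--(5.20), which in the SWN Fock representation holds). Since $X$ has zero differential, the terms collapse to $d(U_t^*) X U_t + U_t^* X dU_t + d(U_t^*) X dU_t$.

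Next I would substitute the explicit forms of $dU_t$ from (7.35) and $dU_t^*$ from (7.36). Writing $dU_t = (K\,dt + d\mathcal{A}_t(D_-) + d\mathcal{A}_t^\dagger(-r(W)D_-^*) + d\mathcal{L}_t(W-I))U_t$ with $K = -\tfrac12(D_-^*|D_-^*) + iH$, and the analogous expression for $dU_t^*$, the first two terms $d(U_t^*)XU_t + U_t^* X\,dU_t$ immediately produce, upon factoring out $U_t^*(\cdot)U_t = j_t(\cdot)$, the ``no-cross-term'' part: the $dt$ contribution $j_t(KX + XK) = j_t(i[X,H] - \tfrac12\{(D_-^*|D_-^*), X\})$, the creation term $j_t(d\mathcal{A}_t^\dagger(D_-^*X - r(W^*X)r(W)D_-^*))$ — wait, more precisely $j_t(X\cdot(-r(W)D_-^*))$ from $dU_t$ plus $j_t(D_-^* X)$ from $dU_t^*$ — the annihilation term $j_t(X D_- - l(X W)l(W^*)D_-)$, and the conservation term $j_t((W^*-I)X + X(W-I)) = j_t(W^*X + XW - 2X)$, which must be reconciled with the stated $j_t(W^*X\circ W - X)$ using the Itô correction.

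The remaining term $d(U_t^*) X \, dU_t$ is where the SWN Itô table does the work: I would pair each differential in $dU_t^*$ with each in $dU_t$, keeping only the nonvanishing products dictated by (4.33)--(4.36), namely $d\mathcal{A}_t(\cdot)d\mathcal{A}_t^\dagger(\cdot) \to (\cdot|\cdot)\,dt$, $d\mathcal{L}_t(\cdot)d\mathcal{L}_t(\cdot) \to d\mathcal{L}_t(\cdot\circ\cdot)$, $d\mathcal{L}_t(\cdot)d\mathcal{A}_t^\dagger(\cdot)\to d\mathcal{A}_t^\dagger(r(\cdot)\cdot)$, and $d\mathcal{A}_t(\cdot)d\mathcal{L}_t(\cdot)\to d\mathcal{A}_t(l(\cdot)\cdot)$. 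Collecting: the $\mathcal{A}^\dagger_t\mathcal{A}_t$-type pairing from $dU_t^*$'s $d\mathcal{A}^\dagger_t(D_-^*)$ and $dU_t$'s $d\mathcal{A}_t(D_-)$ — actually the scalar-producing pairing is $d\mathcal{A}_t^\dagger(D_-^*) \cdot d\mathcal{A}_t(D_-)$ read in the order coming from $d(U_t^*)\cdots dU_t$, which needs care about which factor is ``$D_-$'' and which ``$D_+$'' in (4.33) — yields the $dt$-term $j_t((r(W)D_-^*|X\,r(W)D_-^*))$ after the $X$ in the middle is absorbed; the $\mathcal{L}\mathcal{L}$ pairing contributes $j_t(d\mathcal{L}_t((W^*-I)X\circ(W-I)))$, which combines with the $j_t(W^*X + XW - 2X)$ from the no-cross part to give exactly $j_t(d\mathcal{L}_t(W^*X\circ W - X))$ (here one uses $I\circ Y = Y\circ I = Y$ for the $\circ$-identity $I$); the $\mathcal{L}\mathcal{A}^\dagger$ and $\mathcal{A}\mathcal{L}$ pairings supply the correction terms $-r(W^*X)r(W)D_-^*$ in the creation coefficient and $-l(XW)l(W^*)D_-$ in the annihilation coefficient. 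I expect the main obstacle to be purely bookkeeping: tracking the order of operator factors through the non-commutative products and matching the module-operation notations $r(\cdot)$, $l(\cdot)$, $\circ$, $(\cdot|\cdot)$ exactly as defined in (4.21)--(4.36), so that every term lands with the correct coefficient and the conservation term telescopes to the compact $\circ$-form; there is no analytic difficulty beyond what the representation-free framework of Section 5 already guarantees (local boundedness, invariance of $D$, validity of the Itô formula weakly on $D$).
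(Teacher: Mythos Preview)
Your proposal is correct and follows essentially the same route as the paper: expand $dj_t(X)=(dU_t^*)XU_t+U_t^*X\,dU_t+(dU_t^*)X\,dU_t$, substitute (7.33)--(7.34), apply the SWN It\^o table (4.33)--(4.36) to the cross term, and collect so that the $d\mathcal{L}$ pieces telescope to $W^*X\circ W-X$. The paper's proof is exactly this direct computation with the intermediate collections written out; your anticipation that the only obstacle is bookkeeping of the module operations $r,l,\circ,(\cdot|\cdot)$ is accurate.
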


\begin{proof}

\[dj_t(X)=(dU_t^*)\,X \,U_t+U_t^*\,X \,(dU_t)+(dU_t^*)\,X \,(dU_t)\]
\[=U_t^*\{(-\frac{1}{2}(D_-^*|D_-^*)-i\,H)\,X\,dt+d\mathcal{A}^{\dagger}_t(D_-^*\,X)-d\mathcal{A}_t(l(W^*)D_-\,X)\]
\[+d\mathcal{L}_t((W^*-I)\,X))+X\,(-\frac{1}{2}(D_-^*|D_-^*)+i\,H)\,dt-d\mathcal{A}^{\dagger}_t(X\,r(W)D_-^*)\]
\[+d\mathcal{A}_t(X\,D_-)+d\mathcal{L}_t(X\,(W-I))+(r(W)D_-^*|X\,r(W)D_-^*)\,dt\]
\[-d\mathcal{A}_t(X\,l((W-I))\,l(W^*)D_-)-d\mathcal{A}^{\dagger}_t(r((W^*-I)\,X)\,r(W)D_-^*)\]
\[+d\mathcal{L}_t((W^*-I)\,X\circ (W-I))\}\,U_t\]
\[=U_t^*\{(-\frac{1}{2}((D_-^*|D_-^*)\,X+X\,(D_-^*|D_-^*))+i\,[X,H]+\]
\[(r(W)D_-^*|X\,r(W)D_-^*))\,dt+d\mathcal{A}^{\dagger}_t(D_-^*\,X-X\,r(W)D_-^*\]
\[-r((W^*-I)\,X)\,r(W)D_-^*)-d\mathcal{A}_t(l(W^*)D_-\,X -X\,D_-+\]
\[ l(X\,(W-I))\,l(W^*)D_- )+d\mathcal{L}_t((W^*-I)\,X+X\,(W-I)\]
\[+(W^*-I)\,X\circ (W-I))\}\,U_t\]
\[=U_t^*\{(-\frac{1}{2}((D_-^*|D_-^*)\,X+X\,(D_-^*|D_-^*))+i\,[X,H]+\]
\[(r(W)D_-^*|X\,r(W)D_-^*)\}\,U_t\,dt\]
\[+U_t^*\{d\mathcal{A}^{\dagger}_t(D_-^*\,X-X\,r(W)D_-^*-r((W^*-I)\,X)\,r(W)D_-^*  )\}\,U_t\]
\[+U_t^*\{d\mathcal{A}_t(-l(W^*)D_-\,X +X\,D_--l(X\,(W-I))\,l(W^*)D_- )\}\,U_t\]               
\[ +U_t^*\{d\mathcal{L}_t(W^*\,X\circ W-X)\}\,U_t\]
\[=j_t(i\,[X,H]-\frac{1}{2}((D_-^*|D_-^*)\,X+X\,(D_-^*|D_-^*))+(r(W)D_-^*|X\,r(W)D_-^*))\,dt\]
\[+j_t(d\mathcal{A}^{\dagger}_t(D_-^*\,X-X\,r(W)D_-^*-r((W^*-I)\,X)\,r(W)D_-^* ))\]
\[+j_t(d\mathcal{A}_t( X\,D_--l(W^*)D_-\,X- l(X\,(W-I))\,l(W^*)D_-   ))\]
\[+ j_t(d\mathcal{L}_t(W^*\,X\circ W-X))\]
\[=j_t(i\,[X,H]-\frac{1}{2}((D_-^*|D_-^*)\,X+X\,(D_-^*|D_-^*))+\]
\[(r(W)D_-^*|X\,r(W)D_-^*))\,dt+\]
\[j_t(d\mathcal{A}^{\dagger}_t(D_-^*\,X-r(W^*\,X)r(W)D_))+j_t(d\mathcal{A}_t(X\,D_--l(X\,W)l(W^*)D_-))\]
\[+j_t(d\mathcal{L}_t(W^* \,X \circ W-X))\]
\end{proof}

\begin{definition}
On a finite time interval $[0,T]$, the cost functional for the solution of the quantum Langevin equation (7.36) is given by:
 
\begin{equation}
J_{\xi,T}(D_-,W)=\int_0^T\,[\,\|j_t(X)\,\xi\|^2+\frac{1}{4}\|j_t((D_-^*|D_-^*))\,\xi\|^2\, ]\,dt+\frac{1}{2}<\xi,j_T((D_-^*|D_-^*))\,\xi>
\end{equation}
where  $\xi$ is an arbitrary vector in the exponential domain of $\mathcal{H}\otimes \Gamma$. 
\end{definition}

The square of white noise analogues of Theorems 6 and 7 are as follows.

\begin{theorem}
Let $U=\{U_t\,/\,t\geq 0\}$ be a  process satisfying the  quantum stochastic differential equation 

\begin{equation}
dU_t=(F\,U_t+u_t)\,dt+  d\mathcal{A}_t(\Psi) \,U_t+ d\mathcal{A}^{\dagger}_t(\Phi) \,U_t +d\mathcal{L}_t(Z)\,U_t,\,U_0=1,\,t\in [0,T]
\end{equation}

with adjoint

\begin{equation}
dU_t^*=(U_t^*\,F^*+u_t^*)\,dt+U_t^*\, d\mathcal{A}_t^{\dagger}(\Psi^*)+\,U_t^*\, d\mathcal{A}_t(\Phi^*) + U_t^* \, d\mathcal{L}_t(Z^*),\,U_0^*=1,\,t\in [0,T]
\end{equation}

where $T > 0$ is a fixed finite horizon,  $F$ is a bounded operator on the system space $\mathcal{H}$,  $\Psi$, $\Phi$, and $Z$  are of the same form as $D_-$, $D_+$, and $D_1$ respectively, and $u_t$ is of the form $-\Pi \,U_t$ for some positive bounded system operator $\Pi$.

The functional  

\begin{eqnarray}
Q_{\xi,T}(u)=\int_0^T\,[<U_t \,\xi,X^2\,U_t\, \xi>+<u_t\, \xi,u_t\, \xi>]\,dt-<u_T\, \xi,U_T \,\xi>
\end{eqnarray}

where $X$ is a system space observable, identified with its ampliation 
$X \otimes I$ to $\mathcal{H}  \otimes \Gamma$, is minimized over the set of feedback control processes of the form $u_t=-\Pi \,U_t$ by choosing $\Pi$ to be a bounded,  positive, self-adjoint   system operator satisfying

\begin{eqnarray}
&\Pi \, F+F^* \Pi+(\Phi|\Pi\,\Phi)-{\Pi}^2+X^2=0 &\\
&\Pi \,\Psi + {\Phi}^* \, \Pi +l(\Pi \, Z){\Phi}^* \, =0&\\
&\Pi \,Z +Z^* \,\Pi + (Z^* \, \Pi)\circ \,Z=0.&
\end{eqnarray}

The minimum value is $<\xi,\Pi \xi>$. 

\end{theorem}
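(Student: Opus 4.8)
The plan is to mimic, step by step, the proof of Theorem 6, replacing the scalar Ito table of Section 3 with the square of white noise Ito table (4.33)--(4.36). First I would introduce $\theta_t=\langle\xi,U_t^*\,\Pi\,U_t\,\xi\rangle$ and compute its differential using the quantum Ito formula $d(U_t^*\,\Pi\,U_t)=(dU_t^*)\,\Pi\,U_t+U_t^*\,\Pi\,(dU_t)+(dU_t^*)\,\Pi\,(dU_t)$, inserting the expressions (7.40) and (7.41) for $dU_t$ and $dU_t^*$. The first two terms produce the drift $U_t^*(F^*\Pi+\Pi F)U_t\,dt$ together with the various $d\mathcal{A}_t$, $d\mathcal{A}^{\dagger}_t$, $d\mathcal{L}_t$ contributions (whose coefficients, being adapted system operators conjugated by $U_t$, will eventually be irrelevant under vacuum expectation), plus the control contribution $u_t^*\Pi U_t+U_t^*\Pi u_t$. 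The cross term $(dU_t^*)\,\Pi\,(dU_t)$ is where the SWN Ito table enters: $d\mathcal{A}^{\dagger}_t(\Psi^*)\,\Pi\,d\mathcal{A}_t(\Phi^*)$ pairs via (4.33) to give $(\Phi|\Pi\,\Phi)\,dt$-type terms (note $\Phi^*$ in the adjoint pairs with $\Phi$ in the original to the left of $\Pi$), the $d\mathcal{L}_t(Z^*)\cdot d\mathcal{A}^{\dagger}_t(\Phi)$ product gives a creation term involving $r$, the $d\mathcal{A}_t(\Phi^*)\cdot d\mathcal{L}_t(Z)$ gives an annihilation term involving $l$, and $d\mathcal{L}_t(Z^*)\cdot d\mathcal{L}_t(Z)$ gives $d\mathcal{L}_t(Z^*\circ Z)$.

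Next I would collect the coefficients of $dt$, of $d\mathcal{A}_t$, of $d\mathcal{A}^{\dagger}_t$, and of $d\mathcal{L}_t$ separately. The three Riccati conditions (7.43)--(7.45) are precisely the statements that the $d\mathcal{A}^{\dagger}_t$, $d\mathcal{A}_t$, and $d\mathcal{L}_t$ coefficients, respectively, vanish, while the $dt$-coefficient collapses: $F^*\Pi+\Pi F+(\Phi|\Pi\,\Phi)$ equals $\Pi^2-X^2$ by (7.43). (The asymmetry of $l(\Pi Z)\Phi^*$ versus $\Phi^*\Pi Z$ in (7.44), compared with its Section-7.1 counterpart, is exactly what the module operator $l$ contributes in the SWN setting; I would verify that the SWN Ito product $d\mathcal{A}_t(\Phi^*)\cdot d\mathcal{L}_t(Z)=d\mathcal{A}_t(l(Z)\Phi^*)$ produces this term.) Hence $d\theta_t=\langle\xi,U_t^*(\Pi^2-X^2)U_t\,\xi\rangle\,dt+\langle\xi,(u_t^*\Pi U_t+U_t^*\Pi u_t)\,\xi\rangle\,dt$ as an honest scalar differential, with no surviving martingale part after taking the inner product against $\xi$.

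Then I would integrate from $0$ to $T$, using $\theta_0=\langle\xi,\Pi\,\xi\rangle$ and $U_0=1$, to get the identity
\begin{eqnarray}
\langle\xi,U_T^*\,\Pi\,U_T\,\xi\rangle-\langle\xi,\Pi\,\xi\rangle=\int_0^T\langle\xi,(U_t^*(\Pi^2-X^2)U_t+u_t^*\Pi U_t+U_t^*\Pi u_t)\,\xi\rangle\,dt.
\end{eqnarray}
Substituting this into the definition (7.42) of $Q_{\xi,T}(u)$ exactly as in the derivation (7.18)--(7.19) of Theorem 6, the $X^2$ terms cancel and one is left with $Q_{\xi,T}(u)=\int_0^T\|(u_t+\Pi\,U_t)\,\xi\|^2\,dt+\langle\xi,\Pi\,\xi\rangle$, which is minimized by $u_t=-\Pi\,U_t$ with minimum $\langle\xi,\Pi\,\xi\rangle$.

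The main obstacle is bookkeeping rather than conceptual: I must be careful that the SWN module operations $r$, $l$, $\circ$, and the pairing $(\cdot|\cdot)$ are applied on the correct side and with the correct adjoint partner when forming $(dU_t^*)\,\Pi\,(dU_t)$, since $\Pi$ sits between the two differentials and the noncommutativity of $\circ$-products matters. Concretely, one needs that $d\mathcal{A}^{\dagger}_t(\Psi^*)$ (from $dU_t^*$) times $d\mathcal{A}_t(\Phi^*)$-type objects vanish while the correctly ordered annihilation-creation pairings survive, and that terms such as $U_t^*\,\Psi^*\,\Pi\,\Phi\,U_t$ and $U_t^*\,Z^*\,\Pi\,\Phi\,U_t$ assemble, via (4.33) and (4.35), into the combination $\Pi\Psi+\Phi^*\Pi+l(\Pi Z)\Phi^*$ appearing in (7.44). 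Once the Ito multiplications are organized consistently with (4.33)--(4.36), the rest of the argument is the verbatim algebra of Theorem 6.
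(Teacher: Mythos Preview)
Your proposal is correct and follows exactly the approach the paper takes: its own proof of this theorem consists of the single sentence ``The proof follows in a way similar to that of Theorem 6 with the use of the square of white noise It\^o table (4.31)--(4.34) of section 4,'' and you have simply written out what that sentence means. One small correction: your parenthetical remark that the $d\mathcal{A}_t$, $d\mathcal{A}^{\dagger}_t$, $d\mathcal{L}_t$ contributions ``will eventually be irrelevant under vacuum expectation'' is misleading, since $\xi$ is an arbitrary exponential-domain vector, not the vacuum; the point, which you state correctly a few lines later, is that the Riccati conditions (7.43)--(7.45) force those coefficients to vanish identically, not merely in expectation.
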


\begin{proof} The proof follows in a way  similar to that of Theorem 6 with the use of the square of white noise It\^{o} table (4.31)-(4.34) of section 4.
\end{proof}

\begin{theorem}  Let $X$ be a bounded self-adjoint system operator such that the pair ($i\,H$, $X$) is stabilizable. The quadratic performance functional (7.37) associated with the quantum stochastic flow $\{j_t(X)=U_t^*\,X \,U_t\,/\,t \geq 0\}$,  where  $U=\{U_t\,/\,t\geq 0\}$ is the solution of (7.33), is minimized by choosing 

\begin{eqnarray}
&D_-=\sum_n\,D_{-,n}\otimes e_n &
\end{eqnarray}

and 

\begin{eqnarray}
&W=\sum_{\alpha,\beta,\gamma}\,W_{\alpha,\beta,\gamma} \otimes \rho^+({B^+}^{\alpha}M^{\beta}{B^-}^{\gamma})&
\end{eqnarray}

such that

\begin{eqnarray}
&\frac{1}{2}\,(D_-^*|D_-^*)=(\frac{1}{2}\,\sum_n\,D_{-,n}\,D_{-,n}^*)\,\otimes \,1=\Pi,&  
\end{eqnarray}

and

\begin{eqnarray}
&[D_{-,n},D_{-,m}]=[D_{-,n},D_{-,m}^*]=0&\\
&[D_{-,n},W_{\alpha,\beta,\gamma}]=[D_{-,n},W_{\alpha,\beta,\gamma}^*]=0&
\end{eqnarray}

for all $n,m,\alpha,\beta,\gamma$, which also implies that $[D_{-,n}^*,W_{\alpha,\beta,\gamma}]=[D_{-,n}^*,W_{\alpha,\beta,\gamma}^*]=0$, where  $\Pi$ is a positive self-adjoint solution of the algebraic Riccati equation 

\begin{eqnarray}
&i\,[H,\Pi]+\Pi^2+X^2=0.&  
\end{eqnarray} 

Moreover 

\begin{eqnarray}
&\min_{D_{-},W}\, J_{\xi,T}(D_-,W)= <\xi,\Pi\,\xi>&
\end{eqnarray}

independently of $T$.

\end{theorem}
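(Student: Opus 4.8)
The plan is to derive Theorem 10 from Theorem 9 in exactly the way Theorem 7 was derived from Theorem 6, replacing the first-order It\^{o} table by the square of white noise It\^{o} table (4.31)--(4.34) and the scalar coefficients $L,W$ by the module coefficients $D_-,W$ of Section 4. First I would view the unitary evolution (7.33) as an instance of the evolution (7.38) of Theorem 9: matching the $dt$, $d\mathcal{A}_t$, $d\mathcal{A}_t^{\dagger}$ and $d\mathcal{L}_t$ coefficients gives the feedback term $u_t=-\frac{1}{2}(D_-^*|D_-^*)\,U_t=-\Pi\,U_t$ with $\Pi=\frac{1}{2}(D_-^*|D_-^*)$, the bounded drift $F=-iH$ (so that $\Pi F+F^*\Pi=i[H,\Pi]$, as in Theorem 7), $\Psi=D_-$, $\Phi=-r(W)D_-^*$ and $Z=W-I$. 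Using unitarity of $U_t$ and $j_t(X)=U_t^*XU_t$ exactly as in the first-order case, one checks $\|j_t(X)\xi\|^2=<U_t\xi,X^2U_t\xi>$, $\frac{1}{4}\|j_t((D_-^*|D_-^*))\xi\|^2=<u_t\xi,u_t\xi>$ and $\frac{1}{2}<\xi,j_T((D_-^*|D_-^*))\xi>=-<u_T\xi,U_T\xi>$, so the cost functional (7.37) is literally the functional $Q_{\xi,T}(u)$ of (7.40) for this feedback, and Theorem 9 applies.

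By Theorem 9 the minimum over feedbacks $u_t=-\Pi U_t$ equals $<\xi,\Pi\xi>$ and is attained when $\Pi$ is a bounded positive self-adjoint solution of the Riccati system (7.41)--(7.43). Substituting $\Psi=D_-$, $\Phi=-r(W)D_-^*$, $Z=W-I$, $\Pi=\frac{1}{2}(D_-^*|D_-^*)$ into (7.41)--(7.43) and simplifying with the self-adjointness of $\Pi$, I expect the chain parallel to (7.26)--(7.32): the conservation equation (7.43) with $Z=W-I$ -- the analogue of (7.28) -- collapses after the same cancellation to the statement that $W$ is $\circ$-central for $\Pi$, which in components reads $[D_{-,n},W_{\alpha,\beta,\gamma}]=[D_{-,n},W_{\alpha,\beta,\gamma}^*]=0$ together with the adjoint relations, i.e.\ (7.48); feeding this back, the annihilation equation (7.42) -- the analogue of (7.27) -- forces $[D_{-,n},\Pi]=[D_{-,n}^*,\Pi]=0$, the analogue of (7.29). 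Since $\Pi=\frac{1}{2}\sum_n D_{-,n}D_{-,n}^*\otimes 1$ this is (7.46), and centrality of $\Pi$ among the $D_{-,m}$'s yields, by the normality argument used for $L$ at the end of the proof of Theorem 7, the relations $[D_{-,n},D_{-,m}]=[D_{-,n},D_{-,m}^*]=0$ of (7.47). With all these commutations and $r(W)$ module-unitary, $(\Phi|\Pi\Phi)=(r(W)D_-^*|\Pi\,r(W)D_-^*)$ reduces to $\Pi(D_-^*|D_-^*)=2\Pi^2$ just as $L^*\Pi L$ reduced to $2\Pi^2$ in Theorem 7, so the drift equation (7.41) becomes $i[H,\Pi]+\Pi^2+X^2=0$, which is (7.49); as this equation does not involve $T$, neither does its positive self-adjoint solution $\Pi$ -- whose existence under the stabilizability hypothesis is guaranteed by Remark 1 -- nor the minimum value $<\xi,\Pi\xi>$, giving (7.50). (Alternatively, one may argue in the sufficiency direction: any $D_-,W$ obeying (7.46)--(7.49) render $\Pi$ a solution of (7.41)--(7.43), and Theorem 9 then gives the minimum directly.)

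The step I expect to be the main obstacle is the passage from the operator identities (7.41)--(7.43) to the componentwise relations (7.46)--(7.49). This requires expanding the module expressions $(\Phi|\Pi\Phi)$, $l(\Pi Z)\Phi^*$ and $(Z^*\Pi)\circ Z$ occurring in (7.41)--(7.43), for the choices $\Psi=D_-$, $\Phi=-r(W)D_-^*$, $Z=W-I$, in terms of the system operators $D_{-,n}$ and $W_{\alpha,\beta,\gamma}$ by means of the module operations (4.27)--(4.34) and the structure constants $c^{\lambda,\rho,\sigma,\omega,\epsilon}_{\beta,\gamma,a,b}$, and then checking that the cancellations which closed verbatim in (7.26)--(7.32) still close at this level. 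A genuine extra difficulty, absent in the first-order case, is that $\Pi=\frac{1}{2}\sum_n D_{-,n}D_{-,n}^*$ is a sum, so deducing $[D_{-,n},W_{\alpha,\beta,\gamma}]=0$ and $[D_{-,n},D_{-,m}]=0$ for each single pair of indices -- rather than merely for the combinations appearing in $\Pi$ -- uses that the coefficients $\theta_{\alpha,\beta,\gamma,m}$ and the Stirling numbers entering $c^{\lambda,\rho,\sigma,\omega,\epsilon}_{\beta,\gamma,a,b}$ are non-degenerate enough to be cancelled index by index; this is precisely where the explicit representation $\rho^+$ of Section 4 must be invoked rather than treated formally.
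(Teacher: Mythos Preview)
Your approach is the paper's: identify (7.33) with (7.38) via $\Pi=\tfrac{1}{2}(D_-^*|D_-^*)$, $\Psi=D_-$, $\Phi=-r(W)D_-^*$, $Z=W-I$, match the two cost functionals, substitute into (7.41)--(7.43), and reduce to the algebraic Riccati equation exactly as in the passage (7.26)--(7.32). Two small points of comparison. First, the paper takes $F=iH$ to match (7.33) as written, whereas your $F=-iH$ follows the convention of (7.1); the sign in front of $iH$ is in fact inconsistent between (7.1) and (7.33) in the paper itself, so this is not your error. Second, the componentwise step you flag as the main obstacle is \emph{not} carried out in the paper: after obtaining $[\Pi,W]=0$ from (7.43) via the $\circ$-unitarity chain $W^*\Pi\circ W=\Pi\Rightarrow W^*\Pi\circ W\circ W^*=\Pi\circ W^*\Rightarrow W^*\Pi=\Pi W^*$, and $[\Pi,D_-]=0$ from (7.42), the paper simply asserts that (7.47)--(7.48) follow, without any index-by-index non-degeneracy argument. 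This is consistent with the sufficiency reading you give at the end --- the componentwise commutation relations are prescribed as part of the choice of $D_-,W$, not derived --- so your concern, while mathematically legitimate, goes beyond what the paper actually does.
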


\begin{proof}
Looking at (7.33) as (7.38) with $u_t=-\frac{1}{2}\,(D_-^*|D_-^*)\,U_t$ i.e 
$\Pi=\frac{1}{2}\,(D_-^*|D_-^*)=(\frac{1}{2}\,\sum_n\,D_{-,n}\,D_{-,n}^*)\,\otimes \,1\ ,\ F=i\,H\ ,\ \Psi=D_-\ ,\ \Phi=-r(W)D_-^*$, and $Z=W-I$,   (7.40) is identical to (7.37) and equations  (7.41)-(7.43) become 

\begin{eqnarray}
&i\,[\Pi,H]+  (r(W)D_-^*|\Pi\,r(W)D_-^*)-\Pi^2+X^2=0&\\
&\Pi\,D_--l(W^*)D_-\,\Pi-l(\Pi\,(W-I))\,l(W^*)D_-=0&\\
&\Pi\,(W-I)+(W^*-I)\,\Pi+((W^*-I)\,\Pi)\circ (W-I)=0.&
\end{eqnarray}

Equation (7.53) implies $W^* \,\Pi \circ W=\Pi$ $\Rightarrow$ $W^*\,\Pi \circ W \circ W^*=\Pi \circ W^*$ $\Rightarrow$ $W^* \,\Pi \circ I=\Pi \circ W^*$ $\Rightarrow$ $W^*\,\Pi=\Pi \,W^*$ $\Rightarrow$ $[\Pi,W]=[\Pi,W^*]=0$ and (7.48) follows from (7.46). Similarly, (7.52) implies that $[\Pi,D_{-}]=0$ from which (7.47) follows. Finally, using the fact that $(r(W)D_{-}^*\,|\,r(W)D_{-}^*)=(D_{-}^*\,|\,r(W^*)r(W)D_{-}^*)=(D_{-}^*\,|\,D_{-}^*)$, (7.51) implies (7.49).

\end{proof}

 \subsection{ Representation free feedback control }

Within the framework of the representation free calculus described in  section 5, we consider an operator process $X=(X(t))_{t\geq0}$, defined on a complex separable Hilbert space $H$ containing a total invariant subset $D$ , with evolution described by a quantum stochastic differential equation of the form

\begin{eqnarray}
dX(t)&=&d\tau(t)(FX+Gu+L)(t)+\sum_{a\in I}dM_a(t)F_a(t)(wX+z)(t)\\
X(0)&=&C,\, 0\leq t \leq T < +\infty
\end{eqnarray}

or of the form

\begin{eqnarray}
dX(t)&=&-\{d\tau(t)(FX+Gu+L)(t)+\sum_{a\in I}dM_a(t)F_a(t)(wX+z)(t)\}\\
X(T)&=&C,\, 0\leq t \leq T < +\infty
\end{eqnarray}

where $I$ is a finite set, $M=\{M_a \,/\,a\in I\}$ is a self-adjoint  family of  regular integrators of scalar type  satisfying a $\rho$-commutation relation with It\^{o} multiplication rules

\begin{eqnarray}
&dM_a(t)\,dM_b(t)=\sum_{l \in I}c_{ab}^l (t)\,dM_l (t)&
\end{eqnarray}

and

\begin{eqnarray}
&dM_a (t)\,d\tau (t)=d\tau (t)\,dM_a (t)=0&
\end{eqnarray}

\medskip

where the $c_{ab}^l$ 's are the structure processes, $\tau $ is a real-valued measure on $[0,+\infty )$, absolutely continuous with respect to Lebesgue measure,    $dM_a \ne d\tau $ for all $a \in I$, and  the coefficient processes $F, G, u, L, w, z$ and $F_a$ for all $a \in I$, are as in Section 2. We assume also that $C$ is a bounded operator on $H$.  As discussed in section 5,  (7.54)- (7.55) and (7.56)-(7.57) admit unique locally bounded solutions. 

Under extra assumptions on the coefficients, e.g if $F,G,L,w,z$ are real-valued functions and the $F_{\alpha}$'s are complex-valued functions with conjugate $\overline{F_{\alpha}}=F_{\alpha^*}$ where $a^*$ is defined by $(M_a)^*=M_{a^*}$, then  the $X(t)\,'s$ correspond to classical quantum mechanical observables.

We associate with (7.54)- (7.55) and (7.56)-(7.57)  respectively the quadratic performance criteria (7.60) and (7.61)  of the following

\begin{definition}

For $\xi \in D$, $0\leq T < +\infty$, $Q,\,R,\,m,\,\eta$  locally bounded, strongly continuous, adapted processes such that $R$ has an inverse $R^{-1}$ with the same properties, for $Q_T,\,m_T,\,Q_0,\,m_0$   bounded operators on $H$ with $R(t)\geq 0$, $Q(t)\geq 0$, $R^{-1}(t)>0$ for all $t\in [0,T]$, and for $Q_T\geq 0$, $Q_0\geq 0$  define

\begin{eqnarray}
&\tilde J_{\xi,T}(u)=\int_0^T\,d\tau(t)(\langle X(t)\xi,Q(t)X(t)\xi\rangle + \langle u(t)\xi,R(t)u(t)\xi\rangle& \\
&+2\langle m(t)X(t)\xi,\xi\rangle+2\langle \eta(t)u(t)\xi,\xi\rangle)+\langle Q_T\,X(T)\xi,X(T)\xi\rangle +2\langle m_T\,X(T)\xi,\xi\rangle\nonumber&
\end{eqnarray}

and
 
\begin{eqnarray}
&\tilde J_{\xi,0}(u)=\int_0^T\,d\tau(t)(\langle X(t)\xi,Q(t)X(t)\xi\rangle + \langle u(t)\xi,R(t)u(t)\xi\rangle &\\
&+2\langle m(t)X(t)\xi,\xi\rangle+2\langle \eta(t)u(t)\xi,\xi\rangle)+\langle Q_0\,X(0)\xi,X(0)\xi\rangle +2\langle m_0\,X(0)\xi,\xi\rangle\nonumber&
\end{eqnarray}

\end{definition}

We view $u$ as a control process and we consider the problem of choosing it so as to minimize $\tilde J_{\xi,T}(u)$ (resp. $\tilde J_{\xi,0}(u)$), thus controlling the evolution of the process $X$.

\begin{theorem}
Let $T>0$ be a finite time, let $X=\{X(t)\,/\,t \geq  0\}$ be a locally bounded adapted process with evolution described by (7.54)- (7.55)  (resp.  (7.56)-(7.57) ) and with performance criterion   (7.60)   (resp.(7.61)), and suppose that there exists a self-adjoint, locally bounded process $ \Pi = \{ \Pi (t)\,/\,t \geq 0\}$  satisfying, weakly on the invariant domain $D$, the generalized  stochastic Riccati differential equation

\begin{eqnarray}
&d\tau(t)(F^*\Pi +\Pi F+Q-\Pi GR^{-1}G^*\Pi )(t)+[(\sum_{a\in I}dM_a\,F_aw)^* \,\Pi + &\\
&\Pi \,\sum_{a\in I}dM_a \,F_a w \pm (\sum_{a\in I}dM_a \,F_a w)^* \,\Pi \,(\sum_{a\in I}dM_a \,F_a w)](t) \pm & \nonumber\\
&(\sum_{a\in I}dM_a  \,F_aw \pm id)^*(t)\,d\Pi (t)\,(\sum_{a\in I} dM_a \,F_a w \pm id)(t)=0&\nonumber\\
&\Pi(T)=Q_T(\mbox{resp. }\,\Pi(0)=Q_0), 0\leq t \leq T &
\end{eqnarray}

and a locally bounded adapted process $r=\{r(t)\,/\,t \geq 0\}$ satisfying the stochastic differential equation

\begin{eqnarray}
&d\tau(t)(F^*r-\Pi GR^{-1}G^*r+\Pi L+m^*-\Pi GR^{-1}\eta^* )(t)+[(\sum_{a\in  I}dM_a\, F_a w)^* r+&\\
&\Pi \,\sum_{a\in I}dM_a \, F_a z + d\Pi \, \sum_{a\in I}dM_a \, F_a z \pm  (\sum_{a\in I}dM_a \, F_a w)^* \,\Pi \,\sum_{a\in I}dM_a \, F_a z \pm &\nonumber\\
&(\sum_{a\in I}dM_a\, F_a w)^*\,d\Pi \,\sum_{a\in I}dM_a \, F_a z ](t)+  [(\sum_{a\in I}dM_a \, F_a w \pm id)^* dr](t) =0&\nonumber\\
&r(T)=m_T^* (\mbox{resp.  }\,r(0)= m_0^* ) , 0\leq t \leq T,&
\end{eqnarray}

where $id$ denotes the identity operator on $H$, the plus sign in $\pm$ in (7.62) and (7.63) is associated with   (7.54)- (7.55)      and (7.60), and the minus with    (7.56)-(7.57) and (7.61). Then the performance criterion $\tilde J_{\xi,T}(u)$ (resp. $\tilde J_{\xi,0}(u)$) appearing in (7.60) (resp.  (7.61)) is minimized by the feedback control process

\begin{eqnarray}
u=-R^{-1}(G^*(\Pi X+r)+\eta^*).
\end{eqnarray}

Note: For $w=0$ and $z=id$ we obtain the solution to the quantum analogue of the "linear regulator" problem of classical control theory.

\end{theorem}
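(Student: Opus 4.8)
The plan is to follow the ``completion of the square'' argument that worked in Theorem 6, but now in the representation-free setting, keeping careful track of the It\^o correction terms and the sign that distinguishes the forward problem (7.54)--(7.55) from the backward one (7.56)--(7.57). First I would set $\theta(t)=\langle X(t)\xi,\Pi(t)X(t)\xi\rangle+2\langle r(t)X(t)\xi,\xi\rangle$ and compute $d\theta(t)$ weakly on $D$ using the product rule (5.21) in the extended form $d(XY)=dX\,Y+X\,dY+dX\,dY$, where the last term is evaluated with the It\^o table (7.58) and the orthogonality (7.59) between $d\tau$ and the $dM_a$. Substituting the evolution equation for $dX$ and the assumed equations (7.62) for $d\Pi$ and (7.63) for $dr$, I expect the $d\tau$-part of $d\theta$ to collapse, after using the Riccati identity, to a perfect-square expression in $u$; concretely the drift should reduce to
\begin{eqnarray*}
&-\langle X\xi,QX\xi\rangle - 2\langle mX\xi,\xi\rangle - 2\langle \eta u\xi,\xi\rangle - \langle u\xi,Ru\xi\rangle + \langle (u+R^{-1}(G^*(\Pi X+r)+\eta^*))\xi,R(\cdot)\xi\rangle&
\end{eqnarray*}
(up to the precise grouping), while the martingale part integrates to zero in the inner product against $\xi$ because each $dM_a$ term, sandwiched between vectors in $D$ and using the $\rho$-commutation relations, contributes no $d\tau$-absolutely-continuous piece.

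Next I would integrate $d\theta$ over $[0,T]$ (resp.\ over $[0,T]$ with the roles of the endpoints swapped for the backward equation), using the boundary conditions $\Pi(T)=Q_T$, $r(T)=m_T^*$ (resp.\ $\Pi(0)=Q_0$, $r(0)=m_0^*$) to identify $\theta(T)$ (resp.\ $\theta(0)$) with the terminal (resp.\ initial) cost terms appearing in (7.60) (resp.\ (7.61)). Rearranging the resulting identity exactly as in the passage from (7.17)--(7.18) to (7.19) in the proof of Theorem 6, the performance functional $\tilde J_{\xi,T}(u)$ (resp.\ $\tilde J_{\xi,0}(u)$) should take the form ``constant term independent of $u$, plus $\int_0^T d\tau(t)\,\|R^{1/2}(u+R^{-1}(G^*(\Pi X+r)+\eta^*))\xi\|^2$,'' using $R(t)\ge 0$ and invertibility of $R(t)$. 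Positivity of this integrand then forces the minimizer to be exactly (7.64), and the minimum value is the $u$-independent constant, which the boundary conditions identify with $\langle X(0)\xi,\Pi(0)X(0)\xi\rangle+2\langle r(0)X(0)\xi,\xi\rangle$ in the forward case (resp.\ the analogous quantity at $T$ in the backward case).

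The main obstacle I anticipate is purely bookkeeping: the It\^o differential $dX\,dX$ involves the double sum $\sum_{a,b\in I}dM_a\,dM_b=\sum_{a,b,l}c_{ab}^l\,dM_l$ together with the cross terms $dX\,d\Pi$ and $d\Pi\,dX$, and these are precisely what the bracketed $[\cdots]$ expressions and the ``sandwich'' term $(\sum_a dM_a F_a w\pm id)^*\,d\Pi\,(\sum_a dM_a F_a w\pm id)$ in (7.62) are designed to absorb; verifying that the chosen form of the Riccati equation (7.62) and of the auxiliary equation (7.63) really does cancel every non-$d\tau$ term --- and that the residual $d\tau$-drift is exactly the perfect square --- requires expanding $d(\Pi X)$, $d(X^*\Pi X)$ and $d(r X)$ term by term and repeatedly invoking (7.59), $\rho_a^2=\mathrm{id}$, and the commutation of adapted processes with the $dM_a$ (equations (5.17)--(5.18)). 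The sign $\pm$ must be threaded consistently through all of this: for the backward equation (7.56)--(7.57) the differential $dX$ carries an overall minus, which flips the sign of the first-order $dM_a$ contributions while leaving the second-order ones ($dX\,dX$) unchanged, and this is the reason the $\pm$ appears attached to both the linear-in-$dM$ and the quadratic-in-$dM$ blocks in (7.62)--(7.63). I would carry out the forward case in full and then indicate that the backward case follows by the same computation with the sign flip, exactly as asserted in the statement.
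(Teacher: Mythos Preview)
Your approach is correct in spirit and follows the completion-of-squares strategy of Theorem~6, but the paper takes a genuinely different route for this theorem. Rather than introducing the value function $\theta(t)=\langle X(t)\xi,\Pi(t)X(t)\xi\rangle+2\langle r(t)X(t)\xi,\xi\rangle$ and computing its stochastic differential, the paper argues by perturbation of the control: it writes $u=\Lambda X+\lambda+\mu$ with $\Lambda=-R^{-1}G^*\Pi$ and $\lambda=-R^{-1}(G^*r+\eta^*)$ fixed and $\mu$ a new free control, introduces the reference trajectory $Y$ solving the same evolution with $\mu=0$, sets $\hat X=X-Y$, and expands $\tilde J_{\xi,T}(u)=\tilde J_{\xi,T}(u_0)+(\text{nonnegative quadratic in }\hat X,\mu)+2\Re K$, where $K$ is a cross term. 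The heart of the argument is showing $K=0$: one rewrites the boundary contribution as $\int_0^T d\langle\xi,(\hat X^*p)(t)\xi\rangle$ with $p=r+\Pi Y$, expands via (5.21), and observes that the resulting integrand is $\hat X^*$ times the left-hand side of (7.62) applied to $Y$, plus $\hat X^*$ times the left-hand side of (7.64), plus a $\mu^*$-coefficient that vanishes by the choice of $\Lambda,\lambda$.

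The trade-off: your Lyapunov approach is conceptually direct and yields the minimum value $\theta(0)$ explicitly, but it forces you to expand the full triple product $d(X^*\Pi X)$ with all seven It\^o corrections; the paper's perturbation argument localizes the computation to $d(\hat X^*p)$ and exploits $\hat X(0)=0$ to eliminate the lower boundary term automatically, so fewer terms need to be tracked. One caution about your write-up: the claim that ``the martingale part integrates to zero in the inner product against $\xi$'' is not justified in this generality --- stochastic integrals against the $dM_a$ do not vanish weakly on arbitrary $\xi\in D$ in the representation-free setting --- so what you must actually check (as you correctly say later) is that the Riccati and auxiliary equations force the $dM_a$-coefficients in $d\theta$ to cancel identically as operator-valued differentials, not merely to have vanishing integral.
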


\begin{proof} We will give the proof for    (7.54)- (7.55)        and (7.60). The proof for  (7.56)- (7.57)        and (7.61) is similar. So let $u=\Lambda X+\lambda+ \mu$ where $\Lambda,\,\lambda$ are fixed processes to be chosen later and $\mu$ is the new control. We will choose $\Lambda,\,\lambda$ so that the minimizing new control $\mu$ is identically $0$. Replacing $u$ by $\Lambda X+\lambda+ \mu$ in (7.54) we obtain

\begin{eqnarray}
&dX(t)=d\tau (t)(FX+G\Lambda X+G\lambda +G\mu +L)(t)+\sum_{a\in I}dM_a(t)F_a(t)(wX+z)(t)&\\
&X(0)=C,\, 0\leq t \leq T < +\infty &
\end{eqnarray}

Let $Y$ be the solution of the above equation corresponding to $\mu=0$, i.e

\begin{eqnarray}
&dY(t)=d\tau (t)(FY+G\Lambda Y+G\lambda  +L)(t)+\sum_{a\in I}dM_a(t)F_a(t)(wY+z)(t)&\\
&Y(0)=C,\, 0\leq t \leq T < +\infty &
\end{eqnarray}

with corresponding control $u_0=\Lambda Y+\lambda$. Letting $\hat X=X-Y$ we obtain

\begin{eqnarray}
&d\hat X(t)=d\tau (t)(F \hat X+G\Lambda \hat X +G\mu)(t)+\sum_{a\in I}dM_a(t)F_a(t)(w \hat X)(t)&\\
&\hat X(0)=0,\, 0\leq t \leq T < +\infty &
\end{eqnarray}

and using $u=\Lambda \hat X+ u_0 +\mu$ (7.60) becomes

\begin{eqnarray}
&\tilde J_{\xi,T}(u)=\tilde J_{\xi,T}(u_0)+ \int_0^T\,d\tau (t)(\langle \xi,[\hat X^*Q \hat X +(\Lambda \hat X+\mu)^*R(\Lambda \hat X + \mu)](t)\xi \rangle& \\
&+\langle \xi,\hat X^*(T)Q_T\hat X(T)\xi \rangle +2\Re\,K&\nonumber
\end{eqnarray}

where $\Re\,K$ denotes the real part of $K$ and

\begin{eqnarray}
&K=\int_0^T\,d\tau (t)\langle \xi,[\hat X^*QY +(\Lambda \hat X+\mu)^*R(\Lambda  Y + \lambda) +\hat X^*m^*+(\Lambda \hat X+\mu)^*\eta^*](t)\xi\rangle& \\
& + \langle \xi,[\hat X^*(T)m_T^*+
\hat X^*(T)Q_TY(T)]\xi \rangle&\nonumber
\end{eqnarray}

We will show that if 

\begin{eqnarray}
&\Lambda=-R^{-1}G^*\Pi,\, \lambda =-R^{-1}(G^*r+\eta^*)&
\end{eqnarray}

 then $K=0$. In view of (7.73) we will then conclude that (7.60) is minimized by $\mu=0$. In so doing, let $p(t)=r(t)+\Pi(t)Y(t)$. Then

\begin{eqnarray}
& \langle \xi,[\hat X^*(T)m_T^*+\hat X^*(T)Q_TY(T)]\xi \rangle =\int_0^T\,d\langle \xi,(\hat X^*p)(t)\xi \rangle &
\end{eqnarray}

Using

\begin{eqnarray}
&d\hat X^*(t)=d\tau (t)( \hat X ^* F^* + \hat X ^*{\Lambda}^* G^* +{\mu}^* G^*)(t)+\sum_{a\in I}dM_{a^*}(t){\rho}_{a^*} ( \hat X^* w^* F_a^*)(t)&\\
&\hat X^*(0)=0,\, 0\leq t \leq T < +\infty &
\end{eqnarray}

where $a^*$ is defined by $(M_a)^*=M_{a^*}$, and (5.21) to compute the right hand side of (7.76),  (7.74) becomes

\begin{eqnarray}
&K=\int_0^T\,<\xi, \{\hat X^* [d\tau ((F^* \Pi +\Lambda^* G^* \Pi +\Pi (F+G \Lambda)+Q+\Lambda^* R \Lambda)+&\\
&\sum_{a \in I} w^* F_a^* \,dM_{a^*} \, \Pi + \Pi  \, \sum_{a \in I}dM_a \,F_a w +d\Pi \, \sum_{a \in I}dM_a \,F_a w+&\nonumber\\
 &\sum_{a \in I}w^* F_a^*\,dM_{a^*}\,d\Pi+ \sum_{a \in I}w^* F_a^*\,dM_{a^*}\,\Pi\,\sum_{a \in I}dM_a \,F_a w+&\nonumber\\
& \sum_{a \in I}w^* F_a^*\,dM_{a^*}\,d\Pi \,\sum_{a \in I}dM_a \,F_a w+d\Pi)Y+(d\tau(F^* r+\Lambda^* G^* r+&\nonumber\\
&\Pi(G \lambda+L)+\Lambda^* R \lambda+m^*+\Lambda^* \eta^* )+\sum_{a \in I}w^* F_a^* \,
dM_{a^*}\,r+\Pi \sum_{a \in I}dM_a \,F_a z+&\nonumber\\
&d\Pi \,\sum_{a \in I}dM_a \,F_a z +\sum_{a \in I}w^* F_a^* \,dM_{a^*}\,dr+\sum_{a \in I}w^* F_a^*\,
dM_{a^*}\,\Pi\,\sum_{a \in I}dM_a \,F_a z +&\nonumber\\
&\sum_{a \in I}w^* F_a^*\,
dM_{a^*}\,d\Pi \,\sum_{a \in I}dM_a \,F_a z+dr)]+\mu^* [(G^* \Pi+R^* \Lambda )Y+&\nonumber\\
&(G^*r+R \lambda+\eta^*)]\,d\tau \}(t)\xi >&\nonumber
\end{eqnarray}

Replacing in the above $\Lambda$ and $\lambda$ by $-R^{-1}G^*\Pi$ and $-R^{-1}(G^*r+\eta^*)$ respectively we see that the coefficients of $Y$ are, in view of (7.62)-(7.63) , equal to zero. The same is true, by (7.64)-(7.65), for the constant terms. Thus $K=0$.

\end{proof}
 
\medskip

\begin{definition} The family of integrators of scalar type  $\{M_0,M_a\,/\,a\in I\}$ where $dM_0=d\tau$, is said to be linearly independent if the equality

\begin{eqnarray}
 \int_0^t\,d\tau(s)\,G(s)+\sum_{a\in I}\,dM_a(s)\,G_a(s)=0
\end{eqnarray}

 for all families $\{G,G_a\,/\,a\in I\}$ of adapted processes and all $t\geq 0$, implies that $G=G_a=0$ for all $a\in I$.
\end{definition}

\begin{proposition}
 If the family of integrators of scalar type appearing in     (7.54)- (7.55)          and    (7.56)- (7.57)          is linearly independent then the Riccati equation  (7.62)- (7.63)       can be put in the form 

\begin{eqnarray}
d\Pi (t)= d\tau (t)\,A(t,\Pi (t))+ \sum_{a\in I}\,dM_a(t)\,B_a(t,\Pi (t))
\end{eqnarray}

 where $A(t)=A(t,\Pi (t))$ and $B_a(t)=B_a(t,\Pi (t))$ can be described as the solutions of the operator equations

\begin{eqnarray}
&(F^*\Pi +\Pi F+Q-\Pi GR^{-1}G^*\Pi \pm A \pm \sum_{a,b\in I}[c_0(a,b)\rho _b(\rho _a(w^* F_{j(a)}^*)\Pi )F_b w\pm &\\
 &c_0(a,b)\rho _b(\rho _a(w^*F_{j(a)}^*))B_b\pm c_0(a,b)\rho _b (B_a)F_b w+\sum_{\gamma,\epsilon \in I}c_{\epsilon}(a,b)c_0(\epsilon,\gamma)&\nonumber\\
&\rho _{\gamma}(\rho _b(\rho _a(w^*F_{j(a)}^*))B_b)F_{\gamma} w] )(t)=0,\,\,\, \forall t \in [0,T]&\nonumber  
\end{eqnarray}

and for all $J \in I$  

\begin{eqnarray}
&(\rho _J(w^*F_{j(J)}^*)\Pi +\rho_J (\Pi) F_J w \pm B_J + \sum_{a,b\in I}[c_J (a,b)\rho _b (B_a) F_b w +&\\
& c_J (a,b) \rho_b (\rho_a(w^*F_{j(a)}^*))B_b \pm c_J(a,b)\rho_b (\rho_a (w^* F_{j(a)}^*)\Pi )F_b W \pm \sum_{\gamma,\epsilon \in I} &\nonumber\\
&\rho_J(c_{\epsilon}(a,b))c_J(a,\gamma)\rho _{\gamma}(\rho _b(\rho _a(w^*F_{j(a)}^*))B_b)F_{\gamma} w])(t)=0,\,\,\, \forall t \in [0,T]&\nonumber
\end{eqnarray}

while  (7.64)- (7.65)   can be put in the form

\begin{eqnarray}
dr(t)=d\tau (t)\,C(t,r(t))+\sum_{a\in I}\,dM_a(t)\,D_a(t,r(t))
\end{eqnarray}

 where $C(t)=C(t,r(t))$ and $D_a(t)=D_a(t,r(t))$ can be described as the solutions of the operator equations

\begin{eqnarray}
&(F^*r -\Pi GR^{-1}G^* r+\Pi L+m^* -\Pi GR^{-1} {\eta}^*  \pm C+ \sum_{a,b\in I}[c_0(a,b)\rho _b (B_a)F_b z & \\
 &\pm c_0(a,b)\rho _b(\rho _a(w^* F_{j(a)}^*)\Pi )F_b z+
 c_0(a,b)\rho _b(\rho _a(w^*F_{j(a)}^*))D_b& \nonumber\\
 &\pm \sum_{\gamma,\epsilon \in I}c_{\epsilon}(a,b)c_0(\epsilon,\gamma)])(t)=0, \forall t \in [0,T]&\nonumber  
\end{eqnarray}

and for all $J \in I$  

\begin{eqnarray}
&(\rho _J(w^*F_{j(J)}^*)r +\rho_J (\Pi) F_J z \pm D_J + \sum_{a,b\in I}[c_J (a,b)\rho _b (B_a) F_b z +& \\
 &\rho_b (\rho_a(w^*F_{j(a)}^*)\Pi)F_b z +\rho_b (\rho_a (w^* F_{j(a)}^*) )D_b ) \pm \sum_{\gamma,\epsilon \in I} \rho_J(c_{\epsilon}(a,b))c_J(\epsilon,\gamma)&\nonumber\\
&\rho _{\gamma}(\rho _b(\rho _a(w^*F_{j(a)}^*)))\rho _{\gamma} (B_b)F_{\gamma}z])(t)=0, \forall t \in [0,T]&\nonumber
\end{eqnarray}

Here the adapted processes $c_0(a,b),c_{\epsilon}(a,b),a,b\in I$ are defined for all $t\geq 0$ by

\begin{eqnarray}
dM_a(t)\,dM_b(t)=d\tau (t) \,c_0(a,b)(t)+ \sum_{\epsilon \in I}\,dM_{\epsilon}(t)\,c_{\epsilon}(a,b)(t)
\end{eqnarray}

 $\rho_a$ is, for each $a\in I$, the corresponding commutation homomorphism, and the mapping $j:I\rightarrow I$ is defined by $j(a^*)=a$.

\end{proposition}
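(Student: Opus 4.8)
The plan is to write $\Pi$ as a generic additive process driven by the available integrators, substitute this representation into the stochastic Riccati equation (7.62)--(7.63), transport every coefficient process through the noise differentials by means of the $\rho$-commutation relations (5.19)--(5.20), reduce all products of differentials with the It\^o table (7.58)--(7.59), and then read off the coefficients of $d\tau$ and of each $dM_a$; by linear independence (Definition 7) these coefficients must vanish individually, and the resulting identities are precisely (7.79)--(7.80). First I would note that, since $\Pi=\{\Pi(t)/t\geq 0\}$ is the self-adjoint, locally bounded process furnished by the hypothesis of Theorem 12 and the only integrators at our disposal are $\tau$ and the $M_a$'s, its stochastic differential necessarily takes the form $d\Pi(t)=d\tau(t)\,A(t)+\sum_{a\in I}dM_a(t)\,B_a(t)$ with $A$ and $B_a$ adapted (this is (7.78)); self-adjointness forces $A=A^{*}$ and, using $(M_a)^{*}=M_{a^{*}}$ and $j(a^{*})=a$, the symmetry $B_{a^{*}}=\rho_{a^{*}}(B_a)^{*}$. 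The entire content of the proposition is then to identify $A$ and the $B_a$.

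Next I would substitute this $d\Pi$, together with $dM_a\,dM_b=d\tau\,c_0(a,b)+\sum_{\epsilon\in I}dM_{\epsilon}\,c_{\epsilon}(a,b)$ and $dM_a\,d\tau=d\tau\,dM_a=0$, into each term of (7.62). The purely $d\tau$ term $d\tau(F^{*}\Pi+\Pi F+Q-\Pi GR^{-1}G^{*}\Pi)$ contributes directly. In the remaining terms I would first rewrite the adjoint, $\bigl(\sum_a dM_a F_a w\bigr)^{*}=\sum_a w^{*}F_{a}^{*}\,dM_{a^{*}}$, and then, whenever a process stands to the left of a differential $dM_a$, push it across using $M_a(s,t)F_s\xi=\rho_a(F_s)M_a(s,t)\xi$, and whenever it stands to the right use $F_sM_a(s,t)\xi=M_a(s,t)\rho_a(F_s)\xi$; this is where the composed homomorphisms $\rho_a,\rho_b,\rho_{\gamma}$ and expressions such as $\rho_a(w^{*}F_{j(a)}^{*})$ appear. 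Each single-differential term then becomes an adapted coefficient times $d\tau$ or $dM_J$. The genuinely double-differential terms --- $\bigl(\sum_a dM_a F_a w\bigr)^{*}\Pi\bigl(\sum_b dM_b F_b w\bigr)$, the cross terms of $\bigl(\sum_a dM_a F_a w\pm id\bigr)^{*}\,d\Pi\,\bigl(\sum_b dM_b F_b w\pm id\bigr)$ with $d\Pi$ in the middle, and so on --- collapse by one application of the It\^o table, producing the $c_0(a,b)$ and $c_{\epsilon}(a,b)$ factors; the triple-differential contributions (two noise factors meeting $d\Pi=d\tau\,A+\sum dM_b\,B_b$) need the table applied twice, which is the origin of the iterated coefficients $\sum_{\gamma,\epsilon\in I}c_{\epsilon}(a,b)c_0(\epsilon,\gamma)$ in (7.79)--(7.80).

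After this reorganisation, (7.62) reads $d\tau(t)\,\mathcal{R}_0(t)+\sum_{J\in I}dM_J(t)\,\mathcal{R}_J(t)=0$ weakly on $D$ for certain adapted operator processes $\mathcal{R}_0,\mathcal{R}_J$. Applying Definition 7 to the linearly independent family $\{M_0,M_a/a\in I\}$ with $dM_0=d\tau$ forces $\mathcal{R}_0\equiv 0$ and $\mathcal{R}_J\equiv 0$ for every $J$; these are exactly equations (7.79) and (7.80), which characterise $A=A(t,\Pi(t))$ and $B_J=B_J(t,\Pi(t))$. Carrying out the identical procedure on the linear equation (7.64)--(7.65), with $dr(t)=d\tau(t)\,C(t)+\sum_{a\in I}dM_a(t)\,D_a(t)$ and the analogous expansion of the terms $\bigl(\sum_a dM_a F_a w\bigr)^{*}r$, $\Pi\sum_a dM_a F_a z$, $d\Pi\sum_a dM_a F_a z$, their $\rho$-commuted double products, and $\bigl(\sum_a dM_a F_a w\pm id\bigr)^{*}dr$, yields (7.81) together with (7.82) and (7.83). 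The main obstacle is the combinatorics of this expansion: one must track with care the nesting of $\rho_a,\rho_b,\rho_{\gamma}$ as coefficients are moved past successive differentials, and keep straight which structure processes $c_0$ or $c_{\epsilon}$ multiply which terms when the It\^o table is applied twice in the triple-product contributions. Once this bookkeeping is done correctly, the passage from the collected $d\tau$- and $dM_J$-coefficients to (7.79)--(7.80) and (7.82)--(7.83) is immediate from linear independence.
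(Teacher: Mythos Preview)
Your proposal is correct and follows essentially the same route as the paper: substitute the ansatz $d\Pi=d\tau\,A+\sum_{a}dM_a\,B_a$ into (7.62), push coefficient processes through the differentials via the $\rho$-commutation relations, reduce products using the It\^o table (including the triple-product identity $dM_a\,dM_b\,dM_{\gamma}=\sum_{\epsilon,\delta}dM_{\epsilon}\,c_{\epsilon}(\delta,\gamma)\rho_{\gamma}(c_{\delta}(a,b))$), and then invoke linear independence to kill each coefficient separately. The paper treats only the $\Pi$ equation and declares the $r$ case similar, whereas you sketch both; otherwise the arguments coincide.
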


\begin{proof} We will only give the proof for $ \Pi = \{ \Pi (t)\,/\,t \geq 0\}$ . The proof for   $r=\{r(t)\,/\,t \geq 0\}$ is similar. Substituting  $d\Pi = d\tau \,A+ \sum_{a \in I}\,dM_a \, B_a$ in (7.62) we obtain

\begin{eqnarray}
&\{d\tau\,(F^*\Pi +\Pi F+Q-\Pi GR^{-1}G^*\Pi)+\sum_{a \in I}\,dM_{a^*} \,\rho_{a^*}\,(w^*F^*_a)\Pi+&\\
&\Pi \sum_{a \in I}\,dM_a \, F_a w \pm \sum_{a \in I}\,dM_{a^*} \,\rho_{a^*} \,(w^*F^*_a)\Pi \sum_{b \in I}\,dM_b \, F_b w + &\nonumber\\
&\sum_{a \in I}\,dM_{a^*} \,\rho_{a^*} \,(w^*F^*_a)(d \tau \,A + \sum_{b \in I}\,dM_b \,B_b)+&\nonumber\\
&(d \tau \,A+\sum_{a \in I}\,dM_a \,B_a)\sum_{b \in I}\,dM_b \,F_b w &\nonumber\\
&\pm \sum_{a \in I}\,dM_{a^*} \,\rho_{a^*}\,(w^*F^*_a)(d\tau \,A+ \sum_{b \in I}\,dM_b \, B_b)\sum_{\gamma \in I}\,dM_{\gamma} \, F_{\gamma} w &\nonumber
\\
&\pm d\tau \,A \pm \sum_{a \in I}\,dM_a \, B_a \}(t)=0&\nonumber
\end{eqnarray}

Making use of (7.59), of the $\rho$-commutation relations,  of (7.87), and of

\begin{eqnarray}
dM_a(t)\,dM_b(t)\,dM_{\gamma}(t)=\sum_{\epsilon, \delta \in I}\,dM_{\epsilon}(t)\,c_{\epsilon}(\delta,\gamma)(t)\rho_{\gamma}(c_{\delta}(a,b)(t))
\end{eqnarray}

we obtain after renaiming the indices,

\begin{eqnarray}
&\{d\tau\,(F^*\Pi +\Pi F+Q-\Pi GR^{-1}G^*\Pi \pm A \pm \sum_{a,b \in I} [c_0(a,b)\rho_b(\rho_a(w^*F^*_{j(a)})\Pi)F_bw &\\
&\pm c_0(a,b)\rho_b(\rho_a(w^*F^*_{j(a)}))B_b \pm c_0(a,b)\rho_b(B_a)F_bw +&\nonumber
\\
&\sum_{\gamma,\epsilon}c_{\epsilon}(a,b)c_0(\epsilon,\gamma) \rho_{\gamma} (\rho_b (\rho_a(w^*F^*_{j(a)}))B_b)F_{\gamma}w])+&\nonumber
\\
& \sum_{J \in I}\,dM_J(\rho_J(w^* F^*_{j(J)})\Pi+\rho_J(\Pi)F_Jw \pm B_J +\sum_{a,b \in I} [c_J(a,b) \rho_b(B_a)F_bw +&\nonumber
\\
&c_J(a,b) \rho_b(\rho_a(w^*F^*_{j(a)}))B_b \pm c_J(a,b)\rho_b(\rho_a(w^*F^*_{j(a)})\Pi)F_bw \pm  &\nonumber
\\
&\sum_{\epsilon, \gamma \in I} \rho_J ( c_{\epsilon}(a,b))c_J(a,\gamma)\rho_{\gamma} (\rho_{b} (\rho_a(w^*F^*_{j(a)}))B_b)F_{\gamma}w])\}(t)=0&\nonumber
\end{eqnarray}

from which the result follows by the linear independence assumption.

\end{proof}

\section{Quantum stochastic Riccati equations}

\begin{definition} In the notation of the representation free calculus described in  section 5, let $M,M_1,M_2$ be integrator processes of scalar type such that $dM_1=dM$ and $dM_2=dM^*$. If the Meyer bracket $[[M_b,M_a]]$ exists for all  $a,b\in \{1,2\}$ and is a complex-valued nonatomic measure, then the pair $(M_1,M_2)$ is called a Levy pair.

 A Levy pair $(M_1,M_2)$ is said to be of Boson type if $[M_b(I),M_a(J)]=0$ for all  $a,b\in \{1,2\}$ and $I,J \subset [0,+\infty )$ with $I \cap J= \varnothing$, and it is said to be of Fermion type if $\{M_b(I),M_a(J)\}=0$ where $[x,y]=xy-yx$ and $\{x,y\}=xy+yx$. 

For a Boson (resp. Fermion) type Levy pair, $\rho_1=\rho_2=id$ 
(resp. $\rho_1=\rho_2=-id$) where $\rho_1,\rho_2$ are the commutation automorphisms corresponding to $M_1,M_2$ respectively. 

For a Levy pair $(M_1,M_2)$, $dM_b^*(t)\,dM_a(t)=\sigma_{ba}(t)\,dt$, for all  
$a,b\in \{1,2\}$, where the matrix valued function $t\rightarrow (\sigma_{ba}(t))_{a,b\in \{1,2\}}$ is positive definite in the sense that for all complex-valued  continuous functions $f$, $(f\,\,f)\cdot \sigma \cdot (f\,\,f)^t \geq 0$. 

Let $M_0$ be an integrator of scalar type such that $dM_0 (t)=dt$ where $dt$ is the usual time differential. If $M_0,M_1,M_2$ are linearly independent in the sense of Definition 3, then $(M_1,M_2)$ is called a linearly independent Levy pair.

\end{definition}

If $\{M_a\,/\,a \in I\}=\{M_1,M_2\}$ is a linearly independent Levy pair (e.g a Boson or a Fermion Levy pair) then Theorem 10 includes the solution to the control problem of stochastic evolutions driven by quantum Brownian motion,  in terms of the solution of a stochastic Riccati equation to be studied in more detail in this section.  The Riccati equation (7.62)-(7.63) reduces to

\begin{eqnarray}
&d\Pi (t)=dt((\mp F+\sigma _{11} \rho _2 (F_2 w) \rho _1 \rho _2 (F_1 w)+\sigma _{12} 
\rho _2 (F_2 w)F_2 w+&\\
&\sigma _{22} 
\rho _1 (F_1 w)\rho _2 \rho _1 (F_2 w)+\sigma _{21 }\rho _1 (F_1 w) F_1 w)^* \Pi + (\mp \Pi F +\sigma_{11} 
 \rho _1 \rho _2 (\Pi )\rho _1 (F_2 w)F_1 w\nonumber\\
&+\sigma _{12} \Pi \rho _2 (F_2 w)F_2 w +\sigma _{22} \rho _2 \rho _1 (\Pi )\rho _2(F_1 w)F_2 w+\sigma _{21} \Pi  \rho _1 (F_1 w)F_1 w)+& \nonumber\\
&\sigma _{11}\rho _1 \rho _2( w^* F_1^* )\rho _1(\Pi )F_1 w
+\sigma _{12} w^* F_1^* \rho _2 (\Pi )  F_2 w  + \sigma _{22} 
\rho _2 \rho _1 (w^* F_2^* )\rho _2(\Pi )F_2 w& \nonumber\\
&+\sigma_{21} w^* F_2^* \rho_1  ( \Pi) F_1 w \mp Q \pm \Pi GR^{-1}G^* \Pi )(t) \mp  dM_1(t)(\rho _1(w^* F_2^* )\Pi  & \nonumber\\
& +\rho _1 (\Pi) F_1 w)(t) \mp  dM_2(t)(\rho _2(w^* F_1^* )\Pi+  \rho _2 (\Pi ) F_2 w)(t)& \nonumber\\
&\Pi (T)=Q_T\,\,(\mbox{resp.}\,\Pi (0)=Q_0), 0\leq t \leq T& 
\end{eqnarray}   

where the plus (resp. minus) sign in $\pm$ (resp. in $\mp$) is associated with (7.54)-(7.55) and (7.60), and the minus (resp. plus) with (7.56)-(7.57)  and (7.61).

\begin{theorem}
The Riccati equation  (8.1)-(8.2)  admits a unique, adapted, strongly continuous, positive, locally bounded, solution $\Pi =(\Pi (t))_{0 \leq t \leq T}$ defined weakly on the invariant domain $D$.
\end{theorem}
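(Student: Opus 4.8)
The plan is the one familiar from classical and stochastic LQ theory, transcribed into the representation-free calculus of section 5: local existence and uniqueness by a contraction argument on short intervals, a global \emph{a priori} bound extracted from the control-theoretic meaning of $\Pi$, continuation to the whole horizon, and positivity read off from that same meaning. Note first that, by its derivation from (7.62)-(7.63) via Proposition 2 (the linear independence of $\{M_0,M_1,M_2\}$ already having been invoked), equation (8.1)-(8.2) is a genuine quantum stochastic differential equation in explicit semimartingale form,
\begin{eqnarray*}
&d\Pi(t)=dt\,\mathcal{R}(t,\Pi(t))\mp dM_1(t)\,\mathcal{S}_1(t,\Pi(t))\mp dM_2(t)\,\mathcal{S}_2(t,\Pi(t)),&
\end{eqnarray*}
where the drift $\mathcal{R}(t,\cdot)$ is an operator polynomial of degree two whose only quadratic term is $\pm\,\Pi GR^{-1}G^*\Pi$, the martingale coefficients $\mathcal{S}_1(t,\cdot),\mathcal{S}_2(t,\cdot)$ are affine, and every process entering $\mathcal{R},\mathcal{S}_1,\mathcal{S}_2$ (built from $F,G,R^{-1},Q,w,F_1,F_2$, the structure functions $\sigma_{ab}$ of the Levy pair and the commutation homomorphisms $\rho_1,\rho_2$) is locally bounded, strongly continuous, adapted and leaves $D$ invariant. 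On a short interval $[T-\delta,T]$ (and analogously on $[0,\delta]$ for the forward equation $\Pi(0)=Q_0$) the integral map $\Pi\mapsto Q_T-\int_\cdot^T(dt\,\mathcal{R}\mp dM_1\,\mathcal{S}_1\mp dM_2\,\mathcal{S}_2)$ is, in the seminorms (5.6) and by the integrator estimates (5.3)-(5.4), a strict contraction on the ball of self-adjoint, adapted, strongly continuous, locally bounded processes of norm $\le 2\|Q_T\|+1$: the quadratic term is Lipschitz there, the martingale integrals being controlled by (5.3)-(5.4). Hence, exactly as in the linear existence theory around (5.9)-(5.10), there is a unique such solution on that short interval.

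The crucial point is to bound $\|\Pi(t)\|$ uniformly up to the left endpoint of the maximal interval of existence, since the quadratic term could otherwise cause blow-up. For this I would invoke the control-theoretic meaning of $\Pi$ supplied by Theorem 10: wherever $\Pi$ is defined on $[t,T]$, the quantum It\^o computation from the proof of Theorem 10, specialized to $m=\eta=m_T=0$ and to the process (7.54)-(7.55) started at time $t$ from $X(t)=\lambda\,id$, exhibits the optimal value over $[t,T]$ of the associated cost (7.60) as $\lambda^2\langle\xi,\Pi(t)\xi\rangle+\lambda\,b_t(\xi)+s_t(\xi)$, the completion-of-squares identity guaranteeing optimality (in the forward case $[0,t]$ replaces $[t,T]$). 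Since this value is $\ge0$ — because $Q\ge0$, $R\ge0$, $Q_T\ge0$ — and is $\le$ the value at the admissible control $u\equiv0$, which is of the form $\lambda^2 a_t(\xi)+\lambda\,b'_t(\xi)+s'_t(\xi)$ with $a_t(\xi)$ bounded uniformly for $t\in[0,T]$, dividing by $\lambda^2$ and letting $\lambda\to+\infty$ yields $0\le\langle\xi,\Pi(t)\xi\rangle\le c$ for all $\xi\in D$, with $c$ independent of $t$. By self-adjointness and totality of $D$ this gives $\|\Pi(t)\|\le c$ on the existence interval, so $\Pi$ cannot blow up; the local solution therefore extends, by the usual continuation argument, to a unique self-adjoint, adapted, strongly continuous, locally bounded solution on all of $[0,T]$, and the same inequality gives $\Pi(t)\ge0$ throughout.

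Uniqueness on $[0,T]$ then follows at once: if $\Pi$ and $\Pi'$ both solve (8.1)-(8.2), the difference $\Delta=\Pi-\Pi'$ satisfies, weakly on $D$, a \emph{linear} quantum stochastic differential equation — the quadratic term cancelling through $\Pi GR^{-1}G^*\Pi-\Pi'GR^{-1}G^*\Pi'=\Pi GR^{-1}G^*\Delta+\Delta GR^{-1}G^*\Pi'$ — with vanishing terminal (resp. initial) datum, whence $\Delta\equiv0$ by the uniqueness part of the linear theory of section 5, or by a Gronwall estimate in the seminorms (5.6). Self-adjointness, if one prefers not to build it into the iteration, follows the same way: $\Pi^*$ also solves (8.1)-(8.2), since $dM_2=dM_1^*$ and the matrix $(\sigma_{ab})$ is Hermitian, so $\Pi=\Pi^*$ by uniqueness. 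The main obstacle, and the place where the implicit character of (7.62)-(7.63) — the occurrence of $d\Pi$ inside the equation — genuinely intervenes, is the verification that in the Levy-pair reduction the coefficients $\mathcal{R},\mathcal{S}_1,\mathcal{S}_2$ really are locally bounded, $D$-invariant and integrable against $M_1,M_2$ along the relevant closed-loop trajectories in the precise sense demanded by the existence theorem of section 5 — in particular that the It\^o-correction contribution $(\sum_a dM_a F_a w\pm id)^*\,d\Pi\,(\sum_a dM_a F_a w\pm id)$ reproduces exactly the $\sigma_{ab}$-weighted terms displayed in (8.1) — after which the contraction, continuation and uniqueness steps are routine.
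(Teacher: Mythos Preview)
Your strategy is the standard ``local existence $+$ a priori bound $+$ continuation'' scheme from ODE/SDE theory, and it is a legitimate alternative, but it is genuinely different from what the paper does. The paper proceeds by a \emph{monotone iteration}: it rewrites (8.1)--(8.2) via a variation-of-constants formula $\Pi(t)=K(t,0)Q_0K(t,0)^*+\int_0^t K(t,s)(Q+\Pi GR^{-1}G^*\Pi)(s)K(t,s)^*\,ds$ with $K$ a linear quantum stochastic propagator (itself depending on $\Pi$), builds a sequence $\Pi_n$ by iterating this with $\Pi_1=Q_0$, shows $0\le\Pi_{n+1}\le\Pi_n$ directly from $Q\ge0$, $Q_0\ge0$, $R^{-1}>0$, takes the strong limit, and reads off positivity and local boundedness from the monotonicity; uniqueness is obtained by the same comparison trick applied to $\Pi-\hat\Pi$, giving $\Pi\le\hat\Pi$ and $\hat\Pi\le\Pi$. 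The backward case is reduced to the forward one by the time reversal $s=T-t$. What the paper's route buys is that it is entirely self-contained within the linear existence theory of section~5 (only linear equations for $K_n$ and $\Phi$ are solved) and never invokes the control interpretation; what your route buys is conceptual clarity via the value-function identity, but at the cost of two ingredients not supplied by the paper: a nonlinear local existence theorem in the seminorms (5.6), and the completion-of-squares identity on subintervals $[t,T]$ with arbitrary $A_{t]}$-initial data, which is stronger than what the proof of Theorem~10 actually establishes (that proof shows $K=0$ for the optimal feedback but does not compute the optimal value as $\langle\xi,X(t)^*\Pi(t)X(t)\xi\rangle+\ldots$). Your uniqueness argument via the linearized difference equation is fine and in fact simpler than the paper's comparison argument.
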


\begin{proof}
We first consider the case corresponding to (7.54)-(7.55) and (7.60). Equation (8.1)-(8.2) can be written as

\begin{eqnarray}
&d\Pi (t)=[dt( F+\sigma _{11} \rho _2 (F_2 w) \rho _1 \rho _2 (F_1 w)+\sigma _{12} 
\rho _2 (F_2 w)F_2 w+&\\ 
 & \sigma _{22} \rho _1 (F_1 w)\rho _2 \rho _1 (F_2 w)
+\sigma _{21 }\rho _1 (F_1 w) F_1 w- GR^{-1}G^* \Pi )(t)+dM_1(t)( F_1 w)(t)&\nonumber\\
& + dM_2(t)( F_2 w )(t)]^*\Pi (t)+[dt ( \Pi F +\sigma_{11} \rho _1 \rho _2 (\Pi )\rho _1 (F_2 w)F_1 w+&\nonumber\\
&\sigma _{12} \Pi \rho _2 (F_2 w)F_2 w + \sigma _{22} \rho _2 \rho _1 (\Pi )\rho _2(F_1 w)F_2 w +\sigma _{21} \Pi  \rho _1 (F_1 w)F_1 w-&\nonumber\\
&\Pi GR^{-1}G^* \Pi )(t)+dM_1(t)(\rho _1(\Pi ) F_1 w)(t)
 + dM_2(t)(\rho _2(\Pi ) F_2 w )(t)]&\nonumber\\
&+dt[\sigma _{11}\rho _1 \rho _2( w^* F_1^* )\rho _1(\Pi )F_1 w 
+\sigma _{12} w^* F_1^* \rho _2 (\Pi )  F_2 w &\nonumber\\
 & + \sigma _{22} \rho _2 \rho _1 (w^* F_2^* )\rho _2(\Pi )F_2 w+\sigma_{21} w^* F_2^* \rho_1  ( \Pi) F_1 w ](t) +dt( Q + \Pi GR^{-1}G^* \Pi )(t)&\nonumber\\
&\Pi (0)=Q_0\,\,, 0\leq t \leq T &
\end{eqnarray}

Using (5.21), the identities  $dt\,dM_1=dt\,dM_2=dM_1\,dt=dM_2\,dt=0$, $dM_1\,dM_2=dM_2^*\,dM_2=\sigma _{22} dt$, $ dM_2\,dM_1=dM_1^*\,dM_1=\sigma _{11} dt$,  $dM_2\,dM_2=dM_1^*\,dM_2=\sigma _{12} dt$, $dM_1\,dM_1=dM_2^*\,dM_1=\sigma _{21} dt $, and the fact that if $\{ \lambda (t,s)\,/\,t \mbox{(resp.}s) \geq 0 \}$ is for each $s$ (resp. $t$) a process then 

\begin{eqnarray}
d\left(\int_0^t\,ds\,\lambda (t,s)\right)=dt\,\lambda (t,t)+\int_0^t\,ds\, \,\,d\lambda (t,s)   
\end{eqnarray}

we can prove by taking the time differential of the right hand side and showing that it satisfies (8.3)-(8.4), that weakly on $D$

\begin{eqnarray}
\Pi (t)=K(t,0)Q_0K(t,0)^*+\int_0^t\,ds\,K(t,s)(Q + \Pi GR^{-1}G^* \Pi )(s)K(t,s)^*
\end{eqnarray}

where

\begin{eqnarray}
&dK(t,s)=[dt( F+\sigma _{11} \rho _2 (F_2 w) \rho _1 \rho _2 (F_1 w)+\sigma _{12} 
\rho _2 (F_2 w)F_2 w+&\\
&\sigma _{22} \rho _1 (F_1 w)\rho _2 \rho _1 (F_2 w)+\sigma _{21 }\rho _1 (F_1 w) F_1 w- GR^{-1}G^* \Pi )(t)&\nonumber\\
&+dM_1(t)( F_1 w)(t)\ + dM_2(t)( F_2 w )(t)]^* K(t,s)&\nonumber\\
&K(s,s)=id,\,\,s\leq t \leq T &
\end{eqnarray}

Let the sequence $\{ \Pi _n \} _{n=1}^{+\infty}$ of locally bounded self-adjoint processes, be defined by the iteration scheme

\begin{eqnarray}
&\Pi _1 (t) = Q_0&
\end{eqnarray}

and for $n \geq 1$

\begin{eqnarray}
&\Pi _{n+1} (t)=K_n(t,0)\,Q_0\,K_n(t,0)^*+\int_0^t\,ds\,K_n(t,s)(Q + \Pi _n GR^{-1}G^* \Pi _n )(s)K_n(t,s)^*&
\end{eqnarray}

where $K_n(t,s)$ is the unique locally bounded solution of

\begin{eqnarray}
&dK_n(t,s)=[dt( F+\sigma _{11} \rho _2 (F_2 w) \rho _1 \rho _2 (F_1 w)+\sigma _{12} 
\rho _2 (F_2 w)F_2 w&\\
&+\sigma _{22} 
\rho _1 (F_1 w)\rho _2 \rho _1 (F_2 w)
+\sigma _{21 }\rho _1 (F_1 w) F_1 w- GR^{-1}G^* \Pi )(t)&\nonumber\\
&+dM_1(t)( F_1 w)(t)\ + dM_2(t)( F_2 w )(t)]^* K_n(t,s)&\nonumber\\
&K_n(s,s)=id,\,\,s\leq t \leq T & 
\end{eqnarray}

Since $Q_0\geq 0$, $Q(t)\geq 0$, and $R^{-1}(t) >0$, it follows from the above equation that

\begin{eqnarray}
&\Pi _n (t) \geq 0,\mbox{  for all }n=1,2,...,\mbox{ and }t\in [0,T]&
\end{eqnarray}

Moreover, for all $t\in [0,T]$ and $n=2,3,...$

\begin{eqnarray}
&0 \leq \Pi _{n+1} (t) \leq \Pi _n (t)&
\end{eqnarray}

To prove this, we notice that

\begin{eqnarray}
&d\Pi _{n+1} (t)=(dt( F+\sigma _{11} \rho _2 (F_2 w) \rho _1 \rho _2 (F_1 w)+\sigma _{12}\rho _2 (F_2 w)F_2 w+&\\
&\sigma _{22}\rho _1 (F_1 w)\rho _2 \rho _1 (F_2 w)
+\sigma _{21 }\rho _1 (F_1 w) F_1 w- GR^{-1}G^* \Pi _n )(t)&\nonumber\\
&+dM_1(t)( F_1 w)(t) + dM_2(t)( F_2 w )(t))^*\Pi _{n+1} (t)+&\nonumber\\
&(dt ( \Pi _{n+1} F +\sigma_{11} \rho _1 \rho _2 (\Pi _{n+1} )\rho _1 (F_2 w)F_1 w&\nonumber\\
&+\sigma _{12} \Pi _{n+1} \rho _2 (F_2 w)F_2 w
 + \sigma _{22} \rho _2 \rho _1 (\Pi _{n+1} )\rho _2(F_1 w)F_2 w &\nonumber\\
&+\sigma _{21} \Pi _{n+1}  \rho _1 (F_1 w)F_1 w-\Pi _{n+1} GR^{-1}G^* \Pi _n )(t)&\nonumber\\
&+dM_1(t)(\rho _1(\Pi _{n+1} ) F_1 w)(t) + dM_2(t)(\rho _2(\Pi _{n+1} ) F_2 w )(t))&\nonumber\\
& +dt(Q+\Pi _n  GR^{-1}G^* \Pi _n)(t)
+dt(\sigma _{11}\rho _1 \rho _2( w^* F_1^* )\rho _1(\Pi  _{n+1})F_1 w &\nonumber\\
&+\sigma _{12} w^* F_1^* \rho _2 (\Pi _{n+1} )  F_2 w 
 + \sigma _{22}\rho _2 \rho _1 (w^* F_2^* )\rho _2(\Pi _{n+1} )F_2 w&\nonumber\\
&+\sigma_{21} w^* F_2^* \rho_1  ( \Pi _{n+1}  ) F_1 w )(t) &\nonumber\\
&\Pi _{n+1} (0)=Q_0\,\,, 0\leq t \leq T &
\end{eqnarray}

Letting $P_n(t)=\Pi _{n+1} (t) - \Pi _{n} (t)$ the above yields

\begin{eqnarray}
&dP _{n} (t)=(dt( F+\sigma _{11} \rho _2 (F_2 w) \rho _1 \rho _2 (F_1 w)+\sigma _{12}\rho _2 (F_2 w)F_2 w+&\\
&\sigma _{22}\rho _1 (F_1 w)\rho _2 \rho _1 (F_2 w)
+\sigma _{21 }\rho _1 (F_1 w) F_1 w- GR^{-1}G^* \Pi _n )(t)+&\nonumber\\
&dM_1(t)( F_1 w)(t) + dM_2(t)( F_2 w )(t))^* P _{n} (t)&\nonumber\\
&+(dt ( P _{n} F +\sigma_{11} \rho _1 \rho _2 (P _{n} )\rho _1 (F_2 w)F_1 w+&\nonumber\\
&\sigma _{12} P _{n} \rho _2 (F_2 w)F_2 w
 + \sigma _{22} \rho _2 \rho _1 (P _{n} )\rho _2(F_1 w)F_2 w+&\nonumber\\
&\sigma _{21} P _{n}  \rho _1 (F_1 w)F_1 w-P _{n} GR^{-1}G^* \Pi _n )(t)&\nonumber\\
&+dM_1(t)(\rho _1(P_{n} ) F_1 w)(t) + dM_2(t)(\rho _2(P _{n} ) F_2 w )(t))&\nonumber\\
&+dt(\sigma _{11}\rho _1 \rho _2( w^* F_1^* )\rho _1(P  _{n})F_1 w&\nonumber\\ 
&+\sigma _{12} w^* F_1^* \rho _2 (P _{n} )  F_2 w  + \sigma _{22}\rho _2 \rho _1 (w^* F_2^* )\rho _2(P _{n} )F_2 w&\nonumber\\
&+\sigma_{21} w^* F_2^* \rho_1  ( P _{n}  ) F_1 w )(t) -dt(P _{n-1}  GR^{-1}G^* P _{n-1})(t) &\nonumber\\
&P _{n} (0)=0\,\,, 0 \leq t \leq T &
\end{eqnarray}

Thus as we did for $\Pi (t)$, for all $n=2,3,...$ and  $t \in [0,T]$ 

\begin{eqnarray}
&P _{n} (t)=-\int_0^t\,ds\,K_n(t,s)( P _{n-1} GR^{-1}G^* P _{n-1} )(s)K_n(t,s)^*&
\end{eqnarray}

weakly on $D$. Since $R^{-1}(s) >0$ for all $s\in [0,T]$, this implies that 

\begin{eqnarray}
P _{n} (t) \leq 0
\end{eqnarray}

 thus proving (8.14). By (8.14) $\{ \Pi _n (t) \} _{n=1}^{+\infty}$ converges strongly on the invariant domain $D$ and the convergence is uniform on compact $t$-intervals. Let $\Pi=\{ \Pi  (t) \} _{0 \leq t \leq T}$ denote the limit process. Being a strong limit of a decreasing sequence of adapted, strongly continuous, positive processes, $\Pi$ has the same properties. By the uniformity of the convergence of the defining sequence  $\{ \Pi _n (t) \} _{n=1}^{+\infty}$ and the arbitrariness of $T$, $\Pi$ is locally bounded. As above we can show that for $n=1,2,...$ 

\begin{eqnarray}
&\Pi _{n+1} (t)=\Phi (t,0)\,Q_0\,\Phi (t,0)^*+\int_0^t ds\,\Phi(t,s)(Q -\Pi _n GR^{-1}G^* \Pi _n + P_n GR^{-1}G^* P_n)(s)\Phi (t,s)^*&
\end{eqnarray}

weakly on the invariant domain $D$, where $\Phi(t,s)$ is the locally bounded solution of

\begin{eqnarray}
&d\Phi (t,s)=[dt( F+\sigma _{11} \rho _2 (F_2 w) \rho _1 \rho _2 (F_1 w)+\sigma _{12} 
\rho _2 (F_2 w)F_2 w+&\\
&\sigma _{22} 
\rho _1 (F_1 w)\rho _2 \rho _1 (F_2 w)
+\sigma _{21 }\rho _1 (F_1 w) F_1 w )(t)+dM_1(t)( F_1 w)(t)\ + dM_2(t)( F_2 w )(t)]^* \Phi (t,s)&\nonumber\\
&\Phi (s,s)=id,\,\,s\leq t \leq T. &
\end{eqnarray}

Let $h,\xi \in D$. By the uniformity of the convergence of $\Pi _n (t) \rightarrow \Pi (t)$ and \break $P _n (t) \rightarrow 0$, and by the local boundedness of $\Pi$, upon letting $n\rightarrow +\infty$ we obtain by the bounded convergence theorem and

\begin{eqnarray}
&&\\
&\langle \Pi _{n+1} (t) h, \xi \rangle=\langle [\Phi (t,0)\,Q_0\,\Phi (t,0)^*+\int_0^t ds\,\Phi(t,s)(Q -\Pi _n GR^{-1}G^* \Pi _n+ P_n GR^{-1}G^* P_n)(s)\Phi (t,s)^* ],h, \xi \rangle&\nonumber
\end{eqnarray}

that

\begin{eqnarray}
&\langle \Pi  (t) h, \xi \rangle=\langle [\Phi (t,0)\,Q_0\,\Phi (t,0)^*+\int_0^t ds\,\Phi(t,s)(Q -\Pi  GR^{-1}G^* \Pi)(s)\Phi (t,s)^* ],h, \xi \rangle&
\end{eqnarray}

from which, by the arbitrariness of $h, \xi$ we have that

\begin{eqnarray}
 &\Pi  (t) =\Phi (t,0)\,Q_0\,\Phi (t,0)^*+\int_0^t ds\,\Phi(t,s)(Q -\Pi GR^{-1}G^* \Pi)(s)\Phi (t,s)^* &
\end{eqnarray}

for all $t \in [0,T] $, weakly on $D$. By taking the differential of both sides of the above we can show that $\Pi$ solves (8.1)-(8.2). To see that such $\Pi$ is unique, let $\hat \Pi$ be another solution of (8.1)-(8.2). Letting $P(t)=\Pi(t)-\hat \Pi (t)$ we obtain

\begin{eqnarray}
&dP (t)=[dt( F+\sigma _{11} \rho _2 (F_2 w) \rho _1 \rho _2 (F_1 w)+\sigma _{12} 
\rho _2 (F_2 w)F_2 w+&\\
&\sigma _{22}\rho _1 (F_1 w)\rho _2 \rho _1 (F_2 w)+\sigma _{21 }\rho _1 (F_1 w) 
F_1 w- GR^{-1}G^* \hat  \Pi )(t)&\nonumber\\
&+dM_1(t)( F_1 w)(t) + dM_2(t)( F_2 w )(t)]^* P (t)+&\nonumber\\
&[dt ( P F +\sigma_{11} \rho _1 
\rho _2 (P )\rho _1 (F_2 w)F_1 w+&\nonumber\\
&\sigma _{12} P \rho _2 (F_2 w)F_2 w + \sigma _{22} \rho _2 \rho _1 (P )\rho _2(F_1 w)
F_2 w +&\nonumber\\
&\sigma _{21} P  \rho _1 (F_1 w)F_1 w-P GR^{-1}G^* \hat \Pi  )(t)&\nonumber\\
&+dM_1(t)(\rho _1(P ) F_1 w)(t) + dM_2(t)(\rho _2 (P) F_2 w )(t)]+&\nonumber\\
&dt[\sigma _{11}\rho _1 \rho _2( w^* F_1^* )\rho _1(P )F_1 w+\sigma _{12} w^* F_1^* \rho _2 (P )  F_2 w  +&\nonumber\\
& \sigma _{22}\rho _2 \rho _1 (w^* F_2^* )\rho _2(P )F_2 w+\sigma_{21} w^* F_2^* \rho_1  ( P) F_1 w ](t)&\nonumber\\
& -dt(P GR^{-1}G^* P )(t)&\nonumber\\
&P (0)=0\,\,, 0\leq t \leq T &
\end{eqnarray}

Thus, as before, $P (t) \leq 0$ i.e $\Pi (t) \leq \hat \Pi (t)$. By interchanging $\Pi$ and $\hat \Pi$ in  $P(t)=\Pi(t)-\hat \Pi (t)$ and replacing $\hat \Pi$ by $\Pi$ in the above equation we obtain that $\hat \Pi (t) \leq \Pi (t)$. Thus $\Pi(t)=\hat \Pi (t)$ which proves uniqueness.

 We now turn to the case of the Riccati equation corresponding to  (7.56)-(7.57)  and (7.61) which will be treated by using the, just proved, case corresponding to  
  (7.56)-(7.57)  and (7.61)   and reversing the time flow. So let $s=T-t$ in the Boson version of (8.1)-(8.2) and let, for an operator process $K$, $\hat K (s)=K(T-s)$ to obtain

\begin{eqnarray}
&d\hat \Pi (s)=ds((\hat F+\tilde \sigma _{11} \rho _2 (\hat F_2 \hat w) \rho _1 \rho _2 (\hat F_1 \hat w)+\tilde \sigma _{12} 
\rho _2 (\hat F_2 \hat w)\hat F_2 \hat w+&\\
&\tilde \sigma _{22} 
\rho _1 (\hat F_1 \hat w) \rho _2 \rho _1 (\hat F_2 \hat w)
+\tilde \sigma _{21 }\rho _1 (\hat F_1 \hat w) \hat F_1 \hat w)^* \hat \Pi + &\nonumber\\
&( \hat \Pi \hat F +\tilde \sigma_{11} 
 \rho _1 \rho _2 (\hat \Pi )\rho _1 (\hat F_2 \hat w)\hat F_1 \hat w+&\nonumber\\
&\tilde \sigma _{12} \hat \Pi \rho _2 (\hat F_2 \hat w)\hat F_2 \hat w
+\tilde \sigma _{22} \rho _2 \rho _1 (\hat \Pi )\rho _2(\hat F_1 \hat w)\hat F_2 \hat w+&\nonumber\\
&\tilde \sigma _{21} \hat \Pi  \rho _1 (\hat F_1 \hat w) \hat F_1 w))+\tilde \sigma _{11} \rho _1 \rho _2( {\hat w}^*  {\hat F}_1^* )\rho _1(\hat \Pi ) \hat F_1 \hat w &\nonumber\\
&+\tilde  \sigma _{12} {\hat w}^* {\hat F}_1^* \rho _2 (\hat \Pi )\hat  F_2
 \hat  w  +\tilde \sigma _{22} 
\rho _2 \rho _1 ({\hat w}^* {\hat F}_2^* )\rho _2(\hat \Pi )\hat F_2 \hat w&\nonumber\\
&+\tilde \sigma_{21}{\hat w}^* {\hat F}_2^* \rho_1  (\hat \Pi)\hat F_1 \hat w +\hat Q -\hat \Pi \hat G {\hat R}^{-1} {\hat G}^* \hat \Pi )(s) &\nonumber\\
&+  dN_1(s)(\rho _1 ({\hat w}^* {\hat F}_2^* )\hat \Pi +  \rho _1 (\hat \Pi) \hat F_1 \hat w)(s)&\nonumber \\
&+  dN_2(s)(\rho _2({\hat w}^* {\hat F}_1^* )\hat \Pi+  \rho _2 (\hat \Pi )\hat F_2 \hat w)(s)&\nonumber \\
&\hat \Pi (0)=Q_T\,\,, 0\leq s \leq T& 
\end{eqnarray}

where the Levy-pair $(N_1,N_2)$ is defined by 

\begin{eqnarray}
&N_1(s)=-M_1(T-s),\,\,\,N_2(s)=-M_2(T-s)&
\end{eqnarray}

with corresponding It\^{o} table

 \begin{eqnarray}
dN_b^* (s)\,dN_a(s)=\tilde {\sigma}_{ba} (s)\,ds
\end{eqnarray}

where $a,b \in \{1,2\}$ and
 
\begin{eqnarray}
&\tilde {\sigma}_{ba} (s)=-\sigma_{ba} (T-s)&.
\end{eqnarray}

Since the above differential equation is of the same form as the equation studied in the first part of this proof, the proof is complete.
\end{proof}

\end{document}